\newtheorem{theorem}{Theorem}[section]
\newtheorem{definition}{Definition}
\newtheorem{corollary}[theorem]{Corollary}
\newtheorem{lemma}[theorem]{Lemma}
\newtheorem{proposition}[theorem]{Proposition}
\newtheorem{example}[theorem]{Example}
\newtheorem{remark}[theorem]{Remark}
\let\oldnl\nl% Store \nl in \oldnl
\newcommand{\nonl}{\renewcommand{\nl}{\let\nl\oldnl}}% Remove line number for one line
\tikzstyle{irnode}=[font=\Large]
\tikzstyle{bnfnode}=[font=\Large]
\colorlet{eventcolor}{green!50!black}
\colorlet{msgcolor}{red!70!black}
\colorlet{typecolor}{blue!80!black}
\colorlet{keycolor}{green!50!black}
\colorlet{valcolor}{red}
\colorlet{attrcolor}{blue}
\colorlet{opcolor}{eventcolor}
\colorlet{stmtcolor}{eventcolor}
\colorlet{procedurenamecolor}{green!50!black}
\colorlet{relcolor}{blue}
\definecolor{varcolor}{RGB}{166,42,42} % brown
\definecolor{proccolor}{rgb}{0.82, 0.1, 0.26} % myred
\definecolor{threadcolor}{rgb}{0.82, 0.1, 0.26} % myred
\newcommand\true{\mathtt{true}}
\newcommand\false{\mathtt{false}}
\newcommand{\tuple}[1]{\left\langle#1\right\rangle}
\newcommand\of[2]{{#1}\left(#2\right)}
\newcommand\tof[4]{{#1}\left(#2\right)\left(#3\right)\left(#4\right)}
\newcommand\ii{i}
\newcommand\nn{n}
\newcommand\rel{R}
\newcommand\rclosureof[1]{{#1}^?}
\newcommand\tclosureof[1]{{#1}^+}
\newcommand\mkrelof[1]{\,\textcolor{relcolor}{\left[#1\right]}\,}
\newcommand\invmkrelof[1]{{#1}^{-1}}
\newcommand\fundom{{\mathtt{dom}}}
\newcommand\fundomof[1]{\of\fundom{#1}}
\newcommand\fun{f}
\newcommand\compose\circ
\newcommand\undf\bot
\newcommand\funtype[3]{{#1}:{#2}\rightarrow{#3}}
\newcommand\updatefun[3]{{#1}\left[#2\rightarrow#3\right]}
\newcommand\xmodels[1]{\models^{#1}}
\newcommand\set[1]{\left\{{#1}\right\}}
\newcommand\setcomp[2]{\left\{{#1}\mid\,{#2}\right\}}
\newcommand\addtoset\oplus
\newcommand\assigned{:=}
\newcommand\ordering\preceq
\newcommand\emptyword\epsilon
\newcommand\wtranspose\otriangle
\newcommand\app\bullet
\newcommand\wfilter\odot
\newcommand\cone{\mbox{\ding{172}}}
\newcommand\ctwo{\mbox{\ding{173}}}
\newcommand\cthree{\mbox{\ding{174}}}
\newcommand\cfour{\mbox{\ding{175}}}
\newcommand\cfive{\mbox{\ding{176}}}
\newcommand\csix{\mbox{\ding{177}}}
\newcommand\irspace{\;\;\;\;\;}
\newcommand\valset{\mathtt{\mathcal{D}}}
\newcommand\val{\mathtt{v}}
\newcommand\zero{0}
\newcommand\varset{{\mathcal{V}}}
\newcommand\regset{\mathtt{Regs}}
\newcommand\xvar{x}
\newcommand\yvar{y}
\newcommand\avar{a}
\newcommand\creg{c}
\newcommand\areg{a}
\newcommand\breg{b}
\newcommand\thread\theta
\newcommand\pthread\phi
\newcommand\threadset{\Theta}
\newcommand\movesto[1]{\xrightarrow{#1}}
\newcommand\run\rho
\newcommand\runset{{\mathtt {Runs}}}
\newcommand\runsetof[1]{\of\runset{#1}}
\newcommand\conf\gamma
\newcommand\iconf[1]{\conf_{#1}}
\newcommand\confset{\Gamma}
\newcommand\initconf{\conf_{\it init}}
\newcommand\qstate{q} % conflicts with pacmpl
\newcommand\mystate{q} % conflicts with pacmpl
\newcommand\pqstate{\qstate'}
\newcommand\istate[1]{\mystate_{#1}}
\newcommand\pstate{\mystate'}
\newcommand\stateset{Q}
\newcommand\initstate{\mystate_{\mathtt{init}}}
\newcommand\transitionset\Delta
\newcommand\op{\textcolor{opcolor}{\mathtt{o}}}
\newcommand\regvalmapping{{\mathcal R}}
\newcommand\regvalmappingof[1]{\of\regvalmapping{#1}}
\newcommand\pregvalmapping{\regvalmapping'}
\newcommand\event{{\color{eventcolor}{e}}}
\newcommand\pevent{{\color{eventcolor}{\event'}}}
\newcommand\ppevent{{\color{eventcolor}{\event''}}}
\newcommand\revent{{\color{eventcolor}{\ensuremath{\mathsf{r}}}}}
\newcommand\wevent{{\color{eventcolor}{\ensuremath{\mathsf{w}}}}}
\newcommand\xinitevent[1]{{\color{eventcolor}{init^{#1}}}}
\newcommand\events{E}
\newcommand\pevents{\events'}
\newcommand\initeventset{{\mathtt{InitEvents}}}
\newcommand\ievent[1]{{\color{eventcolor}{\event_{#1}}}}
\newcommand\eventset{\mathtt{Events}}
\newcommand\silentevent{{\color{eventcolor}\tau}}
\newcommand\varattr{{\color{attrcolor}{\mathtt{var}}}}
\newcommand\varattrof[1]{{#1}\color{attrcolor}{{\cdot}\varattr}}
\newcommand\valattr{{\color{attrcolor}{\mathtt{val}}}}
\newcommand\valattrof[1]{{#1}{\color{attrcolor}{{\cdot}\valattr}}}
\newcommand\threadattr{{\color{attrcolor}{\mathtt{thread}}}}
\newcommand\threadattrof[1]{{#1}{\color{attrcolor}{{\cdot}\threadattr}}}
\newcommand\typeattr{{\color{attrcolor}{\mathtt{type}}}}
\newcommand\typeattrof[1]{{#1}{\color{attrcolor}{{\cdot}\typeattr}}}
\newcommand\egraph{G}
\newcommand\porel{{\color{relcolor}{\mathtt{{po}}}}}
\newcommand\xporel[1]{{\color{relcolor}{\porel_{#1}}}}
\newcommand\rfrel{{\color{relcolor}{\mathtt{{rf}}}}}
\newcommand\rfrelsof[1]{{\color{relcolor}{\mathtt{{RF}}}}}
\newcommand\corel{{\color{relcolor}{\mathtt{{co}}}}}
\newcommand\corelsof[1]{{\color{relcolor}{\mathtt{{CO}}}}}
\newcommand\xcorel[1]{{\color{relcolor}{\mathtt{{\corel_{#1}}}}}}
\newcommand\partcorel{{\color{relcolor}{\mathtt{{pco}}}}}
\newcommand\hbrel{{\color{relcolor}{\mathtt{{hb}}}}}
\newcommand\node{{\mathtt n}}
\newcommand\rasem{{\ensuremath{\mathtt{RA}}}\xspace}
\newcommand\wrasem{{\ensuremath{\mathtt{WRA}}}\xspace}
\newcommand\srasem{{\ensuremath{\mathtt{SRA}}}\xspace}
\newcommand\psisem{{\ensuremath{\mathtt{PSI}}}\xspace}
\newcommand\scsem{{\ensuremath{\mathtt{SC}}}\xspace}
\newcommand\bluecircled[1]{\mathop{\tikz{\node[draw,circle,fill=blue!15, inner sep=0.5pt,font={\footnotesize}]{#1};}}}
\newcommand{\bigO}[1]{\ensuremath{\mathcal{O}(#1)}}
\newcommand\transcup\Cup
\newcommand\reldelete\ominus
\newcommand\largebluecircled[1]{\mathop{\tikz{\node[draw,circle,fill=blue!15, inner sep=1pt]{#1};}}}
\newcommand\sthread\psi
\newcommand\ethread\eta
\newcommand\lbl\lambda
\newcommand\lblset\Lambda
\newcommand\mmovesto[2][]{\ext@arrow 0359\Rightarrowfill@{#1}{#2}}
\newcommand\pth\pi
\newcommand\ipth[1]{\pth_{#1}}
\newcommand\iqstate[1]{\qstate_{#1}}
\newcommand\opseq\rho
\newcommand\emptyopseq\epsilon
\newcommand\pwevent{{\color{eventcolor}{\wevent'}}}
\newcommand\initegraph{\egraph_{\mathsf{init}}}
\newcommand\mkegraph{{\mathsf{mkEgraph}}}
\newcommand\mkegraphof[1]{\of\mkegraph{#1}}
\newcommand\frrel{{\color{relcolor}{\mathsf{{fr}}}}}
\newcommand\frrelsof[1]{{\color{relcolor}{\mathsf{{FR}}}}}
\newcommand\concurrentwith\parallel
\tikzstyle{irnamenode}=
\newcommand\threeinfer[4]{
\begin{tikzpicture}
%%%%%%%%%%%%%%%%%%%%%%%%
  \node[irnode,name=n1]{
    ${#1}$};
  \node[irnode,name=n2,anchor=north] at ($(n1.south)+(0pt,-3pt)$){
    ${#2}$};
  \node[irnode,name=n3,anchor=north] at ($(n2.south)+(0pt,-3pt)$){
    ${#3}$};
  \node[irnode,name=n4,anchor=north] at ($(n3.south)+(0pt,-6pt)$){
     ${#4}$};
%%%%%%%%%%%%%%%%%%%%%%%%
  
     \path let \p1=(n1.west),\p2=(n2.west),\p3=(n3.south west),\p4=(n4.north west) in 
     coordinate (Q1) at ({min(\x1,\x2,\x3,\x4)}, {(\y3+\y4)/2}){};

     \path let \p1=(n1.east),\p2=(n2.east),\p3=(n3.south east),\p4=(n4.north east) in 
     coordinate (Q2) at ({max(\x1,\x2,\x3,\x4)}, {(\y3+\y4)/2});
     \draw[-,line width=1pt] (Q1) -- (Q2);
%%%%%%%%%%%%%%%%%%%%%%%%
   \end{tikzpicture}
}
\newcommand\onelabelinfer[3]{
  \begin{tikzpicture}[framed, background rectangle/.style={draw=black,fill=black!5,rounded corners=1ex}]
    \node[irnode,name=n1]{${#1}$};
    \node[irnode,name=n2,anchor=north] at ($(n1.south)+(0pt,-6pt)$){${#2}$};
%%%%%%%%%%%%%%%%%%%%%%%%
    \path let \p1=(n1.south west),\p2=(n2.north west) in 
    coordinate (Q1) at ({min(\x1,\x2)}, {(\y1+\y2)/2}){};
    \path let \p1=(n1.south east),\p2=(n2.north east) in 
    coordinate (Q2) at ({max(\x1,\x2)}, {(\y1+\y2)/2});
    \draw[-,line width=1pt] (Q1) -- (Q2);
%%%%%%%%%%%%%%%%%%%%%%%%
     \path let \p1=(n1.east),\p2=(n2.east),
     \p3=(n1.north),\p4=(n2.south) in 
     coordinate (Q3) at ({max(\x1,\x2)}, {(\y3+\y4)/2});
     \node[irnamenode] at ($(Q3.east)+(0pt,0pt)$){#3};
  \end{tikzpicture}
}
\newcommand\twolabelinfer[4]{
  \begin{tikzpicture}[framed, background rectangle/.style={draw=black,fill=black!5,rounded corners=1ex}]
    \node[irnode,name=n1]{${#1}$};
    \node[irnode,name=n2,anchor=north] at ($(n1.south)+(0pt,-3pt)$){${#2}$};
    \node[irnode,name=n3,anchor=north] at ($(n2.south)+(0pt,-6pt)$){${#3}$};
%%%%%%%%%%%%%%%%%%%%%%%%
    \path let \p1=(n1.west),\p2=(n2.south west),\p3=(n3.north west) in 
    coordinate (Q1) at ({min(\x1,\x2,\x3)}, {(\y2+\y3)/2}){};
    \path let \p1=(n1.east),\p2=(n2.south east),\p3=(n3.north east) in 
    coordinate (Q2) at ({max(\x1,\x2,\x3)}, {(\y2+\y3)/2});
    \draw[-,line width=1pt] (Q1) -- (Q2);
%%%%%%%%%%%%%%%%%%%%%%%%
     \path let \p1=(n1.east),\p2=(n2.east),\p3=(n3.east),
     \p4=(n1.north),\p5=(n3.south) in 
     coordinate (Q3) at ({max(\x1,\x2,\x3)}, {(\y4+\y5)/2});
     \node[irnamenode] at ($(Q3.east)+(0pt,0pt)$){#4};
  \end{tikzpicture}
}
\newcommand\threelabelinfer[5]{
\begin{tikzpicture}[framed, background rectangle/.style={draw=black,fill=black!5,rounded corners=1ex}]
%%%%%%%%%%%%%%%%%%%%%%%%
  \node[irnode,name=n1]{
    ${#1}$};
  \node[irnode,name=n2,anchor=north] at ($(n1.south)+(0pt,-3pt)$){
    ${#2}$};
  \node[irnode,name=n3,anchor=north] at ($(n2.south)+(0pt,-3pt)$){
    ${#3}$};
  \node[irnode,name=n4,anchor=north] at ($(n3.south)+(0pt,-6pt)$){
     ${#4}$};
%%%%%%%%%%%%%%%%%%%%%%%%
     \path let \p1=(n1.west),\p2=(n2.west),\p3=(n3.south west),\p4=(n4.north west) in 
     coordinate (Q1) at ({min(\x1,\x2,\x3,\x4)}, {(\y3+\y4)/2}){};
     \path let \p1=(n1.east),\p2=(n2.east),\p3=(n3.south east),\p4=(n4.north east) in 
     coordinate (Q2) at ({max(\x1,\x2,\x3,\x4)}, {(\y3+\y4)/2});
     \draw[-,line width=1pt] (Q1) -- (Q2);
%%%%%%%%%%%%%%%%%%%%%%%%
     \path let \p1=(n1.east),\p2=(n2.east),\p3=(n3.east),\p4=(n4.east),
     \p5=(n1.north),\p6=(n4.south) in 
     coordinate (Q3) at ({max(\x1,\x2,\x3,\x4)}, {(\y5+\y6)/2});
     \node[irnamenode] at ($(Q3.east)+(0pt,0pt)$){#5};
\end{tikzpicture}
}
\newcommand\fivelabelinfer[7]{
\begin{tikzpicture}[framed, background rectangle/.style={draw=black,fill=black!5,rounded corners=1ex}]
%%%%%%%%%%%%%%%%%%%%%%%%
  \node[irnode,name=n1]{
    ${#1}$};
  \node[irnode,name=n2,anchor=north] at ($(n1.south)+(0pt,-3pt)$){
    ${#2}$};
  \node[irnode,name=n3,anchor=north] at ($(n2.south)+(0pt,-3pt)$){
    ${#3}$};
  \node[irnode,name=n4,anchor=north] at ($(n3.south)+(0pt,-3pt)$){
    ${#4}$};
  \node[irnode,name=n5,anchor=north] at ($(n4.south)+(0pt,-3pt)$){
     ${#5}$};
  \node[irnode,name=n6,anchor=north] at ($(n5.south)+(0pt,-6pt)$){
     ${#6}$};
%%%%%%%%%%%%%%%%%%%%%%%%
     \path let \p1=(n1.west),\p2=(n2.west),\p3=(n3.west),
     \p4=(n4.west),\p5=(n5.south west),\p6=(n6.north west) in 
     coordinate (Q1) at ({min(\x1,\x2,\x3,\x4,\x5,\x6)}, {(\y5+\y6)/2}){};
     \path let \p1=(n1.east),\p2=(n2.east),\p3=(n3.east),
     \p4=(n4.east),\p5=(n5.south east),\p6=(n6.north east) in 
     coordinate (Q2) at ({max(\x1,\x2,\x3,\x4,\x5,\x6)}, {(\y5+\y6)/2});
     \draw[-,line width=1pt] (Q1) -- (Q2);
%% %%%%%%%%%%%%%%%%%%%%%%%%
    \path let
     \p1=(n1.east),\p2=(n2.east),\p3=(n3.east),\p4=(n4.east),\p5=(n5.east),
     \p6=(n6.east),
     \p7=(n1.north),\p8=(n6.south) in 
     coordinate (Q3) at ({max(\x1,\x2,\x3,\x4,\x5)}, {(\y6+\y7)/2});
     \node[irnamenode] at ($(Q3.east)+(0pt,0pt)$){#7};
\end{tikzpicture}
}
\newcommand\rmachine{{\mathcal M}}
\newcommand\irmachine[1]{\rmachine_{#1}}
\newcommand\rmachinetuple{\tuple{{\stateset,\initstate,\transitionset}}}
\newcommand\inopof[4]{\textcolor{opcolor}{\left(\wtype,#1,#2,#3,#4\right)}}
\newcommand\outopof[3]{\textcolor{opcolor}{\left(\rtype,#1,#2,#3\right)}}
\newcommand\opof[4]{\textcolor{opcolor}{\left(#1,#2,#3,#4\right)}}
\newcommand\wopof[3]{\textcolor{opcolor}{\left(\wtype,#1,#2,#3\right)}}
\newcommand\ropof[3]{\textcolor{opcolor}{\left(\rtype,#1,#2,#3\right)}}
\newcommand\action{\textcolor{opcolor}{\alpha}}
\newcommand\ttype{\textcolor{eventcolor}{\mathsf{ty}}}
\newcommand\wtype{\textcolor{eventcolor}{\mathsf{W}}}
\newcommand\rtype{\textcolor{eventcolor}{\mathsf{R}}}
\newcommand\eventof[4]{\textcolor{eventcolor}{\left(#1,#2,#3,#4\right)}}
\newcommand\reventof[3]{\textcolor{eventcolor}{\left(\rtype,#1,#2,#3\right)}}
\newcommand\weventof[3]{\textcolor{eventcolor}{\left(\wtype,#1,#2,#3\right)}}
\newcommand\ceventof[2]{\textcolor{eventcolor}{#1 \assigned #2}}
\newcommand\difftype{{\mathtt{Diff}}}
\newcommand\raconsprob{\operatorname{\mathtt{RA-Cons}}}
\newcommand\sraconsprob{\operatorname{\mathtt{SRA-Cons}}}
\newcommand\wraconsprob{\operatorname{\mathtt{WRA-Cons}}}
\newcommand\addevent\oplus
\newcommand\deletenode\ominus
\newcommand\keepnode\odot
\newcommand\rmovesto[2]{\xrightarrow{#2}_{#1}}
\newcommand\cmovesto[1]{\xRightarrow{#1}}
\newcommand\failconf{{\tt \textcolor{relcolor}{Unsafe}}}
\newcommand\visible{{\textcolor{relcolor}{\mathsf{fragile}}}}
\newcommand\visibleof[5]{{#1}\mkrelof{\tof\visible{#2}{#3}{#4}}{#5}}
\newcommand\hidden{{\textcolor{relcolor}{\mathsf{exposed}}}}
\newcommand\hiddenof[5]{{#1}\mkrelof{\tof\hidden{#2}{#3}{#4}}{#5}}
\newcommand\status{{\textcolor{relcolor}{\mathsf{status}}}}
\newcommand\statusof[5]{{#1}\mkrelof{\tof\status{#2}{#3}{#4}}{#5}}
\newcommand\regs{{\textcolor{relcolor}{\mathsf{regs}}}}
\newcommand\pocolor{black}
\tikzstyle{poedge}=[black,->,line width=1pt]
\tikzstyle{rfedge}=[green!50!black,->,dashed,line width=1pt]
\tikzstyle{coedge}=[red!70!black,->,dotted,line width=1pt]
\newcommand\anysem{{\ensuremath{\mathtt{CM}}}\xspace}
\newcommand\eid\iota
\newcommand\morel{{\color{relcolor}{\mathtt{{mo}}}}}
\newcommand\xmorel[1]{{\color{relcolor}{\morel_{#1}}}}
\tikzstyle{irnamenode}=
\journal{VSI: ICTAC 2025}
\begin{document}

\begin{frontmatter}

\title{The complexity of verifying the release-acquire semantics over register machines\footnote{The authors were supported by Grant VR 2020-04430 of the Swedish Research Council. Elli Anastasiadi’s work is funded by the Villum Investigator Grant S4OS of the
 Danish Independent Research Fund. We would also like to 
 thank Nathan Lhote for his help with the \PSPACE~hardness
  proof of Section~\ref{sec:hardness}.}} %% Article title

\author[UU]{Parosh~Aziz~Abdulla}
%% \author[label1,label2]{name}
%% \affiliation[label2]{organization={},
%%             addressline={},
%%             city={},
%%             postcode={},
%%             state={},
%%             country={}}

\author[AAU]{Elli~Anastasiadi} %% Author name
\author[UU]{Mohamed~Faouzi~Atig}
\author[eiffel]{L\'{e}o~Exibard}
\author[UU]{Samuel~Grahn}

\affiliation[UU]{organization={Department of Information Technology, Uppsala University},
%            addressline={Aalborg},
             city={Uppsala},
%             postcode={},
%             state={},
             country={Sweden}}
\affiliation[AAU]{organization={Department of Computer Science, Aalborg University},
%            addressline={Aalborg},
             city={Aalborg},
%             postcode={},
%             state={},
             country={Denmark}}
\affiliation[eiffel]{organization={LIGM, CNRS, Univ Gustave Eiffel},
%            addressline={Aalborg},
             city={F77454 Marne-la-Vallée},
%             postcode={},
%             state={},
             country={France}}

%% Abstract
\begin{abstract}
The Release-Acquire (\rasem) semantics and its variants are some of the most fundamental models of concurrent semantics for architectures, programming languages, and distributed systems.
  Several steps have been taken in the direction of \textit{testing} such semantics, where one is interested in whether a single program execution is consistent with a memory model.
  The more general \textit{verification} problem, i.e., checking whether
  any allowed program run is consistent with a memory model,
   has still not been studied as much.
  The purpose of this work is to bridge this gap.
  We tackle the verification problem,
   where, given an  implementation described as a register machine,
   we check if any of its runs violates the $\rasem$ semantics or
   its Strong ($\srasem$) and Weak ($\wrasem$) variants.
  We show that verifying $\wrasem$ in this setup is in $\bigO{n^5}$,
   while verifying the $\rasem$ and $\srasem$ is \PSPACE~complete. This both answers some fundamental
    questions about the complexity of these problems, but also provides
    insights on the expressive power of register machines as a model.

\end{abstract}

%% Keywords
\begin{keyword}
Weak memory \sep Release-Acquire \sep Verification \sep Register machines

%% keywords here, in the form: keyword \sep keyword

%% PACS codes here, in the form: \PACS code \sep code

%% MSC codes here, in the form: \MSC code \sep code
%% or \MSC[2008] code \sep code (2000 is the default)

\end{keyword}

\end{frontmatter}

%% Add \usepackage{lineno} before \begin{document} and uncomment
%% following line to enable line numbers
%% \linenumbers

%% main text
%%

\section{Introduction}\label{sec:intro}

Over the years, numerous consistency models have been proposed to capture the subtle concurrency semantics of hardware architectures, programming languages, and distributed systems.
The Release-Acquire ($\rasem$) semantics and its variants are some of the most fundamental consistency models weaker than sequential
consistency, which are especially common and well-studied in programming languages and distributed data stores.
Such consistency models allow different processes (threads) to have different views of the order of certain memory updates and maintain a looser global consensus on all events.
This allows for much faster implementations while still providing the user an intuitive and deterministic understanding of the underlying concurrency model.
%xxxx

%Causal consistency and
$\rasem$ is a fragment of the C11 model \cite{DBLP:conf/popl/LahavGV16}, obtained by restricting the threads' write and read instructions to be release and acquire accesses, respectively.
The $\rasem$ model is appropriate as a rigorous foundational semantics on its own, independently of particular architectures and compilers, and it has verified compilation schemes to popular platforms such as the x86-TSO, POWER, and ARM architectures \cite{DBLP:conf/popl/BattyMOSS12,DBLP:conf/popl/BattyOSSW11,DBLP:conf/pldi/SarkarMOBSMAW12}.
Several variants of the $\rasem$ semantics have been proposed in the literature in recent years.
Notably, the Strong-Release-Acquires ($\srasem$) semantics \cite{DBLP:conf/popl/LahavGV16} strengthens $\rasem$ by forbidding behaviors that require the re-ordering of write instructions but coincides with $\rasem$ for programs that do not contain write-write races.
In \cite{DBLP:conf/popl/LahavGV16}, it is shown that \srasem captures precisely the guarantees provided by POWER compilers for programs compiled from $\rasem$.
Another variant is the Weak-Release-Acquire (\wrasem) semantics that has been considered as an alternative for \rasem in the semantics of shared-memory concurrent programs, permitting more efficient verification frameworks such as stateless model checking \cite{DBLP:journals/pacmpl/Kokologiannakis18}.

The relevance of \rasem and its siblings goes beyond compilers and hardware architectures.
At the distributed systems level, they are equivalent to standard and well-studied variants of causal consistency \cite{DBLP:journals/toplas/LahavB22}.
\srasem  corresponds to the causal convergence consistency semantics implemented in  data stores \cite{DBLP:conf/popl/BouajjaniEGH17,DBLP:journals/ftpl/Burckhardt14}
, while \wrasem corresponds to the classical definition of causal consistency \cite{DBLP:conf/popl/BouajjaniEGH17}.

One of the most fundamental computational problems for a given consistency model $\anysem$ is {\it consistency checking}.
Consistency checking comes in two flavors: {\it testing} and {\it verification} \cite{DBLP:journals/pacmpl/TuncA0K0P23,DBLP:journals/iandc/AlurMP00,DBLP:journals/siamcomp/GibbonsK97,DBLP:journals/pacmpl/AbdullaAJN18,DBLP:conf/asplos/LuoD21}.
In testing, we are given the consistency model $\anysem$, often described using a set of axioms, and a program run $\run$ consisting of a sequence of events.
The sequence is typically generated by an implementation, e.g., a hardware architecture, a compiler, or a distributed protocol,  that is supposed to guarantee $\anysem$.
The task is to check whether $\run$ satisfies $\anysem$.
The verification problem is more general: we are given an {\it implementation} and asked to check whether {\it all} executions of the implementation satisfy $\anysem$.

%Consistency checking has applications in both hardware and software verification.
%
%First, we can regard a consistency model as {\it contract} between the system designer and the software developers.
%
%This view applies in the design of cache-coherence protocols, compiler optimizations, and distributed data stores, in which  we would like to establish the validity of the contract
%between the designer and the user \cite{4798276,10.1145/3093333.3009838,1225053,1598123}.
%
%Second, consistency checking helps ensure that model checkers search only valid system behaviors.
%
%This observation has already led to significant improvements in stateless model checking algorithms \cite{DBLP:journals/pacmpl/AbdullaAJN18,DBLP:journals/pacmpl/Kokologiannakis18, DBLP:conf/asplos/LuoD21}.

The relevance and intricacy of the \rasem-like semantics have led to several recent works checking their consistency.
All these works consider the {\it testing} problem.
The first results showed that testing consistency under the \rasem semantics is of polynomial complexity \cite{DBLP:journals/pacmpl/AbdullaAJN18,DBLP:journals/pacmpl/KokologiannakisLV23}.
Recently, it was shown that testing consistency checking for \srasem and \wrasem also have polynomial complexity \cite{DBLP:journals/pacmpl/TuncA0K0P23}.
Despite the above results on testing consistency, little is known about the complexity of \textit{verifying} consistency under \rasem semantics.
As far as we know, the problem is still poorly understood.
The goal of this work is to bridge this gap.

\textbf{Related Work}\label{sec:related_work}.
 In their seminal work \cite{DBLP:journals/siamcomp/GibbonsK97},
  Korach and Gibbons showed consistency testing under the SC semantics
   is \NP-hard.
Alur et al. showed that the
 verification problem under the SC
 semantics \cite{DBLP:journals/iandc/AlurMP00} is undecidable,
  albeit for a data-dependent implementation model.
   Hence, one that registers machines cannot capture.

Several examples of protocols (\cite{DBLP:journals/dc/AhamadNBKH95,DBLP:conf/ppopp/PerrinMJ16}) are designed to enforce different consistency models.
Such works guarantee the designed mechanisms' correctness and provide a good baseline for implementing practices.
However, they do not produce uniform frameworks %or algorithms
 to answer the general consistency-checking question.

Bouajjani et al. \cite{DBLP:conf/popl/BouajjaniEGH17} consider the verification problem for semantics, which is equivalent to \wrasem, but with a model that allows unbounded numbers of pending messages.
They show the problem is \EXPSPACE-complete.

Another primary direction is verifying \emph{single runs}, expressed as sequences of memory access events, to determine whether that run satisfies a consistency model.
%This version of verification is related to \textit{testing}.
%In this setting, it is essential to correctly detect that a specific sequence of accesses is violating the semantics of the memory model.
This problem is quite complex in general, as, for at least all the memory models studied in this work, there is an unbounded number of possible reorderings of the observed events from one local view to the other.
Several works consider the testing problem under the \rasem semantics \cite{DBLP:journals/pacmpl/AbdullaAJN18,DBLP:journals/pacmpl/TuncA0K0P23}.
%
%In these works, we also encounter the assumption of \emph{data independence} \cite{POPL_Wolper86_data_independence} (which is a very realistic assumption stating that the content of the memory accesses does not affect the execution \cite{DBLP:conf/popl/BouajjaniEGH17}).
% The problem becomes easier to tackle.
%Specifically, in this case, one can study only the case of differentiated runs, i.e. runs where all the data is distinct, and yet produce results for any run.
%This is because we can replace all data values with symbolic ones, without affecting the execution at all.
%
These works show that the testing problem has polynomial complexity for \rasem.
Bouajjani et al. \cite{DBLP:conf/popl/BouajjaniEGH17} show that the testing problem for the \wrasem semantics has polynomial complexity.
Both works focus on providing a specific implementation or verifying a single run from an unknown implementation.
%Our results show that the verification problem wrt.\ register machines also has polynomial complexity (albeit with a higher degree).
%c

{\bf Contribution:}
We consider the complexity of the consistency verification problem under the \rasem semantics.
To state our results, we use the classical {\it register machine} model to describe the underlying implementation that handles memory access.
The model consists of a finite-state machine extended with a finite set of registers that store data values from an unbounded domain.
The machine interacts with a finite set of external threads through write (where the register machine inputs a value to a register) and read (outputting a stored value) operations performed on a finite set of variables.
%
%The variables store values from the same data domain as the registers.
%
Furthermore, the machine can perform internal transitions to transfer (i.e., copy) data  between registers.
As is standard in the literature for the type of architecture we are modeling, we do not allow data-dependent transitions.
The model is conceptually simple, providing a concise framework to state our complexity results.
At the same time, it is sufficiently expressive to model relevant
features needed to model cache protocols or distributed systems,
for example rendezvous communication, broadcasting fences, vector clocks,
broadcast communication and store buffers \cite{DBLP:conf/popl/BouajjaniEGH17,DBLP:journals/fmsd/Delzanno03}.
Moreover, recent works use automata-like formalisms for learning models of implementations and detecting bugs \cite{Fiterau_Uppsala_NDSS,Fiterau_Takvist_Automata_learning_TACAS}. Such works enhance the relevance of register machines for verifying program behaviors, such as consistency with weak memory.
%c

%In our approach, a register machine is an intermediate representation of an actual hardware architecture or a protocol for handling memory access, which we can verify against several weak memory semantics.
%The register machine, therefore, captures precisely what kind of memory accesses are offered to a given program and what values the architecture would return for those accesses.
%The register machine representation allows us to capture different classes of protocols and use a single algorithm to verify them.
%
Given a register machine, we consider the verification problem: decide whether,
 for \emph{all} interacting program, the register machine never produces
  a {\it bad behavior}, i.e., store and return values in
  a way that violates any one of the $\rasem$-family
  of models (namely, $\rasem$, $\wrasem$, and
  $\srasem$).
To do this, one must explore all possible runs of a given register machine.
The state space of the register machine is infinite (since the data domain is infinite), and the set of paths is also infinite, so the problem's decidability is not obvious.
Here, we show that it is decidable for all considered models, and in the case of non-polynomial complexity
also provide lower bounds. Our main contributions are the proofs that:
\begin{itemize}
    \item The verification problem for $\wrasem$ is in $\bigO{n^5}$ time
     (Section \ref{sec:algorithms}).

    \item The verification problem for the
     $\rasem$ and $\srasem$ semantics is \PSPACE~complete (Section \ref{sec:hardness}).
     % (which suggests it is likely \PSPACE-hard)
     %
\end{itemize}

%.In case of \wrasem we show polynomial time complexity,
 %while for \rasem and \srasem we have polynomial space
 % complexity and show that the problem is both \NP,
  % and \coNP-hard (which suggests it is likely \PSPACE-hard).
%c

%We derive the complexity results by performing several steps.
%
%First, we exploit that register machines are data-independent \cite{DBLP:conf/popl/BouajjaniEGH17,DBLP:journals/sttt/AbdullaHHJR17}.
%
%Concretely, a run of the machine uniquely determines the {\it reads-from} relation: for a given read operation on a register $\areg$, we know precisely which write operation it is fetching its value from, namely the latest write operation on $\areg$.
%
%
The main body of our technical contribution lies in determining
a way to explore only finite (and finitely many) runs of
the register machine.
For our hardness results we provide a reduction from the
problem of emptiness of DFA intersection.
This work is an extension of the results reported
in \cite{RA_verification_2025}. The key
extra contributions lie in:
\begin{itemize}
  \item providing the full proofs for all our theorems and algorithms,
  %\item presenting the experimental tool which implements the
 %polynomial-time complexity resuts,
 \item closing the complexity gap for the verification problem
 $\rasem$ and $\srasem$ which was only known to be
  between \PSPACE~and both \NP and \coNP,
  % Léo : replace "between \PSPACE~and the intersection of \NP and \coNP,"
  % with "between \PSPACE~and both \NP and \coNP",
  % otherwise one may think the problem belongs to NP \cap coNP
  % which would be very suspicious
 \item extending our \PSPACE-hardness result to the
 memory models of \textit{parallel snapshot isolation} ($\psisem$) and \textit{sequential consistency} ($\scsem$), and
 \item providing more examples, figures and discussions,
 to improve on redability and presentation.
\end{itemize}

\section{Preliminaries}\label{sec:overview}

%We consider platforms on which we can run applications consisting of multiple concurrent threads.
%
%The threads communicate through the platform by writing and reading from shared variables.
%
%The threads are distributed, and hence, their operations are not atomic.
%
%For instance, there may be a delay before the effect of a write operation performed by a thread becomes visible to the other threads.
%
%Furthermore, different write operations performed by two threads may become visible to a third thread in the opposite order they were issued.
%
%The platform is supposed to provide a {\it consistency model}, i.e., a set of constraints that regulate in which order the thread operations may become visible.
%
%In this paper, we study the verification of such platforms and ask whether a given platform indeed guarantees its promised consistency model.
%
To formulate the verification problems we study, we need two formalisms that describe the platform and the consistency model.
We use register machines and  execution graphs respectively.
In what follows, we will use the following notation:
\begin{itemize}
    \item  Given a relation $\rel$, $\fundomof\rel$ denotes its domain; $\rclosureof\rel$ and $\tclosureof\rel$  denote its
reflexive and transitive closures; and $\invmkrelof\rel$
denotes its inverse.
    \item Given a function $\fun$, we write
    $\updatefun\fun{\xvar}{\yvar}$, to denote a new function $\fun'$, where $\fun'(\xvar)=\yvar$, and $
    \fun'(\xvar')=\fun(\xvar')$, if $\xvar' \neq \xvar$.
    \item Given an expression $S$, we denote as
$S(\sfrac{a}{b})$, the expression $S$, where all occurrences of $b$ have been replaced with $a$. Note that if $S$ had no occurrences of $b$ then $S(\sfrac{a}{b}) = S$.
    \item For a set $\mathcal{S}$ and an element $a$, we denote with $\mathcal{S}\oplus a$ the union of $\mathcal{S}\cup \{a\}$.
%Note that $\mathcal{S}' = \mathcal{S}$ if $a \in \mathcal{S}$.
\end{itemize}

\subsection{Register Machines}
\label{rm:subsection}
\begin{figure}
\begin{subfigure}[h]{0.4\linewidth}
    \scalebox{0.8}{
      \begin{tikzpicture}[initial text=,]
        \node[state, initial] (q0) at (0, 0) {$q_0$};
        \node[state] (q1) at (3, 0) {$q_1$};

        \node[fill=lightgray] (lb) at (-2, 0) {$\mathcal{M}_1$};

        \draw[-stealth] (q0) to[bend left] node[above,midway] {$\wopof{\theta}{x}{a}$} (q1);
        \draw[-stealth] (q1) to[bend left] node[below,midway] {$\wopof{\phi}{x}{b}$} (q0);

        \draw[-stealth] (q0) to[loop above] node[above,midway] {$\ropof{\theta}{x}{b}$} (q0);
        \draw[-stealth] (q0) to[loop below] node[below,midway] {$\ropof{\phi}{x}{a}$} (q0);

        \draw[-stealth] (q1) to[loop above] node[above,midway] {$\wopof{\theta}{x}{a}$} (q1);
      \end{tikzpicture}
      }
  \end{subfigure}
  \begin{subfigure}[h]{0.4\linewidth}
    \scalebox{0.8}{
    \begin{tikzpicture}
      \node (s1) at (0,2) {$\tuple{q_0, 0, 0}$};
      \node (s2) at (4,2) {$\tuple{q_1, 1, 0}$};
      \node (s3) at (8,2) {$\tuple{q_1, 2, 0}$};
      \node (s4) at (8,0) {$\tuple{q_0, 2, 3}$};
      \node (s5) at (4,0) {$\tuple{q_0, 2, 3}$};
      \node (s6) at (0,0) {$\tuple{q_0, 2, 3}$};

      \node[fill=lightgray] (lb) at (4, 1) {$\rho$};

      \draw[-stealth] (s1) to node[above, midway] {$\wopof{\theta}{x}{a}\textcolor{red}{\sslash1}$} (s2);
      \draw[-stealth] (s2) to node[above, midway] {$\wopof{\theta}{x}{a}\textcolor{red}{\sslash2}$} (s3);
      \draw[-stealth] (s3) to node[left, midway]  {$\wopof{\phi}{x}{b}\textcolor{red}{\sslash3}$} (s4);
      \draw[-stealth] (s4) to node[below, midway] {$\ropof{\phi}{x}{a}\textcolor{red}{\sslash2}$} (s5);
      \draw[-stealth] (s5) to node[below, midway] {$\ropof{\theta}{x}{b}\textcolor{red}{\sslash3}$} (s6);
    \end{tikzpicture}
    }
  \end{subfigure}
  \caption{A register machine $\irmachine1$ and a run $\run$ of $\irmachine1$}
  \label{fig:rmachine_and_run}
\end{figure}

A {\it register machine}, or shortly a {\it machine}, is a finite-state automaton extended with a finite set of registers that store data values from an unbounded domain.
The machine performs input (write) operations and output (read) operations on a finite set of variables.
Read and write operations induce external actions that
synchronize the machine with its environment, i.e., with an external program consisting of a finite set of threads that run on the machine.
Figure \ref{fig:rmachine_and_run} depicts a register machine $\irmachine1$ with two states $\istate0$, $\istate1$, and two registers $\areg$ and $\breg$.
The machine $\irmachine1$ manages two threads $\thread$ and $\pthread$ accessing a (single) shared variable $\xvar$.
It starts executing from the initial state $\istate0$ with the initial register values $0$.
Each transition of the machine is labeled by an {\it operation}.
For instance, the transition label from $\istate0$ to $\istate1$ is the write operation $\wopof\thread\xvar\areg$.
Here, the machine $\irmachine1$ accepts a request from the thread $\thread$ to write a new value on the variable $\xvar$, upon which the machine stores the written value in the register $\areg$.
The machine allows the program running on it to choose the written value.
In $\istate1$, the machine loops performing a sequence of write operations as the one above.
The label of the transition from $\istate1$ to $\istate0$ is the write operation $\wopof\pthread\xvar\breg$, in which $\pthread$ performs a write operation, and $\irmachine1$ stores the written value in $\breg$.
In $\istate0$, the machine accepts read requests from the threads.
The operation $\ropof\thread\xvar\breg$ means that $\irmachine1$ accepts a request from the thread $\thread$ to read the value of the variable $\xvar$, upon which the machine returns the value currently stored in the register $\breg$.
We can explain the operation $\ropof\pthread\xvar\areg$ similarly.
In the general case, a register machine is meant to allow \textit{any} kind of request (i.e., a read or write from any thread to any variable) from the environment (program) at any time, no matter what state it is in.
Such a register machine will be called \textit{reactive}.
For example we give in Figure~\ref{WRA:aut:fig} the reactive version
of the register machine $M_1$ from Figure~\ref{fig:rmachine_and_run}.

\begin{figure}[]
  %\begin{subfigure}[h]{0.6\linewidth}
    %\scalebox{0.7}{
    \centering{\begin{tikzpicture}[text centered]
      \node[state] (q0) at (0,6) {$q_0$};
      \node[state] (q1) at (-3,3) {$q_1$};
      \node[state] (q2) at (3,3) {$q_2$};

      % Edges from q0
      \draw[-stealth] (q0) to[loop above] node[midway] {\begin{tabular}{c}$\ropof{\theta}{x}{b}$ \\ $\ropof{\phi}{x}{a}$\end{tabular}} (q0);
      \draw[-stealth] (q0) to node[midway, fill=white] {$\wopof{\theta}{x}{a}$} (q1);
      \draw[-stealth] (q0) to node[midway, fill=white] {$\wopof{\phi}{x}{b}$} (q2);
      % Edges from q1
      \draw[-stealth] (q1) to[loop left] node[midway] {\begin{tabular}{r}$\wopof{\theta}{x}{a}$ \\ $\ropof{\theta}{x}{a}$\end{tabular}} (q1);
      \draw[-stealth] (q1) to[bend left=50] node[midway, fill=white]{$\wopof{\phi}{x}{b}$} (q0);
      \draw[-stealth] (q1) to[bend left=10] node[midway, fill=white] {$\ropof{\phi}{x}{b}$} (q2);

      % Edges from q2
      \draw[-stealth] (q2) to[loop right] node[midway] {\begin{tabular}{l}$\ropof{\phi}{x}{b}$\\ $\wopof{\phi}{x}{b}$ \\ $\ropof{\theta}{x}{a}$\end{tabular}} (q2);
      \draw[-stealth] (q2) to[bend left=10] node[midway, fill=white] {$\wopof{\theta}{x}{a}$} (q1);

      % % Curvy edges
      % \draw[-stealth] (q3) to[bend left=45] (q0);
      % \draw[-stealth] (q5) to[bend right=45] (q0);

      % \draw[-stealth] (q7) to[bend right=45] (q5);
      % \draw[-stealth] (q6) to[bend left=45] (q3);
    \end{tikzpicture}
    }
    %}
  %\end{subfigure}
  % \begin{subfigure}[h]{0.4\linewidth}
  %   \scalebox{0.7}{\begin{tabular}{ll}
  %     $\ell_{00}:~\wopof{\theta}{x}{a}$ &$\ell_{33}:~\ropof{\theta}{x}{a}$\\
  %     $\ell_{00}:~\wopof{\phi}{x}{b}$ &$\ell_{33}:~\ropof{\phi}{x}{a}$\\
  %     $\ell_{01}:~\ropof{\theta}{x}{b}$ &$\ell_{30}:~\wopof{\theta}{x}{a}$\\
  %     $\ell_{02}:~\ropof{\phi}{x}{a}$ &$\ell_{36}:~\wopof{\phi}{x}{b}$\\
  %     \\
  %     $\ell_{11}:~\ropof{\theta}{x}{b}$ &$\ell_{55}:~\ropof{\theta}{x}{b}$\\
  %     $\ell_{11}:~\wopof{\phi}{x}{b}$ &$\ell_{55}:~\ropof{\phi}{x}{b}$\\
  %     $\ell_{13}:~\wopof{\theta}{x}{a}$ &$\ell_{50}:~\wopof{\theta}{x}{a}$\\
  %     $\ell_{14}:~\ropof{\phi}{x}{a}$ &$\ell_{57}:~\wopof{\phi}{x}{b}$\\
  %     \\
  %     $\ell_{22}:~\ropof{\phi}{x}{a}$ &$\ell_{66}:~\ropof{\theta}{x}{a}$\\
  %     $\ell_{22}:~\wopof{\theta}{x}{a}$ &$\ell_{66}:~\ropof{\phi}{x}{a}$\\
  %     $\ell_{23}:~\wopof{\phi}{x}{b}$ &$\ell_{63}:~\wopof{\theta}{x}{a}$\\
  %     $\ell_{24}:~\ropof{\theta}{x}{b}$ &$\ell_{65}:~\wopof{\phi}{x}{b}$\\
  %     \\
  %     $\ell_{44}:~\ropof{\theta}{x}{b}$ &$\ell_{77}:~\ropof{\theta}{x}{b}$\\
  %     $\ell_{44}:~\ropof{\phi}{x}{a}$ &$\ell_{77}:~\ropof{\phi}{x}{b}$\\
  %     $\ell_{46}:~\wopof{\theta}{x}{a}$ &$\ell_{75}:~\wopof{\theta}{x}{a}$\\
  %     $\ell_{47}:~\wopof{\phi}{x}{b}$ &$\ell_{73}:~\wopof{\phi}{x}{b}$\\
  %   \end{tabular}
  %\end{subfigure}
\caption{A reactive version of the $\irmachine1$ of Figure
\ref{fig:rmachine_and_run}.
The labels $\ell_{i,j}$ describe transitions from
state $\qstate_i$ to $\qstate_j$. As we will see later,
both this version and the original $M_1$, satisfy $\wrasem$.}
\label{WRA:aut:fig}
\end{figure}
Formaly we have:
\begin{definition}
\label{rmachines:mode:subsection}
Assume a set $\threadset$ of threads, a set $\varset$ of variables, and a set $\regset$ of registers.
We assume that the variables and the registers range over a (potentially infinite) set $\valset$ of data values with the particular value $\zero\in\valset$.
A {\it \textbf{register machine}}
%(or simply a {\it machine})
 $\rmachine$ is a tuple $\rmachinetuple$ where $\stateset$
  is the finite set of states, $\initstate\in\stateset$ is
  the initial state, and $\transitionset$ is the finite set of transitions.
A transition is a triple of the form $\tuple{\mystate,\op,\pqstate}$ where $\mystate,\pqstate\in\stateset$ are states, and $\op$ is an operation.
The operation $\op$ can be in one of the following three forms:
\begin{itemize}
\item
$\wopof\thread\xvar\areg$ receives the value of the variable $\xvar$ from $\thread$ and writes (stores) the value in register $\areg$.
The environment selects the written value (the program running on $\rmachine$).
\item
$\ropof\thread\xvar\areg$ reads of value of the variable $\xvar$ from the register $\areg$ and delivers the stored value to $\thread$.
\item
$\ceventof{\areg}{\areg'}$ copies the value stored in the register $\areg'$ to the register $\areg$.
\end{itemize}
\end{definition}

For any register machine, a depth-first search allows to detect any
unreachable state (from $\stateset$).
In the following, we thus assume that all states in $\stateset$ are reachable by at least some path.

 \begin{remark}
  Later on we will use the notation
 $\iqstate1\mkrelof{\text{Predicate}}\iqstate2$ , to denote that there
exists a transition in $\transitionset$ from $\iqstate1$ to $\iqstate2$,
which satisfies the predicate. For example,
  $\iqstate1\mkrelof{\thread \wedge \xvar}\iqstate2$
  is satisfied if there exists a transition
from $\iqstate1$ to $\iqstate2$, where the transition label uses
thread $\thread$ and variable $\xvar$.
 \end{remark}

\begin{remark}
  Even thought there are no explicit $\epsilon$ transitions in these machines,
  it is easy to simulate such transitions,
  by allowing for example a transition labeled by $\ceventof{\areg}{\areg}$.
  Later on we will be constructing machines which contain such $\epsilon$ steps.
\end{remark}

% we will assume that all states in a register machine are reachable. This can be easily verified for any register machine, and if there are unreachable states, we can remove them.

\subsection{Operational Semantics}
\label{op:sem:subsection}
%
%We will define the operational semantics of register machines and the set of runs generated by a given machine.
%

We define the operational semantics of a register machine by defining
 the transition system it induces, i.e., by defining the set of configurations   of the machine together with a transition relation on them.\footnote{We use the term {\it transition} to refer both to the set of transitions
   in the syntax of the machine (Def. \ref{rmachines:mode:subsection})
    and to the transition relation on configurations. The meaning will always be clear from the context.}
A {\it configuration} $\conf$ is of the form $\tuple{\mystate,\regvalmapping}$ where $\mystate\in\stateset$ defines the state of the machine, and $\funtype\regvalmapping\regset\valset$ defines the value $\regvalmappingof\areg$ of each register $\areg\in\regset$.
The \textit{initial configuration} $\initconf$ is the pair $\tuple{\initstate,\lambda\,\regset.\,\zero}$, i.e., the machine $\rmachine$ starts running from a configuration where it is in its initial state and all its registers contain the value $\zero$.

For example, a configuration of the machine $\irmachine1$ from Fig.~\ref{fig:rmachine_and_run} is a triple $\tuple{\istate,\ii_a,\ii_b}$ describing the local state, and the contents of the registers $\areg$ and $\breg$.
In this example we see that a run $\run$ consists of a sequence of transitions.
The run starts from the initial configuration where $\irmachine1$ is in its initial state $\istate0$, and the registers contain their initial values $0$.
When executing a transition, we use the operation of the transition to generate an {\it action} describing an observable interaction between $\irmachine1$ and its environment.
\begin{figure}
\centering
\begin{gather*}
\scalebox{0.6}{
\threeinfer
    { \text{Write}}{
      \tuple{\mystate,\wopof\thread\xvar\areg,\pstate}
      \in\transitionset
      }
    {
      \val\in\valset\;\;\;\pregvalmapping=\updatefun\regvalmapping\avar\val
    }
    {
      \tuple{\mystate,\regvalmapping}
      \rmovesto\rmachine{\weventof\thread\xvar\areg\textcolor{red}{\sslash\val}}{}
      \tuple{\pstate,\pregvalmapping}
    }
}
\irspace
%%%%%%%%%%%%%%%%%%%%%%%%%%%%%%%%%%%%%%%%%%%%%%%%
\scalebox{0.6}{\threeinfer{\text{Read}}
    {
       \tuple{\mystate,\reventof\thread\xvar\areg,\pstate}
      \in\transitionset
      }
    {
      \regvalmappingof\avar=\val
    }
    {
      \tuple{\mystate,\regvalmapping}
      \rmovesto\rmachine{\reventof\thread\xvar\areg\textcolor{red}{\sslash\val}}{}
      \tuple{\pstate,\regvalmapping}
    }
}\irspace
\scalebox{0.6}{\threeinfer{\text{Copy}}
    {
      \tuple{\mystate,\ceventof{\areg}{\areg'},\pstate}
      \in\transitionset
      }
    {
      \pregvalmapping=
      \updatefun\regvalmapping
          {\areg}{\regvalmappingof{\areg'}}
    }
    {
      \tuple{\mystate,\regvalmapping}
      \rmovesto\rmachine{\silentevent}{}
      \tuple{\pstate,\pregvalmapping}
    }
}
\end{gather*}
  \caption{The semantics of a register machine's three operations. Write and copy operations update the state of the memory $\regvalmapping$, while read operations only update the state of the register machine. }
  \label{fig:semantics_exec_graphs}
\end{figure}
We use $\confset$ to denote the set of configurations.
A transition is of the form $\iconf1\movesto\action\iconf2$ where $\iconf1,\iconf2\in\confset$ are configurations, and $\action$ is an operation augmented with a concrete value to be read or written.
We define the transition relation between configurations according to the inference rules of Fig.
 \ref{fig:semantics_exec_graphs}.
In the write rule, the machine executes a transition from $\mystate$ to $\pstate$
while processing a write operation.
The configuration changes state accordingly and updates the value of the relevant register as implied by the operation.
In the read rule, the machine processes a read operation that returns the relevant register's value.
%
%The transition is labeled by a real event whose parameters are decided by the output operation.
%
The machine performs a register assignment operation in the copy rule.
The operation is not visible to the external threads; hence, it is labeled by the silent event $\silentevent$.
%
%As we shall see later, the actual value $\val$ that $\thread$ writes/reads  to/from $\xvar$ is not relevant for our analysis; nevertheless, we include it in the transition label for clarity.

A run $\run$ of the program is
a sequence
 $\iconf0\movesto{\action_1}\iconf1\movesto{\action_2}\cdots\movesto{\action_n}\iconf\nn$
  of transitions,
where each $\action_i$ is one of the operations described
 in Figure \ref{fig:semantics_exec_graphs}.
%$\iconf0\movesto{\iop1\sslash\ival1}\iconf1\movesto{\iop2\sslash\ival2}\cdots\movesto{\iop\nn\sslash\ival\nn}\iconf\nn$ of transitions.
%
%Sometimes, when the states and the values are not relevant in the context, we simply represent $\run$ as the sequence of operations $\iop1\iop2\cdots\iop\nn$.
%
%We say that $\run$ is initialized if $\iconf0=\initconf$, i.e., the run starts from the initial configuration.
%
%We use $\runsetof\rmachine$ to denote the set of all initialized
%runs of $\rmachine$.
 We say that $\run$ is \textit{\textbf{differentiated}} if, %, for each variable $\xvar$,
 % all the values $\val$
  %that occur in write actions
 % $\weventof\thread\xvar\areg\textcolor{red}{\sslash\val}$ on $\xvar$ in $\run$,
 %  are disctinct.
 %
  %for all $\ii\neq\jj$ and events $\ievent\ii$ and $\ievent\jj$ with
  %$\typeattrof{\ievent\ii}=\typeattrof{\ievent\jj}=\wtype$ whenever
   %$\varattrof{\ievent\ii}=\varattrof{\ievent\jj}$ then
   % $\valattrof{\ievent\ii}\neq\valattrof{\ievent\jj}$.
%
%In other words,
 for any given variable $\xvar\in\varset$, the write events in $\run$ all use different values.

\subsection{Execution Graphs}
\label{egraph:overview:subsection}
We will be using execution graphs to represent a run, as well as to describe our
models in the classic axiomatic style \cite{shasha_snir_traces}.
The nodes of an {\it execution graph} ({\it egraph} for short)  are {\it events}.
Figure \ref{fig:ex_exec_graph} shows an example of an egraph (specifically
for the egraph that corresponds
to the run we saw in Figure \ref{fig:rmachine_and_run}).

\begin{figure}
  \centering
        \begin{tikzpicture}
          \node (w1) at (0,3.5) {$\eventof\wtype{\theta}{x}{1}$};
          \node (w2) at (0,1.8) {$\eventof\wtype{\theta}{x}{2}$};
          \node (w3) at (4,1.8) {$\eventof\wtype{\phi}{x}{3}$};
          \node (r1) at (4,0) {$\eventof\rtype{\phi}{x}{2}$};
          \node (r2) at (0,0) {$\eventof\rtype{\theta}{x}{3}$};

          \draw[-stealth, \pocolor] (w1) -- (w2) node[midway, fill=white] {$\porel$};
          \draw[-stealth, \pocolor] (w2) -- (r2) node[midway, fill=white] {$\porel$};
          \draw[-stealth, \pocolor] (w3) -- (r1) node[midway, fill=white] {$\porel$};
          \draw[-stealth, dashed]  (w2) -- (r1) node[midway, fill=white] {$\rfrel$};
          \draw[-stealth, dashed]  (w3) -- (r2) node[midway, fill=white] {$\rfrel$};
          \draw[-stealth ,dotted] (w2) to [bend left] (w3) node[above,yshift=0.2cm,fill=white] {$\corel_x$};
          \draw[-stealth ,dotted] (w3) to [bend right] (w2) node[above,xshift=0.1cm,yshift=0.25cm,fill=white] {$\corel_x$};

      \end{tikzpicture}
\caption{The egraph $\egraph$ of the run $\run$ produced by the register
machine $M_1$ of Figure \ref{fig:rmachine_and_run}. }
        \label{fig:ex_exec_graph}
\end{figure}

An event corresponds to an action performed by a register machine when interacting
 with its environment.
%
%and it differs from the corresponding actions in that it
% does not carry the value read from or written to the register.
%
The egraph edges specify different relations on the events.
In this paper, to define our consistency models,
we will work with three binary relations (\cite{DBLP:journals/toplas/LahavB22}): (a) the
 {\it program-order relation} ($\porel$), depicted by solid edges,
  totally orders the events in each thread;
  (b) the {\it reads-from} relation ($\rfrel$), depicted by dashed edges,
  associates every read event with the write event it reads from;
  and (c) the {\it coherence-order} relation ($\corel$), depicted by dotted edges,
  partially orders the writes on each variable.
  Different consistency models are defined by forbidding different types of cycles in the egraph (as described in Section \ref{consistency:models:subsection} below).
We associate the runs of a register machine with egraphs.

\subsubsection{Definitions}
%The nodes of an egraph are events.
%
 An {\it event} $\event$ is of the form $\eventof\ttype\thread\xvar\val$ where $\ttype\in\set{\wtype,\rtype}$ is the type of the event (write or read), $\thread\in\threadset$ is the thread performing the event, $\xvar\in\varset$ is the variable on which $\thread$ conducts the event, and $\val$ is the value that is either written or read from memory.
We define $\typeattrof\event:=\ttype$, $\threadattrof\event:=\thread$, $\valattrof\event:=\val$,
 and $\varattrof\event:=\xvar$.
We will use a set $\initeventset=\setcomp{\xinitevent\xvar}{\xvar\in\varset}$ of {\it initial} write events, where $\xinitevent\xvar$ represents a dummy event writing the initial value $\zero$ to $\xvar$.
We assume that the initial events do not belong to any thread.
We use $\eventset$ to denote the set of all events.
For any set of events $\events\subseteq\eventset$, we define the relation
$\mkrelof\events:=\setcomp{\tuple{\event,\event}}{\event\in\events}$, i.e., it is the restriction of the identity relation to the set of events in $\events$.
For $\ttype\in\set{\wtype,\rtype}$, we define the relation $\mkrelof\ttype:=\setcomp{\tuple{\event,\event}}{\typeattrof\event=\ttype}$, i.e. it is the restriction of the identity relation to the set of events of type $\ttype$.
Similarly, for a thread $\thread\in\threadset$, we define the relation $\mkrelof\thread:=\setcomp{\tuple{\event,\event}}{\threadattrof\event=\thread}$.
Sometimes, we view these relations as sets and write, e.g.,  $\event\in\mkrelof\rtype$ to denote that $\event$ is of type $\rtype$.
We also consider Boolean combinations of these relations, so we write $\mkrelof{\events\land\rtype}$ to denote the set of events in $\events$ of type $\rtype$.
Fix a set $\events: \initeventset\subseteq\events\subseteq\eventset$ of events.
\begin{itemize}
    \item A {\it program-order} on $\events$ is a relation $\porel$ defined as a union $\cup_{\thread\in\threadset}\,\xporel\thread$ such
that  $\xporel\thread$ is a total order on the set of events in $\mkrelof{\events\land\thread}$.
    In other words, $\porel$ totally orders all the events in $\events$ belonging to each thread.
    \item  A {\it reads-from} relation $\rfrel\subseteq\mkrelof{\events\land\wtype}\times\mkrelof{\events\land\rtype}$ assigns to each read event $\revent$ a single write event $\wevent$ in $\events$ with $\varattrof\revent=\varattrof\wevent$ and  $\valattrof\revent=\valattrof\wevent$.
    We will write, $\wevent\mkrelof\rfrel\revent$ to mean that $\revent$ takes its value from $\wevent$.
    \item
    A {\it partial-coherence-order} on $\events$ is a relation $\partcorel$ defined
     as a $\cup_{\xvar\in\varset}\,\partcorel_\xvar$
     such that $\partcorel_\xvar$
      is a partial order on the set of write events on $\xvar$.
    We require that $\xinitevent\xvar$ is the smallest element in the sub-relation  $\partcorel_\xvar$.
    A {\it total-coherence-order}, $\corel$ on $\eventset$ is a coherence-order
    in which the $\xvar$-sub-relations are total.
    In other words, $\corel=\cup_{\xvar\in\varset}\,\xcorel\xvar$ and
     $\xcorel\xvar$ is a total order on the set of write events on $\xvar$.
    %A  {\it partial coherence-order}$\partcorel =\cup_{\xvar\in\varset}\,\partcorel_{\xvar}$ is a $\corel$, where
    %$\partcorel\xvar$ sub-relations are partial.
    In this paper, we only use coherence-order relations that can be derived from the $\porel$- and $\rfrel$-relations.

\end{itemize}
%
%We use $\rfrelsof\events$  and $\corelsof\events$ to denote the sets of read-from resp.\ coherence-order relations on $\events$.
%\patodo{The last sentence to be removed?}
%
We also define the {\it happens-before} relation $\hbrel:=(\porel\cup\rfrel)^+$.
%and
%the {\it causally-before} relation $\xcbrel\xvar:=\rclosureof{(\xcorel\xvar)}\compose\hbrel$, $\cbrel:=\cup_{\xvar\in\varset}\xcbrel\xvar$.
%An {\it execution graph} $\egraph$ is a tuple $\tuple{\events,\porel,\rfrel,\corel}$ where:
%(i)
%$\events\subseteq\eventset$ is a set of events such that $\initeventset\subseteq\events$,
%(ii)
%$\porel$ a program-order on the set $\events$,
%(iii)
%$\rfrel$ is a  reads-from relation on $\events$, and
%(iv)
%$\corel$ is a coherence-order relation on $\events$.
%
A {\it partial execution graph} $\egraph$ is a tuple $\tuple{\events,\porel,\rfrel,\partcorel}$ where:
(i)
$\events\subseteq\eventset$ is a set of events,
%such that $\initeventset\subseteq\events$,
(ii)
$\porel$ a program-order on the set $\events$,
(iii)
$\rfrel$ is a  reads-from relation on $\events$, and
(iv)
$\partcorel$ is a partial coherence-order relation on $\events$.
A {\it total execution graph} %, or simply an {\it execution graph},
is a partial execution graph
 in which  the coherence-order relation is total.

%
%For a relation $\rel$, an event $\event\in\events$, and a thread $\thread\in\threadset$, we write $\event\mkrelof\rel\thread$  if $\event\mkrelof\rel\pevent$ for some $\pevent\in\mkrelof{\events\land\thread}$.
The initial egraph is defined
 as $\initegraph:=\tuple{\initeventset,\emptyset,\emptyset,\emptyset}$,
  i.e., it only contains the initial events,
   and all its relations are empty. In what follows
   we will generally be checking that the register machines
   respect the memory models without reading the values of the initial
   events. However this is only a soft requirement and can be
   easily altered in the implementation.

\subsubsection{Adding Events}
We define an operation $\addevent$ that adds a new event
 to an egraph, according to the rules given in Figure \ref{add:event:fig}.
\begin{figure}
\centering
\scalebox{0.6}{
    \threelabelinfer
    {
    \text{Write Events}
    }
    {
      \event = \eventof\wtype\thread\xvar\val~~~ \events' = \events \cup \set{\event}
      }
    {
      \porel' = \porel \cup \set{(\pevent,\event) \mid \pevent \in \events \wedge \threadattrof\pevent=\thread}
    }
    {
      \tuple{\events,\porel,\rfrel,\partcorel} \movesto{\event}\tuple{\pevents,\porel',\rfrel,\partcorel}
    }{}
}
\scalebox{0.6}{
%\text{Read events}
\fivelabelinfer
    {
    \text{Read Events}
    }
    {
      \event = \eventof\rtype\thread\xvar\val\irspace\exists \pevent\in\events:~ \typeattrof\pevent:=\wtype ~~~ \varattrof\pevent=\xvar ~~~ \valattrof\pevent=\val
      }
    {
     \pevents =\events \cup \set{\event} ,~~~ \rfrel' = \rfrel \cup \set{(\pevent,\event)}
    }
    {
      \porel' = \porel \cup \set{(\ppevent,\event) \mid \ppevent \in \events \wedge\threadattrof{\ppevent}=\thread}
    }
    {
    \partcorel'=\partcorel \cup \set{(\ppevent,\pevent) \mid \typeattrof{\ppevent}=\wtype \wedge \varattrof{\ppevent}=\xvar\wedge {\ppevent} \in \events \wedge \ppevent \tclosureof{\mkrelof{\porel \cup \rfrel}} \event}
    }
    {
    \tuple{\events,\porel,\rfrel,\partcorel} \movesto{\event}\tuple{\pevents,\porel',\rfrel',\partcorel'}
    }{}
  }
\caption{The rules for adding events to an egraph.
 A write event only causes an update to the program order relation
  $\porel$, while a read event also creates $\rfrel$ and $\corel$ edges.
  For the $\corel$ update we consider only $\ppevent \neq \pevent$.
  Since $\corel$ edges are added only when necessary,
  the resulting $\corel$ is a partial one.}
\label{add:event:fig}
\end{figure}
If the new event $\wevent$ is a write event performed by a thread $\thread$, we add $\wevent$ to the set of events.
%
%We require the identifier $\eid$ of $\event$ to reflect its order among the events of $\thread$ wrt.\ the $\porel$-relation.
%
%More precisely, we require $\eid$ to be one plus the number of $\thread$-events.
%
Adding a write event does not affect the $\rfrel$ and $\corel$ relations.
The $\porel$ is updated to an edge from all existing events
of the thread $\thread$ to the new $\wevent$.
In subsequent figures we will be drawing $\porel$ edges only between
``consecutive'' events (of any type).

If the new event to be added is a read event $\revent$, then we also need to provide a write event $\wevent$ that already belongs to $\events$ from which $\revent$ will read its value.
The events $\revent$ and $\wevent$ should have identical variables and values.
We modify the $\porel$-relation in the same manner as we did for write events.
We modify the $\rfrel$-relation by adding the new pair $\tuple{\wevent,\revent}$ indicating that $\revent$ is reading from $\wevent$.
Finally, we update the partial coherence order so that we maintain
 the invariant
that the latter is a \textit{modification order} as was
 defined in \cite{DBLP:conf/popl/LahavGV16}.
A modification order relation esentially orders
write accesses per variable, so that reads are reading values
only
by writes that happened before them (via the $\hbrel$ relation) and
are maximal according to the modification order.
To that end, we consider $\wevent$ and $\revent$ to be
sources and targets, and then search for write events
$\pwevent$ that are $\hbrel$-before $\revent$ to connect to $\wevent$.
%
%We add each such a pair $\tuple{\pwevent,\wevent}$ to the
%$\corel$-relation.

%

\subsubsection{From Runs to Egraphs}
We associate to each run $\run$ of a register machine a corresponding
egraph $\egraph:=\mkegraphof\run$.
To do so we will need to match the observable register machine operations
 (not copies) $\op$ with egraph events $\event$.
For an operation $\op=\op\sslash\val$
(from Fig. \ref{fig:semantics_exec_graphs}),
 the corresponding egraph event inherits the type, thread, variable,
  and value of $\op$, but not the register. %
  Thus,

  \begin{definition}\label{def:machine_event_to_graph_event}
    For an  operation $\op\sslash\val$ of a register machine,
     we define:
    $\event(\eventof\ttype\thread\xvar\areg\sslash\val )= \eventof\ttype\thread\xvar\val$.

  \end{definition}We define the operator $\mkegraph$ inductively as follows:

\begin{definition} For a run $\run$, we define
    \begin{itemize}
        \item $\mkegraphof{\run\movesto{\op\sslash\val}\iconf\nn}%
        %tuple{\istate1,\op\sslash\val,\istate2}}
        = \mkegraphof{\run}\addevent \event(\op\sslash\val)$.
        \item $\mkegraphof{\run\movesto{\tau}\iconf\nn}%
        %tuple{\istate1,\op\sslash\val,\istate2}}
        = \mkegraphof{\run}$
        \item $\mkegraphof\epsilon = \initegraph$.
    \end{itemize}
    %If $\run$ is empty then $\egraph$ is equal to the initial egraph $\initegraph$.
\end{definition}

Note here that the egraph corresponding to a run will not be a \textbf{total} one in the general case, as for example the run might not include any $\revent$, which means we will have no $\corel$ edges.
Moreover, this construction is \textbf{non-deterministic}, as for example, there might be several write events with the same value that could be read in a read event $\revent$. However as we have defined our register machines to be \textit{\textbf{data independent}} we will soon see this is not a problem for the verification process, and in fact we will  have a single execution graph per run.
Finally, it is possible that $\mkegraphof{\run}$ fails in some step as for example there might be no source write event for a given read. Our definitions over consistency models assume this does not happen, and later on the algorithm will indeed check this separately.
\subsection{Consistency Models}
\label{consistency:models:subsection}
\begin{figure}
  \centering
  \begin{tikzpicture}
      \node (w1) at (7,2) {$\eventof\rtype{\theta_1}{x}1$};
      \node (r2) at (7,0) {$\eventof\wtype{\theta_1}{x}2$};
      \node (w2) at (11,2) {$\eventof\rtype{\theta_2}{x}2$};
      \node (r1) at (11,0) {$\eventof\wtype{\theta_2}{x}1$};

      \draw[-stealth, \pocolor] (w1) -- (r2) node[midway, fill=white] {$\porel$};
      \draw[-stealth, \pocolor] (w2) -- (r1) node[midway, fill=white] {$\porel$};
      \draw[-stealth, dashed]  (r1) -- (w1) node[near start, fill=white] {$\rfrel$};
      \draw[-stealth, dashed]  (r2) -- (w2)  node[near start, fill=white] {$\rfrel$};
  \end{tikzpicture}
\caption{An execution graph that contains a cycle on $\hbrel$ }
\label{fig:cyclic_hb}
\end{figure}
A declarative memory model is formulated as a collection of constraints on execution graphs, which
determine the consistent execution graphs —~the ones allowed by the model.
%Each execution graph
%describes a (partially ordered) history of a particular program run.
In this section, we will formulate the three consistency models ($\anysem$) we work with.
All mentioned memory models are weaker than Sequential Consistency (SC), and allow for less restrictive memory accesses. SC requires that as soon as some value has been written in some variable, this is immediately visible to all threads.
The models we study instead allow for several threads to still view older values written in the variable, until they become ``aware'' of a new write on some path that ``hides'' the old value.
\begin{comment}

Starting from the strongest model, that is, $\srasem$, we have that on an execution graph $\egraph$, a read event $\revent$ of a thread is aware of a write $\wevent$ on variable $\xvar$, if  there is any path on $\egraph$ that connects $\wevent$ to $\revent$.
$\wevent$ becomes hidden when there exists another write event $\wevent'$ on $\xvar$ on the path connecting  $\wevent$ to $\revent$.
Note here that the path might involve $\xcorel\yvar$ edges. Clearly, a write event might have been read by a thread, while another thread might still be reading a previous one.

RA respectively weakens this requirement to paths where the $\corel$ edges involved are for the same variable.
Again note that a read-from edge is not labelled with a variable, and thus $\rasem$-paths can use $\rfrel$ edges which connect reads a writes on variables other than $\xvar$.
Finally, $\wrasem$ weakens this requirement even more, so that an event on some thread is aware of some $\wevent$, only if there is a path on the execution graph involving exclusively $\porel$ and $\rfrel$ edges.
%

As we remove edges from the paths corresponding to each memory model, we practically allow for read events to read from more write events on other threads. Thus the memory models are weakened. Examples of these models and paths can be seen in Fig. \ref{fig:memory_model_examples}.
We refer to \cite{DBLP:journals/toplas/LahavB22} for a more analytical discussion on these models and example execution graphs.
%

\end{comment}

We define a consistency model $\anysem$ by forbidding different forms of cycles in egraphs.
All the consistency models we consider in this paper require the $\hbrel$-relation to be acyclic, i.e., the transitive closure of $\porel\cup\rfrel$ is a (strict) partial order.
For instance, the egraph of Fig. \ref{fig:cyclic_hb} contains a cycle on $\hbrel$  and hence it does not satisfy any of our consistency models.
Besides this condition on $\hbrel$, our consistency models impose additional constraints on the egraph  \cite{DBLP:conf/popl/LahavGV16, DBLP:journals/toplas/LahavB22}.

\label{sec:weak_memory_models}
\begin{definition}\label{def:egraph_satisfies_memory_model}
    Let $\egraph=\tuple{\events,\porel,\rfrel,\partcorel}$ be a (possibly partial) execution graph.
    \begin{itemize}
        \item We write $\egraph\models\wrasem$ to denote that for any
        write event $\wevent$ in $\egraph$,
         the relation $\mkrelof{\wtype\land\varattrof\wevent}\cdot\hbrel\cdot\wevent\cdot\hbrel\cdot\invmkrelof\rfrel$ is acyclic.
        \item We write $\egraph\models\rasem$ to denote that
         $\tclosureof{\mkrelof{\porel\cup\rfrel\cup\partcorel_{\xvar}}}$
         is irreflexive for each variable $\xvar\in\varset$.
        \item We write $\egraph\models\srasem$ to denote that
         the relation $\tclosureof{\mkrelof{\porel\cup\rfrel\cup\partcorel}}$
          is acyclic.
    \end{itemize}
\end{definition}
%rsp. $\srasem$, and $\wrasem$),  (rsp. the relation $\porel\cup\rfrel\cup\corel$ and the relation $(\porel\cup\rfrel)^+\cdot\wevent\cdot(\porel\cup\rfrel)^*\cdot\invmkrelof\rfrel$ is acyclic).
%
For a set  $\events$  of events, a program-order relation $\porel$ on $\events$, and a reads-from relation $\rfrel$ on $\events$, we write $\tuple{\events,\porel,\rfrel}\models\rasem$ if there is a total coherence-order relation $\corel$ on $\events$ such that $\tuple{\events,\porel,\rfrel,\corel}\models\rasem$ (rsp. $\srasem$, and $\wrasem$).
We write $\tuple{\events,\porel}\models\rasem$ if there is a reads-from relation $\rfrel$ and a coherence-order relation $\corel$ on $\events$ such that $\tuple{\events,\porel,\rfrel,\corel}\models\rasem$ (rsp. $\srasem$, and $\wrasem$).
\begin{definition}[Memory models over runs]
\label{def:run_satisfies_memory_model}
    For a run $\run$, we write $\run\models\rasem$ to denote that $\mkegraphof{\run} \models \rasem$
    (rsp. $\srasem$, and $\wrasem$).
\end{definition}
\textbf{Differentiated runs: } As we highlighted earlier, so far there might be several execution graphs associated with a single run. We write $\run\xmodels\difftype\rasem$
 if $\run$ is differentiated. Note that for differentiated runs, the transition rules of Figure \ref{add:event:fig} become deterministic.
%
%An instance of $\raconsprob$ consists of a register machine $\rmachine$, and the task is to decide whether $\rmachine\models\rasem$.
%Similarly,  an instance of $\sraconsprob$ consists of a register machine $\rmachine$, and the task is to decide whether $\rmachine\models\srasem$, and an instance of $\wraconsprob$ consists of a register machine $\rmachine$, and the task is to decide whether $\rmachine\models\wrasem$.
%
We prove:
\begin{lemma}\label{lem:RA_to_diff_RA}
For each (general) run $\run$ of a register machine $\rmachine$, there exists a differentiated run $\run^\difftype$ such that
$\run\models\rasem$ iff $\run^\difftype\models\rasem$ (rsp. $\srasem$, and $\wrasem$).
\end{lemma}

\begin{proof}
    We observe that the register machine model is
    inherently data independent~\cite{DBLP:conf/popl/BouajjaniEGH17, POPL_Wolper86_data_independence}.
    We argue that if a non-differentiated run exists that violates the $\rasem$, then so does a differentiated one.
    To demonstrate this, for any arbitrary
     $\rmachine$ and $\run \in\runsetof\rmachine$, % with $
       %\tuple{\initconf,\egraph_{\emptyset}} \cmovesto{\run}
       % \tuple{\conf_1,\egraph_{\run}} $,
       we assume $\mkegraphof{\run} \not\models\rasem$.
    Note that currently the operator $\mkegraph$ applied to $\run$ might yield several output graphs (non-deterministically).
    %the transition $\cmovesto{\run}$ is non-deterministic.
    We now create a new run $\run'$, by defining a meta-counter of steps taken in $\run$, and append that counter followed by a $\#$ to all values written in the memory.
    Since the domain of the register machine is infine we can encode this value
    within the existing alphabet of the register machine, and thus not alter the domain.

    The new run is now differentiated, as each value has a unique prefix,
     which means that $\mkegraphof{\run'}$ is now a function,
      and the register machine $\rmachine$ must take identical transitions
       while processing it (since it is data independent).
    Thus we would get that the resulting execution graph is identical to $\egraph_{\run}$, and thus also would be a violating one.

    %f we assume there exists a run that violates any of our memory models, we can apply the same argument to transform it into a differentiated run.
     In the converse direction, if a differentiated run violates a semantic model, the same run is a valid run of the register machine without the meta-variable modification.
    Thus $\rmachine\models\rasem$ iff $\rmachine\xmodels\difftype\rasem$ (rsp. $\wrasem$ or $\srasem$), and we are done.
\end{proof}

The above can be similarly extended for $\wraconsprob$ and $\sraconsprob$.

For the remainder of this paper \textbf{we only consider differentiated runs} against any of our memory models.
We also prove that for a (differentiated) run it suffices to check the partial execution graph that is formed by the rules of Figure \ref{fig:semantics_exec_graphs} against our memory models.

\begin{lemma}\label{lem:partial_co_to_total_co}
    Assume a register machine $\rmachine$ and $\run \in\runsetof\rmachine$. Then, for \\  $\mkegraphof{\run}  = \tuple{\events,\porel,\rfrel,\partcorel}$, if $\mkegraphof{\run} \models \rasem$, then there exists a total coherence order $\corel$, with $\partcorel \subseteq \corel$, such that $\tuple{\events,\porel,\rfrel,\corel} \models \rasem$. (rsp. $\srasem$, and $\wrasem$).
\end{lemma}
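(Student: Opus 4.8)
The plan is to handle the three models separately, in each case producing the required total coherence order as (the restriction to the writes on each variable of) a linear extension of a relation that the hypothesis already guarantees to be acyclic. For $\rasem$ and $\srasem$ this is a routine order-extension argument; for $\wrasem$ it is immediate, since the $\wrasem$-condition of \cref{def:egraph_satisfies_memory_model} does not mention the coherence order at all.

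Write $\mkegraphof{\run}=\tuple{\events,\porel,\rfrel,\partcorel}$ and, for $\xvar\in\varset$, let $\wrel_\xvar$ be the set of write events on $\xvar$ occurring in $\events$. For $\rasem$: by hypothesis $\tclosureof{(\porel\cup\rfrel\cup\partcorel_\xvar)}$ is irreflexive, so $\porel\cup\rfrel\cup\partcorel_\xvar$ is acyclic; moreover $\xinitevent\xvar$ has no incoming $\porel$-edge (it belongs to no thread), no incoming $\rfrel$-edge (it is a write), and — $\partcorel$ being a partial coherence order — no incoming $\partcorel_\xvar$-edge, so the relation stays acyclic after adjoining all pairs $(\xinitevent\xvar,\wevent)$ with $\wevent\in\wrel_\xvar\setminus\set{\xinitevent\xvar}$. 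I would then pick a strict total order $\prec_\xvar$ on $\events$ extending this relation, set $\xcorel\xvar:=\,\prec_\xvar\cap(\wrel_\xvar\times\wrel_\xvar)$, and let $\corel:=\bigcup_{\xvar\in\varset}\xcorel\xvar$. Because each write event carries a unique variable, $\corel$ is a well-defined total coherence order; $\xinitevent\xvar$ is $\xcorel\xvar$-smallest by the choice of $\prec_\xvar$; and $\partcorel_\xvar\subseteq\,\prec_\xvar\cap(\wrel_\xvar\times\wrel_\xvar)=\xcorel\xvar$, so $\partcorel\subseteq\corel$. Finally $\porel\cup\rfrel\cup\xcorel\xvar\subseteq\,\prec_\xvar$, hence $\tclosureof{(\porel\cup\rfrel\cup\xcorel\xvar)}\subseteq\,\prec_\xvar$ is irreflexive for every $\xvar$, i.e.\ $\tuple{\events,\porel,\rfrel,\corel}\models\rasem$. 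The $\srasem$ case is the same with one global order: extend the (by hypothesis acyclic) relation $\porel\cup\rfrel\cup\partcorel$, augmented with the pairs $(\xinitevent\xvar,\wevent)$ for $\wevent\in\wrel_\xvar\setminus\set{\xinitevent\xvar}$ over all $\xvar$, to a single strict total order $\prec$ on $\events$; put $\xcorel\xvar:=\,\prec\cap(\wrel_\xvar\times\wrel_\xvar)$ and $\corel:=\bigcup_\xvar\xcorel\xvar\subseteq\,\prec$, so that $\porel\cup\rfrel\cup\corel\subseteq\,\prec$ is acyclic, as required, and $\partcorel\subseteq\corel$ as before.

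For $\wrasem$, I would observe that the relation $\mkrelof{\wtype\land\varattrof\wevent}\cdot\hbrel\cdot\wevent\cdot\hbrel\cdot\invmkrelof\rfrel$ appearing in \cref{def:egraph_satisfies_memory_model} is determined entirely by $\events$, $\porel$ and $\rfrel$ (since $\hbrel=\tclosureof{(\porel\cup\rfrel)}$), so $\tuple{\events,\porel,\rfrel,\corel}\models\wrasem$ holds for \emph{every} coherence order $\corel$ once $\mkegraphof{\run}\models\wrasem$; it therefore suffices to extend each $\partcorel_\xvar$ to a strict total order on $\wrel_\xvar$ that keeps $\xinitevent\xvar$ minimal and take the union over $\xvar$. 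The only point that needs any care — and such as it is, it is routine — is checking that forcing $\xinitevent\xvar$ below the remaining writes on $\xvar$ (and, for the single order used in the $\srasem$ case, doing so simultaneously for all variables) does not introduce a cycle; this follows from the absence of incoming $\porel$-, $\rfrel$- and $\partcorel_\xvar$-edges at each initial event, so that no cycle of the extended relation can pass through an initial event. Everything else is bookkeeping.
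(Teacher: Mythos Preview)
Your proof is correct and rests on the same underlying idea as the paper's: the acyclicity hypothesis lets one extend the partial coherence order to a total one without creating the forbidden cycles. The executions differ, however. The paper builds the extension incrementally, repeatedly picking two same-variable writes not yet ordered by $\partcorel_\xvar$, orienting them along any existing $(\porel\cup\rfrel\cup\partcorel_\xvar)^+$-path, or arbitrarily (choosing locally maximal elements) when no such path exists, and iterating until totality. You instead invoke the linear-extension principle once per variable (for $\rasem$) or globally (for $\srasem$), then restrict to the relevant write sets. Your route is shorter and avoids the bookkeeping of the iterative argument; it also makes explicit two points the paper leaves implicit, namely that the initial events can be forced to the bottom without introducing cycles (since they have no incoming $\porel$-, $\rfrel$-, or $\partcorel$-edges), and that the $\wrasem$ condition is independent of the coherence order, so any total extension works there.
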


\begin{proof}
    Assume a register machine $\rmachine$ and a run $\run \in\runsetof\rmachine$,
    and let $  \mkegraphof{\run} =\egraph_{\run} = \tuple{\events,\porel,\rfrel,\partcorel}$ for some partial execution graph $\egraph_{\run}$.
    Let $\wevent, \wevent'$ be two write events on the same variable $\xvar$, such that neither $(\wevent,\wevent'), (\wevent',\wevent)$ are in $\partcorel_x$.
    If either $(\wevent,\wevent')$, or $ (\wevent',\wevent) \in\mkrelof{\partcorel_{\xvar} \cup \porel\cup \rfrel}^+$, then we append the pair to $\partcorel$ edge that respects the direction of this path and we are done.
    Otherwise, if $\wevent, \wevent'$ are incomparable w.r.t. $\mkrelof{\partcorel_{\xvar} \cup \porel\cup \rfrel}^+$ then we pick such events that are locally maximal for $\tclosureof{\mkrelof{\porel\cup\rfrel\cup\partcorel_{\xvar}}}$.
    %From the rules of Figure \ref{fig:exec_graph_transitions}, this can only be the case if there is no read event $\revent$ in $\egraph_{\run}$, for which $\wevent \mkrelof\rfrel \revent$, and $\wevent' \mkrelof{\porel \cup \rfrel} \revent$ (or symmetrically).
    In this case, it is safe to pick at random one of $(\wevent,\wevent'), (\wevent',\wevent)$ and add to $\partcorel_{\xvar}$ as no cycles will be introduced.
    We continue the above process until we are left with a total $\corel$ for each variable $\xvar$, and we are done.
    The construction for $\wrasem$ and $\srasem$ is identical,
    with the modification that we consider
     $\mkrelof{\wtype\land\varattrof\wevent}\cdot\hbrel\cdot\wevent\cdot\hbrel\cdot\invmkrelof\rfrel$, and $\mkrelof{\partcorel\cup \porel\cup \rfrel}^+$ paths, respectively.
\end{proof}
\begin{definition}[Memory models over register machines]
\label{def:rmachine_satisfies_memory_model}
For a register machine $\rmachine$, we write $\rmachine\models\rasem$ if $\forall\run\in\runsetof\rmachine.\;\run\models\rasem$.      (rsp. $\srasem$, and $\wrasem$). We will refer to the problem of determining whether a register machine satisfies these semantics as $\raconsprob$. (rsp. $\wraconsprob$ and $\sraconsprob$).
\end{definition}
In other words, a register machine satisfies the (\texttt{W}, \texttt{S})$\rasem$-semantics whenever all its runs do so.

\section{Algorithmic results for \wrasem}

We are now ready to state our results for the $\raconsprob$,
$\wraconsprob$, and $\sraconsprob$ problems. % (as defined
%in Definition \ref{def:rmachine_satisfies_memory_model}).
The results are stated in
Theorem \ref{thm:complexity_results}.
%We will present our main methodology for $\wrasem$, because it encapsulates all the difficulties and demonstrates all the necessary tricks.
%The results will be stated Theorem \ref{thm:complexity_results} for each memory model, and proofs for all stated theorems over $\rasem$ can be found in Appendix \ref{app:correctness}, along with modifications for adapting the proofs to $\wrasem$ and $\srasem$ in Appendix \ref{app:proofs_for_sra_wra}.
%
%\subsection{Main results}
%Our main contribution in this paper is the complexity results for the three memory models we study.
%
%
%
%
We prove that all the above problems are decidable.
 In the case of $\wrasem$ the complexity is polynomial
  to the size of the register machine, which is defined a sthe number of transitions it contains,
  while for $\rasem$ and $\srasem$ it is \PSPACE~complete.
%   Therefore for each possible addition of a data point in the data
For $\wraconsprob$ we prove it is sufficient to keep a constant amount of
information regarding paths in memory, and thus manage
 polynomial time complexity.
  In the case of $\raconsprob$ and $\sraconsprob$ the size of the data structures
   increases, which implies an increment of the number amount of possible paths.
   In these cases it is not sufficient to keep in memory only a constant number of information
   for each paths we are exploring, but instead a polynomial one, which still yields a \PSPACE~algorithm,
   but the time complexity is no longer polynomial.

%
%
%and  $\bigO[|\stateset|*|\varset|*|\regset|^{4}*|\threadset|^{4}]$, which is the size of the largest data structure necessary for our algorithm. This simplifies to $\bigO[n^{10}]$, i.e. polynomial to the size of the register machine.

\begin{theorem}\label{thm:complexity_results}
    For a given a register machine $\rmachine$ of size $n$:
    \begin{itemize}
        \item $\raconsprob$ is $\PSPACE$~complete.
        \item $\sraconsprob$ is $\PSPACE$~complete.
        \item $\wraconsprob$ is in $\bigO{n^{5}}$.
    \end{itemize}
\end{theorem}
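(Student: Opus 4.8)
The plan is to reduce all three problems, via \cref{lem:RA_to_diff_RA}, to their counterparts over \emph{differentiated} runs; for differentiated runs the event-adding rules of \cref{add:event:fig} are deterministic, so the egraph $\mkegraphof\run$ of a run $\run$ is completely determined by the transition sequence that $\run$ takes, and by \cref{lem:partial_co_to_total_co} one may work throughout with the partial coherence order that $\mkegraph$ produces. Unfolding \cref{def:egraph_satisfies_memory_model}, I would then characterise a \emph{bad} differentiated run as one whose egraph contains a specific forbidden pattern: for $\wrasem$, a read $\revent$, the write $\wevent$ that $\revent$ reads from, and an \emph{intervening} write $\pwevent$ on the same variable that happens after $\wevent$ and happens before $\revent$; for $\rasem$, a cycle through $\porel$, $\rfrel$ and the coherence order of a single variable $\xvar$; and for $\srasem$, a cycle through $\porel$, $\rfrel$ and the full coherence order. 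Since every coherence edge $\pwevent\to\wevent$ is itself generated (\cref{add:event:fig}) by some read reading from $\wevent$ together with an $\hbrel$-path from $\pwevent$, each such forbidden pattern decomposes into a bounded number of \emph{anchor events} joined by \emph{legs}, where every leg is an $\hbrel$-path of $\rmachine$; the remaining task is to decide whether $\rmachine$ admits such an arrangement of anchors and legs.

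\textbf{The $\bigO{n^{5}}$ algorithm for $\wrasem$.} A $\wrasem$-violation has three anchors and two legs, and the only datum about a leg that matters for eventually closing the violation is which thread/register pair currently carries the value propagated along it --- a constant amount of information. I would therefore maintain, for each ordered pair of states of $\rmachine$, the constant-size family of summary tuples of \cref{sec:algorithms}: \emph{fragile} tuples recording a leg that ends with a read of some variable $\xvar$ from a register, \emph{exposed} tuples recording a leg that in addition already passes an intervening write on $\xvar$, and an \emph{unsafe} flag. These summaries are closed under a fixed set of inference rules, processed \emph{backwards} over the transitions of $\rmachine$ so that the copy operation is handled by the standard weakest-precondition substitution on registers (a forward search would instead branch over register-equality classes). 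A soundness-and-completeness argument then gives $\rmachine\not\xmodels\difftype\wrasem$ iff the closure marks $\initstate$ unsafe. As there are only polynomially many tuples, each derived at most once by a step of polynomial cost, a careful accounting of their number and of the update cost yields the $\bigO{n^{5}}$ bound, and $\wraconsprob\in\bigO{n^{5}}$ follows by \cref{lem:RA_to_diff_RA}.

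\textbf{The \PSPACE~bound for $\rasem$ and $\srasem$.} The same anchor/leg picture applies, but the summary of a leg must additionally record the partial progress of a coherence cycle: for $\rasem$, the state of a cycle in $\porel\cup\rfrel$ together with the coherence order of one variable; for $\srasem$, a cycle threading several variables, assembled one variable-pair at a time, so that a summary becomes a bounded sequence of (thread/register, variable) components together with a pointer to the leg currently being built. Such a summary has polynomial size but is no longer constant, so there are exponentially many of them and the explicit fixed point used for $\wrasem$ is unaffordable. Instead, to decide the complement, I would nondeterministically and incrementally simulate a differentiated run of $\rmachine$ from its initial configuration, storing only the current machine state, the current polynomial-size summary, and a binary step counter, guessing at each write whether it plays an anchor role, and accepting as soon as the summary records a completed forbidden cycle --- note that only machine states, register/thread/variable names, and a few mode bits are ever stored, never data values. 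This runs in polynomial space, so by Savitch's theorem and closure of \PSPACE~under complement, $\raconsprob$ and $\sraconsprob$ lie in \PSPACE.

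\textbf{Hardness, and the main obstacle.} For the lower bounds I would give polynomial reductions from \textsc{sat} (yielding \NP-hardness) and from \textsc{tautology} (yielding \coNP-hardness): from a formula $\varphi$ one builds a register machine whose differentiated runs let the environment encode a truth assignment into the written values, with the read/write topology arranged so that the reads close a coherence cycle precisely when the encoded assignment has the property relevant to $\varphi$. Since the model admits no data-dependent control flow, this test must be realised purely through reads-from and coherence edges and equalities of written values; in the \NP-hardness direction a bad run must moreover be coerced into traversing all assignments while a single cycle is assembled edge-by-edge across them, so that it closes only when $\varphi$ is unsatisfiable. The same gadgets carry over to $\srasem$. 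The crux of the whole development is the soundness and completeness of the summaries --- that the constant (respectively polynomial) amount of information a run carries is exactly what is needed to certify a forbidden cycle --- which ultimately rests on the combinatorial fact that a minimal violating cycle uses only boundedly many anchors and coherence edges (one leg per variable for $\srasem$); establishing that bound together with the matching backward inference rules and their invariants is where I expect the real difficulty to lie, whereas the gadget correctness for the hardness results should be comparatively routine but fiddly given the absence of data branching.
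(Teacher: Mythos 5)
Your treatment of the $\wrasem$ case is essentially the paper's proof: the same backward fixed-point over fragile/exposed/unsafe summary tuples, the same weakest-precondition handling of copies, and the same counting argument (polynomially many tuples, monotone saturation, linear cost per round) for the $\bigO{n^{5}}$ bound. The other two items, however, each have a genuine gap. For the $\PSPACE$ upper bound, your forward simulation is only polynomial-space if the summary of a partially built violating cycle has polynomial size, and you justify this solely by asserting that a minimal cycle has ``boundedly many anchors'' --- a claim you then explicitly defer as the real difficulty. That claim is precisely the content of the paper's argument, and it has a concrete proof you are missing: if a cycle enters the same thread $\thread$ more than once, the whole segment between its first entry into $\thread$ and its last exit from $\thread$ can be replaced by the $\porel$-path inside $\thread$, so a minimal violating cycle enters and leaves each thread at most once and therefore involves at most $2\cdot|\threadset|^{2}$ anchor events; the algorithm then guesses this skeleton (plus linearly many auxiliary events witnessing each $\corel$ edge) up front and checks realizability by a nondeterministic simulation of $\rmachine$. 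Note also that the correct bounding parameter is the number of threads, not ``one leg per variable'' as you write for $\srasem$.

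For the hardness results, the mechanism you describe --- ``let the environment encode a truth assignment into the written values'' --- cannot work as stated: the machine has no data-dependent transitions, so the values the environment chooses cannot influence which registers get overwritten or which coherence edges arise in a differentiated run. The paper instead encodes the assignment in the machine's \emph{own} nondeterministic branching (two $\epsilon$-paths $q_{x}\to q_{x}^{T}$ and $q_{x}\to q_{x}^{F}$ per propositional variable) and records the assignment's effect on each clause $c_{i}$ by a chain of copy transitions $\areg_{i}:=\breg_{i}$ along the chosen branch, so that the read phase closes a per-clause coherence cycle exactly when every clause's register $\areg_{i}$ has been overwritten (all clauses falsified for the 3-DNF tautology reduction; all clauses satisfied for the dual 3-CNF reduction). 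Your sketch for the second reduction, in which a single bad run must ``traverse all assignments'' so that the cycle closes only when $\varphi$ is unsatisfiable, is not the paper's construction and faces a serious feasibility obstacle: nothing in a polynomial-size, data-independent machine forces one run to visit exponentially many assignments, so you would need an entirely different device to make an existential over runs simulate a universal over assignments. You should adopt the branch-plus-copy-chain gadget, for which correctness is a short argument about when the copy $\areg_{i}:=\breg_{i}$ fires.
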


The remainder of this section is dedicated to describing the idea of our algorithm(s), their correctness, and the hardness results.
%The complexity we give here is inferred by observing the rules that will be utilized for each different consistency model.
%The rules we create for each register machine form a graph, where we essentially perform a breath-first search to detect if an $\failconf$ configuration is reachable.
%In the case of $\rasem$ and $\srasem$ we see that in order to generate all possible rule dependencies we need to explore pairs of simple tuples (see rules of Figure \ref{fig:ra_rules}, and Figure \ref{fig:sra_rules}).
%Thus we get the square in the exponent which gives us the complexity of the theorem.
% Below we formally present this construction and its correctness.
%Thus, we will at most need as many steps as the size of the largest data structure.
%This upper bound is true for all the models we study, although we can make the distinction that for $\wrasem$ we only need $\bigO[n^4]$, as we do not utilize the larger data structure that is used for the other models.

\subsection{Algorithmic method for $\wraconsprob$}\label{sec:algorithms}

Our \P-TIME algorithm for $\wrasem$ %(and the \PSPACE~one for $\rasem$ and $\srasem$)
 takes place in three modules.
All modules are based on a type of backwards reachability approach. %
The different modules start from potential violations of the condition they correspond to
and try to find paths leading to this violation by backtracking to the initial state.
We essentially keep track of what states have been
marked as origins of such potential violating paths, and on each iteration we process edges
leading to such states and propagate relevant information of existing paths backwards.

The reason for choosing to have a backward search is that it handles
 the copy operation better.
We can apply the standard weakest pre-condition operator (see rule $\largebluecircled{9}$ in Fig. \ref{fig:rules_wra}) to maintain optimal complexity.
Having a forward search would lead to exponential branching over equivalence classes (where registers with identical values are kept in the same equivalence class).
Our modules concern the following correctness aspects for all runs of a register machine $\rmachine$:

\begin{itemize}
\item $\rmachine$ does not allow for \textit{ghost reads}. Those are read events that can read the value of a register that is empty.% In the case that an implementation assumes initialized registers then this module \textbf{does not }need to be called.
\item $\rmachine$ does not allow for \textit{mismatched variable reads}.
Those are read events on a variable $\xvar$, that read the value of a register that has been storing a value for a different variable $\yvar$.
\item $\rmachine$ does not allow for cyclic variable edit dependencies.
This last condition is the one that truly defines the $\wrasem$ semantics, as the other two are conditions that are required so that an execution graph can be formed.% This is seen in Algorithm \ref{algo:wra_pseudo}.
\end{itemize}

The above conditions are all implied by the execution graph semantics stated in  Figure \ref{add:event:fig}. However, in order to simplify our algorithm we check them separately.
In this way when we get to the most difficult condition of the above, which is the third, we do not need add extra checks for the previous ones.
For example, consider having encountered some
register $\areg$ being used in some execution,
and we are storing some information about this,
 we do not need to also keep track of which variable $\xvar$
 was stored in $\areg$, since we have already confirmed that
 there are no executions allowing for operations on other variables
  to access registers that do not correspond to $\xvar$.
%We still need to keep some information about the valieables on which events happen, but this should expain the intuition and what we gain by checkign the three conditions seperately.
We formalize the above statements as:
\begin{proposition}
\label{prop:algo_RA_three_parts}
Given a register machine $\rmachine = \rmachinetuple$, we have that
$\rmachine\xmodels\difftype\wrasem$ iff:

\begin{enumerate}

\item For all $\run\in\runsetof\rmachine$, $\run = \run' \revent$, with $\revent = \outopof\thread\xvar\areg$,  there exists an event $\event \in \eventset$, such that $\run = \run_0\cdot \event\cdot \run_1$, and $\event$ is either a copy or write event that targets register $\areg$. \label{q:ghost_reads}

\item For all $\run\in\runsetof\rmachine$, $\run = \run'\revent$, with $\revent = \outopof\thread\xvar\areg$, there exists an event $\wevent_{\xvar} \in \eventset$, such that $\run = \run_0\cdot \wevent_{\xvar} \cdot \run_1$, and $\wevent_{\xvar}$ writes some value $\val$ on register $\breg$ (possibly $\breg =\areg$), and $\val$ is the last value assigned to $\areg$ during $\run_1$. \label{q:mismatched_variables}

\item For each run $\run\in\runsetof\rmachine$, with $\egraph_{\run} =
 \mkegraphof{\run} = \tuple{\events,\partcorel,\rfrel,\corel}$,  $\mkrelof{\wtype\land\varattrof\wevent}\cdot\hbrel\cdot\wevent\cdot\hbrel\cdot\invmkrelof\rfrel$ is acyclic for each write event $\wevent$ of $\events$. \label{q:RA_cycles}
\end{enumerate}
\end{proposition}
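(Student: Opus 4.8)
The plan is to prove the two directions of the biconditional separately, after unfolding what ``$\rmachine\models\wrasem$'' means. By \cref{def:egraph_satisfies_memory_model,def:rmachine_satisfies_memory_model} this amounts to: for every $\run\in\runsetof\rmachine$ the execution graph $\mkegraphof\run$ is well defined and the relation $\mkrelof{\wtype\land\varattrof\wevent}\cdot\hbrel\cdot\wevent\cdot\hbrel\cdot\invmkrelof\rfrel$ is acyclic for every write event $\wevent$. I would first observe that this acyclicity refers only to $\events$, $\porel$ and $\rfrel$ (through $\hbrel$) and never to the coherence order, so the partial-versus-total coherence subtlety of \cref{lem:partial_co_to_total_co} is immaterial here, and that item~\ref{q:RA_cycles} of the proposition is verbatim this acyclicity requirement. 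Hence the substance of the proof is to show that items~\ref{q:ghost_reads} and \ref{q:mismatched_variables} characterise exactly those runs that give rise to a legitimate execution graph.

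For $(\Leftarrow)$, I would assume items~\ref{q:ghost_reads}--\ref{q:RA_cycles} and fix $\run\in\runsetof\rmachine$, then argue by induction on $|\run|$ that every prefix admits an application of the rules of \cref{add:event:fig}. The write and copy rules carry no side condition; for a final read $\revent=\outopof\thread\xvar\areg$ returning a value $\val$, item~\ref{q:ghost_reads} rules out that $\areg$ is an untouched register and item~\ref{q:mismatched_variables} supplies a write on the variable $\xvar$ that deposited the value $\val$, i.e.\ an event $\pevent\in\events$ with $\typeattrof\pevent=\wtype$, $\varattrof\pevent=\xvar$ and $\valattrof\pevent=\val$, which is exactly the premise of the ``Read Events'' rule. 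Therefore $\mkegraphof\run$ is a well-formed execution graph, so item~\ref{q:RA_cycles} is literally $\mkegraphof\run\models\wrasem$ by \cref{def:egraph_satisfies_memory_model}; since $\run$ was arbitrary this gives $\rmachine\models\wrasem$.

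For $(\Rightarrow)$, I would assume $\rmachine\models\wrasem$. Item~\ref{q:RA_cycles} is immediate, being the unfolding of $\mkegraphof\run\models\wrasem$ for each $\run$. For items~\ref{q:ghost_reads} and \ref{q:mismatched_variables} I would argue contrapositively: given some $\run=\run'\cdot\revent\in\runsetof\rmachine$ with $\revent=\outopof\thread\xvar\areg$ that is a ghost read or a mismatched-variable read, I would produce a run of $\rmachine$ that is not $\wrasem$-consistent. For a ghost read, the content of $\areg$ was placed there by no write operation, so $\revent$ has no write to read from in the model's sense and the run carries no valid execution graph, which by the adopted convention makes it $\wrasem$-inconsistent. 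For a mismatched-variable read, I would invoke \cref{lem:RA_to_diff_RA} to restrict to differentiated runs and pass to the differentiated version of $\rmachine$; there the writes on each variable use pairwise distinct values and the value returned by $\revent$ is produced by no write on $\xvar$ in $\run'$ (nor does it equal the initial value), so the ``Read Events'' rule of \cref{add:event:fig} cannot fire at $\revent$, $\mkegraphof\run$ is undefined, and the run is again $\wrasem$-inconsistent. In both cases this contradicts $\rmachine\models\wrasem$.

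The step I expect to be the main obstacle is the $(\Rightarrow)$ direction, and specifically handling the always-present initial write events $\xinitevent\xvar$: a ghost read returns the default value $\zero$, which matches $\xinitevent\xvar$, so I must justify that ``reading an untouched register'' is genuinely inconsistent rather than being silently absorbed by an initial event, and, symmetrically, for a mismatched-variable read I must use the differentiation reduction carefully so that the returned value really is fresh on $\xvar$ (and in particular distinct from $\zero$). Once these conventions are nailed down, the remaining work --- the induction over run prefixes for $(\Leftarrow)$ and the restatement for item~\ref{q:RA_cycles} --- is routine.
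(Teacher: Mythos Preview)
Your plan is correct and matches the paper's approach: both directions hinge on the observation that conditions~\ref{q:ghost_reads} and~\ref{q:mismatched_variables} are precisely what guarantee that the ``Read Events'' rule of \cref{add:event:fig} can fire, so that $\mkegraphof\run$ is well defined, after which condition~\ref{q:RA_cycles} is literally the $\wrasem$ acyclicity clause. The paper's proof is terser---it argues $(\Rightarrow)$ by noting that if $\mkegraphof\run$ exists then the read rule fired, hence a matching write $\wevent=\inopof{\thread}{\xvar}{\breg}{\val}$ exists in $\events$, and then uses differentiation (\cref{lem:RA_to_diff_RA}) plus the operational semantics to trace $\val$ back into register $\areg$; it declares $(\Leftarrow)$ ``trivial''---but the skeleton is the same as yours.

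Your flagged obstacle is real and is exactly where the paper's argument is thinnest. The paper silently writes the matching write as an operation $\inopof{\thread}{\xvar}{\breg}{\val}$ rather than an abstract event, thereby excluding $\xinitevent\xvar$ without comment; combined with differentiation this forces $\val\neq 0$ for any read that follows at least one write on $\xvar$, which is what makes the trace-back to a concrete run operation go through. You are right that for a pure ghost read (register never touched, returned value $0$) one must either adopt the convention that such a run has no valid execution graph, or observe---as the paper hints just before the proposition---that this case is moot when registers are assumed initialised. Either way, your plan to resolve it via the differentiation reduction and a careful case split on whether the returned value is $0$ is the right move and is the only part that needs more than routine bookkeeping.
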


\begin{proof}
This proposition is counting on the fact that each run
is composed into an execution graph
through the $\mkegraphof{\run}$ operation,
and thus all necessary updates for each event
 in $\run$ have taken place in the graph
  by the time the last event is outputted.
We start then from a run $\run \in \runsetof\rmachine$,
with $\run = \run'\cdot \revent$,  $\revent = \outopof\thread\xvar\areg$.
Since $\rmachine\models \wrasem$ then $\run \models \wrasem$,
and thus $\mkegraphof{\run} = \egraph$, with $\egraph = \tuple{\events,\porel,\rfrel,\corel}$.

\textbf{Questions \ref{q:ghost_reads} and \ref{q:mismatched_variables}}. %  and \ref{q:mismatched_variables}
We have that $\egraph$ is the outcome of performing the
 update $\mkegraphof{\run'}\addevent \event(\revent\sslash\val)$.
  Clearly, as seen in Figure
   \ref{add:event:fig}, for this
   update to take place there exists event
    $\wevent =\inopof\thread\xvar\val \in \events$,
     with $\val$ being the unique value
      (all runs are differentiated due to Lemma \ref{lem:RA_to_diff_RA})
       that was outputted in $\revent$.
      % For such an event to be in $\egraph$, there must also have been
       %an update $\mkegraphof{\run''}$
  Assume now that the value $\val$ was never written
   or copied to register $\areg$. Since all values are
   unique it is impossible for $\rmachine$ to perform the
   update $\rmovesto\rmachine\revent$.%, since $\val$ would not be
  %  stored in $\areg$.
  Therefore it must be the case that either
  the value of some $\breg$ has been copied to
   $\areg$ during $\run_1$, or that $\areg = \breg$, for some earlier event
   $\inopof\thread\xvar\breg\val$ and thus there
   exists event $\event \in \eventset$, such
   that $\run = \run_0\cdot \event\cdot \run_1$, and $\event$ is
    either a copy or write event that targets register $\areg$.
    This proves both questions as $\varattrof\wevent:=\xvar$.

\textbf{Question \ref{q:RA_cycles}} This follows from Lemma \ref{lem:RA_to_diff_RA}.

The reverse direction holds trivially as the $\wrasem$ problem is defined only as the acyclicity condition of  $\mkrelof{\wtype\land\varattrof\wevent}\cdot\hbrel\cdot\wevent\cdot\hbrel\cdot\invmkrelof\rfrel$ for each write event $\wevent$ of $\egraph_{\run}$.
\end{proof}

  The algorithms for checking the first and second condition for Proposition
  \ref{prop:algo_RA_three_parts}
  (with algorithms called ghost-read and mismatched-var modules respectively)
  are given first.
  Both problems can also be solved with a classical reachability analysis
  (with complexity not higher than checking the third condition).
We nonetheless add them in this paper, along with the proofs of their correctness,
 as an easy demonstration of our method, which can also hopefully
 prepare the reader for the final algorithm module and its proof. % that are significantly more complicated.

\subsection{Ghost reads}
Here we give the algorithm and the proof of correctness
for detecting that for every read there is a corresponding write,
 i.e.  Question \ref{q:ghost_reads} from \ref{prop:algo_RA_three_parts}.
We note that this particular question could easily be checked in polynomial time with a traditional breadth-first-search approach.
However, we chose to employ the backwards method that will be necessary later on from this stage already, so we can demonstrate the necessary concepts.
Our algorithm works as described in Figure \ref{algo:ghost_reads_pseudo}, and uses the following data structures:

\begin{figure}
\[
\begin{array}{c}
%%%%%%%%%%%%%%%%%%%%%%%%%%%%%%%%%%%%%%%%%%%%%%%
\text{Initialization and termination} \\
\begin{array}{c}
\scalebox{0.6}{
\onelabelinfer{\iqstate1\mkrelof{\rtype\land\thread\land\areg\land\xvar}\iqstate2
}{
  \regs(\qstate_1)\oplus\tuple{\areg}
}{$\largebluecircled{1}$}}
\end{array}
\begin{array}{c}
\scalebox{0.6}{
\onelabelinfer{
 \tuple{\areg}\in \regs(\qstate_{init})
}{
  \failconf\oplus\qstate_{init}
}{$\largebluecircled{2}$}}
\end{array}
\\
%%%%%%%%%%%%%%%%%%%%%%%%%%%%%%%%%%%%%%%%%%%%%%%%
\text{Copied registers} \\
\begin{array}{c}
\scalebox{0.6}{
\onelabelinfer{\iqstate1\mkrelof{\areg\assigned\breg}\iqstate2 \irspace
 \tuple{\breg}\in \regs(\qstate_2)
}{
  \regs(\qstate_1)\oplus\tuple{\breg}
}{$\largebluecircled{3}$}}
\end{array}
\begin{array}{c}
\scalebox{0.6}{
\onelabelinfer{\iqstate1\mkrelof{\areg\assigned\breg}\iqstate2 \irspace
 \tuple{\areg}\in \regs(\qstate_2)
}{
  \regs(\qstate_1)\oplus\tuple{\breg}
}{$\largebluecircled{4}$}}
\end{array}
\\
%%%%%%%%%%%%%%%%%%%%%%%%%%%%%%%%%%%%%%%%%%%%%%%%%%%%%%
\text{Transparency} \\
\begin{array}{c}
\scalebox{0.6}{
\onelabelinfer{\iqstate1\mkrelof{\rtype\land\thread}\iqstate2 \irspace
 \tuple{\breg}\in \regs(\qstate_2)
}{
  \regs(\qstate_1)\oplus\tuple{\breg}
}{$\largebluecircled{5}$}}
\end{array}
\begin{array}{c}
\scalebox{0.6}{
\onelabelinfer{\iqstate1\mkrelof{\wtype\land\thread\land\areg}\iqstate2 \irspace
 \tuple{\breg}\in \regs(\qstate_2)
}{
  \regs(\qstate_1)\oplus\tuple{\breg}
}{$\largebluecircled{6}$}}
\end{array}
\\
\begin{array}{c}
\scalebox{0.6}{
\onelabelinfer{\iqstate1\mkrelof{\areg\assigned\breg}\iqstate2 \irspace
 \tuple{\creg}\in \regs(\qstate_2)
}{
  \regs(\qstate_1)\oplus\tuple{\creg}
}{$\largebluecircled{7}$}}
\end{array}
\end{array}
\]
\caption{The inference rules for updating the $\regs$ data structure, for each state $\qstate \in \stateset$ of a register machine $\rmachine$. These rules constitute the implementation of the ghost-reads module. We assume that $\breg\neq\areg$.}
\label{fig:ghost_reads}
\end{figure}

\begin{figure}%{R}{0.45\textwidth}
  %\begin{minipage}{0.5\textwidth}
  \begin{algorithm}[H]
  %\SetAlgoLined
  \KwIn{$\rmachine$}
  %\KwResult{$\rmachine \models \wrasem$}
   $rule\_updated := \true $\;
   tuples := $\emptyset$\;
   %$\rmachine_{rules} := linked\_rules (\rmachine, \wrasem )$\;
   \For{$read\_edge \in \rmachine$}{
   tuples := tuples $ \cup~Rule1.(read\_edge)$\;
   }
   \While{$rule\_updated$}{
    $rule\_updated := \false$\;
    \For{ Rule $ \in$  Fig. \ref{fig:ghost_reads}}{
      \If{Rule.$condition() \in tuples$}{
        \eIf{Rule.update() ==~$\failconf$}{
        \KwRet{$\failconf$}\;
        }{
          $tuples = tuples~\cup$ Rule.$update()$\;
          $rule\_updated := \true$\;
        }
      }
    }
  }
   \caption{The algorithm for the Ghost-Reads Module}\label{algo:ghost_reads_pseudo}
  \end{algorithm}
  % \end{minipage}
  \end{figure}

\begin{itemize}
\item
It uses a data structure $\funtype{\regs}{\stateset}{2^{\regset}}$.
If $\tuple{\areg}\in \regs(\qstate)$ then there is a
 run $\run$ of $\rmachine$ starting from $\qstate$,
  in which whatever was stored in $\areg$ at $\qstate$ will be outputted
  along $\run$.

\item It first isolates all output edges $E_{out}$ of the register and then searches backwards through the edges or $\rmachine$. We update the data structure as Figure \ref{fig:ghost_reads} indicates.

\item If at any point a register $\areg$ becomes relevant to the initial state $\qstate_{init}$ we associate the $\failconf$ configuration to  $\qstate_{init}$. It follows that if while at $\qstate_{init}$ there exists a run where the register $\areg$ value is will be output as it was in $\qstate_{init}$, then it is possible to run the register machine and output from $\areg$ without initialising it. \end{itemize}

Intuitively, Figure \ref{fig:ghost_reads} associates a register $\areg$ to state $\qstate$ when there is a run from $\qstate$ outputting on some output edge the value of $\areg$ exactly as it was on $\qstate$.
These registers are propagated backwards to states $\pqstate$ that can reach $\qstate$ depending on the edge that connects $\pqstate$ to $\qstate$.
A register is not propagated backwards to states $\pqstate$ that can reach $\qstate$ if the transition leading to $\qstate$ is an input on $\areg$, as this means the value of $\areg$ is now overwritten.
Upon termination this $\bigO{n^3}$ algorithm detects whether a register machine $\rmachine$ can output a ghost value.
\begin{figure}
%\scalebox{0.85}{
    \begin{tikzpicture}
        \node[state, initial] (q1) at (0, 0) {$q_0$};
        \node[state] (q2) at (3, 0) {$q_1$};
        %\node[state] (q3) at (6, 0) {$q_2$};
        \node[state] (q4) at (6, 0) {$q_2$};
        \node[state] (q5) at (9, 0) {$q_3$};

        \draw[-stealth] (q1) to[bend left] node[above,midway] {$\inopof{\theta_1}{x}{a_1}{-}$} (q2);
        \draw[-stealth] (q4) to[bend left] node[above,midway]
        {$\outopof{\theta_1}{x}{a_2}$} (q5);

        %\draw[-stealth] (q2) to[bend left] node[above,midway]
        %{$\inopof{\theta_2}{x}{a_1}{-}$} (q3);
        \draw[-stealth] (q2)to[bend left] node[above,midway]
        {$\outopof{\theta_2}{x}{a_1}$} (q4);
        \draw[-stealth] (q4)to[bend left] node[below,midway]
        {$\inopof{\theta_2}{y}{a_2}{-}$} (q1);
    \end{tikzpicture}
%    }
    \caption{%
    A register machine that produces the pattern of Figure \ref{fig:ex_exec_graph}.
     It satisfies $\wrasem$, but not $\rasem$,
     as there is a cycle of $\xcorel{x}$ edges.}
    \label{fig:ghost_read_violation}
\end{figure}
\begin{example}
Consider the register machine of Figure \ref{fig:ghost_read_violation}.
One can easily see that this register machine can produce a run
$\run = \qstate_0\movesto{\inopof{\theta_1}{x}{a_1}{-}}\qstate_1\movesto{\outopof{\theta_2}{x}{a_1}}\qstate_2\movesto{\outopof{\theta_1}{x}{a_2}}\qstate_3$, which outputs the value of $\areg_2$, even though nothing has been inputted on $\areg_2$ during $\run$.
Our algorithm from Figure \ref{fig:ghost_reads} would first perform $\regs(\qstate_2)\oplus\tuple{\areg_2}$, due to the
$\outopof{\theta_2}{x}{a_2}$ edge that would trigger rule $\bluecircled1$, and then trigger rules $\bluecircled5$ and $\bluecircled6$ in this order to perform the updates $\regs(\qstate_1)\oplus\tuple{\areg_2}$ and $\regs(\qstate_0)\oplus\tuple{\areg_2}$ respectively which would finally trigger rule $\bluecircled2$ and flag the $\failconf$ configuration.
\end{example}

We now give the proof of correctness for the ghost-read module.

\begin{proof}[Proof: Algorithm of  Figure \ref{fig:ghost_reads}]
We claim that for a register machine $\rmachine$,  $\failconf\oplus\initstate$ iff
there exists a run $\run\in\runsetof\rmachine$, $\run = \run'\cdot \revent$, with $\revent = \outopof\thread\xvar\areg$, and, there is no event $\wevent = \inopof\thread\xvar\areg\val \in \eventset$, such that $\run = \run_0\cdot \wevent\cdot \run_1$, where the value of $\wevent$ is outputted on $\outopof\thread\xvar\areg$

$\Rightarrow$ To prove the algorithm does not calculate false negatives, we claim the following:

If $\tuple{\areg}\in \regs(\qstate)$, then there exists a run $\run$, starting from $\qstate$ such that the value of $\areg$ is outputted upon some read event at the end of $\run$.
We will prove this by induction on the number of rules that have been applied (number of steps in the algorithm).

\begin{itemize}
\item \underline{Base case $\#steps=1$.} Since we start with no registers associated with any states, the only update that can take place is $\bluecircled{1}$.
For this rule to be applied, we must be at a state $\qstate$ that can output the value of a register within $1$ step, and those are the ones where there exists edge $\qstate\movesto{\outopof\thread\xvar\areg}\pqstate$ in $\transitionset$ of $\rmachine$. Thus, based on Figure \ref{fig:ghost_reads}, $\tuple{\areg}\in \regs(\qstate)$ and $\run = \outopof\thread\xvar\areg$.
\item \underline{Inductive hypothesis: the statement holds after $n$ steps of the algorithm.}
\item \underline{Inductive step:} On step $n+1$ we are possibly able to apply any of the rules in  Figure \ref{fig:ghost_reads}. $\bluecircled{1}$ has already been covered and thus we examine the following cases:

\textbf{Rule $\bluecircled{2}$ :} This rule does not update the registers associated to a state, thus from inductive hypothesis the claim holds.

\textbf{Rule $\bluecircled{3}$ :} We see that in order to apply this rule it must be the case that for the previous iteration of the algorithm $\tuple{\breg}\in \regs(\pqstate)$, and there exists edge $\qstate\movesto{\areg\assigned\breg}\pqstate$ in $\transitionset$.
 We see that if $b$ was able to be outputted after a run $\run$, then it will still be able to be outputted after the run $\event\cdot\run$, with $\event = \areg\assigned\breg$, and the statement holds.

\textbf{Rule $\bluecircled{4}$ :}
Very similarly as above, we see that the value of $\areg$ is not replaced with $\breg$ and thus the value of $\breg$ will be able to be outputted after $\event\cdot\run$, starting from $\qstate$.

\textbf{Rules $\bluecircled{5}$, $\bluecircled{6}$ and $\bluecircled{7}$ :} It is clear that the transitions studied here do not affect the fact that the value of a register $\breg$ can be outputted after the run guaranteed to exist from the inductive hypothesis, extended at the beginning by the relative event. Thus the claim holds.
\end{itemize}

We now clearly have that if $\failconf\oplus\initstate$ there exists a run $\run$ that, starting on $\initstate$, will output the value of $\areg$.
 However since $\initstate$ is where the register machine starts its computation with an empty set of registers, we have that $\rmachine$ would allow this run to take place even when $\areg$ is unitialized, and  the statement holds.

$\Leftarrow$ To prove the reverse direction, i.e. that the algorithm detects all violations, we claim the following:

If there exists a run $\run\in\runsetof\rmachine$, $\run = \run'\cdot \revent$, with $\revent = \outopof\thread\xvar\areg$, and, there is no event $\wevent = \inopof\thread\xvar\areg\val \in \eventset$, such that $\run = \run_0\cdot \wevent\cdot \run_1$, where the value of $\wevent$ is outputted on $\outopof\thread\xvar\areg$, then  $\failconf\oplus\initstate$.

To prove this we have: we know that since $\run = \run' \revent$ then for some $\qstate \in \stateset$, $\qstate\movesto{\outopof\thread\xvar\areg}\pqstate$ is in the $\transitionset$ of $\rmachine$.
 Thus clearly, from Rule $\bluecircled{1}$, $\tuple{\areg}\in \regs(\qstate)$.
  Since $\qstate$ is reachable by $\initstate$ through $\run$,
  we need to see what sort of edge would be the last one on the path
  between $\initstate$ and $\qstate$.
  We check the rules of Figure \ref{fig:ghost_reads} and
  and see that
  the only way that the update $\tuple{\areg}\in \regs(\initstate)$ does not take place,
  is when we are \textit{missing} a rule in Figure \ref{fig:ghost_reads}
  for a certain type of edge.
The only type of edge which does not propagate $\tuple{\areg}$ in
 Figure \ref{fig:ghost_reads} is an input edge on $\areg$,
  or a copy from a different register $\breg$ to $\areg$.
Of these two we know that the first is
 impossible due to the premise of the statement we are proving.
 In the second case,
 we are left with a shorter run, and the same statement to prove,
  but for the new
 register $\breg$.
Since the premise of the statement forbids from any input writing on the register that is the current register at any point we know that some register will end up being associated to $\initstate$, and thus trigger the rule $\bluecircled{1}$
\end{proof}

The complxity of Algorithm \ref{algo:ghost_reads_pseudo} is calculated as follows:
The algorithm will use linear time to calculate the set of initial tuples
(lines $3$ and $4$). The while loop of line $5$ is executed while there are
still new tuples added each iteration. The \textbf{for}-loop within the while
loop takes $no_of_tuples \times n$, which is $\bigO{n}$. In the worst case the
while loop will add at most one register to one state at each such iteration,
so in the worst case it is executed $\bigO{n^2}$ times. Therefore the final
complexity is $\bigO{n^3}$, which however can be severely improved
(as our tool does) to only trigger the while loop when a new touple is actually added,
without searching from scratch for new rules to add.

\subsection{Mismatched Variables}\label{app:mismatch_variables}

\label{sec:mismatched_vars}

In this part we tackle the problem of a register machine $\rmachine$ outputting the value of a register in a variable other than the variable that the register was written for.
The algorithm is very similar to that for ghost reads, only now we need to keep track of extra information regarding the variable whose value was stored in a register.
For this violation to occur we still need a output event to take place, and thus we will initialize the search from output edges in $\rmachine$. Figure \ref{fig:mismatched_variables} summarizes the update rules for the data structure used by the algorithm.
\begin{figure}[t]
\[
\begin{array}{c}
%%%%%%%%%%%%%%%%%%%%%%%%%%%%%%%%%%%%%%%%%%%%%%%
\text{Initialization and termination} \\
\begin{array}{c}
\scalebox{0.6}{
\onelabelinfer{\iqstate1\mkrelof{\rtype\land\thread\land\areg\land\xvar}\iqstate2
}{
  \regs(\qstate_1)\oplus\tuple{\areg,\xvar}
}{$\largebluecircled{1}$}}
\end{array}
\begin{array}{c}
\scalebox{0.6}{
\onelabelinfer{\iqstate1\mkrelof{\wtype\land\thread\land\areg\land\yvar}\iqstate2 \irspace
 \tuple{\areg,x}\in \regs(\qstate_2)
}{
  \failconf\oplus\qstate_1
}{$\largebluecircled{2}$}}
\end{array}
\\
%%%%%%%%%%%%%%%%%%%%%%%%%%%%%%%%%%%%%%%%%%%%%%%%
\text{Copied registers} \\
\begin{array}{c}
\scalebox{0.6}{
\onelabelinfer{\iqstate1\mkrelof{\areg\assigned\breg}\iqstate2 \irspace
 \tuple{\breg,x}\in \regs(\qstate_2)
}{
  \regs(\qstate_1)\oplus\tuple{\breg,x}
}{$\largebluecircled{3}$}}
\end{array}
\begin{array}{c}
\scalebox{0.6}{
\onelabelinfer{\iqstate1\mkrelof{\areg\assigned\breg}\iqstate2 \irspace
 \tuple{\areg,x}\in \regs(\qstate_2)
}{
  \regs(\qstate_1)\oplus\tuple{\breg,x}
}{$\largebluecircled{4}$}}
\end{array}
\\
%%%%%%%%%%%%%%%%%%%%%%%%%%%%%%%%%%%%%%%%%%%%%%%%%%%%%%
\text{Transparency} \\
\begin{array}{c}
\scalebox{0.6}{
\onelabelinfer{\iqstate1\mkrelof{\rtype\land\thread}\iqstate2 \irspace
 \tuple{-,-}\in \regs(\qstate_2)
}{
  \regs(\qstate_1)\oplus\tuple{-,-}
}{$\largebluecircled{5}$}}
\end{array}
\begin{array}{c}
\scalebox{0.6}{
\onelabelinfer{\iqstate1\mkrelof{\wtype\land\thread\land\areg}\iqstate2 \irspace
 \tuple{\breg,-}\in \regs(\qstate_2)
}{
  \regs(\qstate_1)\oplus\tuple{\breg,-}
}{$\largebluecircled{6}$}}
\end{array}
\\
\begin{array}{c}
\scalebox{0.6}{
\onelabelinfer{\iqstate1\mkrelof{\areg\assigned\breg}\iqstate2 \irspace
 \tuple{\creg,-}\in \regs(\qstate_2)
}{
  \regs(\qstate_1)\oplus\tuple{\creg,-}
}{$\largebluecircled{7}$}}
\end{array}
\end{array}
\]
\caption{The inference rules for updating the $\regs$ data structure , for each state $\qstate \in \stateset$ of a register machine $\rmachine$ during the mismatched variable reads module. We assume that $\xvar\neq\yvar$ and $\breg\neq\areg$. A $-$ in a tuple means that that particular attribute is of no importance to the rule, but will be carried over in the update.}
\label{fig:mismatched_variables}
\end{figure}
Our algorithm works as follows:

\begin{itemize}
\item
It uses a data structure $\funtype{\regs}{\stateset}{2^{\regset*\varset}}$.
If $\tuple{\areg,\xvar}\in \regs(\qstate)$ then there is a run $\run$ of $\rmachine$ from $\qstate$, where whatever was stored in $\areg$ at $\qstate$ will be outputted to variable $\xvar$.

\item The algorithm first isolates all output edges $E_{out}$ of the register and then searches backwards through the edges or $\rmachine$. We update the data structure as Figure \ref{fig:mismatched_variables} indicates.

\item If at any point a state marked with $\tuple{\areg,\xvar}$ can be reached by a $\qstate\movesto{\inopof\thread\areg\yvar -}$ transition , it means that the value of $\yvar$ will be stored in $\areg$, but later on a run can continue and output it on $\xvar$. We assume the register machine is able to reach all states, which means this is a violation.
\end{itemize}

Figure \ref{fig:ghost_read_violation} can also be used as
an example of a register machine that will flag an $\failconf$ configuration
 in this module. The run $
\run = \qstate_0\movesto{\inopof{\theta_1}{x}{a_1}{-}}\qstate_1\movesto{\outopof{\theta_2}{x}{a_1}}\qstate_2\movesto{\inopof{\theta_2}{y}{a_2}{-}}\qstate_0\movesto{\inopof{\theta_1}{x}{a_1}{-}}\qstate_1\movesto{\outopof{\theta_2}{x}{a_1}}\qstate_2\movesto{\outopof{\theta_1}{x}{a_2}}\qstate_3$ outputs the value of $\areg_2$ on $\xvar$, even though during $\run$ it inputted in $\areg_2$ on $\yvar$.

We now proceed similarly with the algorithm for Question \ref{q:mismatched_variables}. The proof is very similar to the previous one. I.e.  we need to show that 1) our rules indeed mark and propagate register-variable pairs correctly, and 2) the rules mark all possible dangerous register-variable pairs per state.

\begin{proof}[Proof: Algorithm of Figure \ref{fig:mismatched_variables}]
The proof follows the same structure as above. We can use induction on the number of steps to prove that associating a variable-register tuple to a state preserves the premise, and for the reverse it suffices to check that no valid tuple is stopped from propagating backwards when the relevant path exists.
\end{proof}
From now on we assume that the algorithms of Figures \ref{fig:ghost_reads}
an \ref{fig:mismatched_variables} have been run in advance,
and thus the register machines we are checking for cyclic
$\mkrelof{\wtype\land\varattrof\wevent}\cdot\hbrel\cdot\wevent\cdot\hbrel\cdot\invmkrelof\rfrel$
relations have successfully passed those checks.
Below we present our algorithm for checkig the third condition.

\subsection{Exposed Read Violations}
\label{sec:hidden_reads}
Assume we are given a register machine $\rmachine=\rmachinetuple$ operating on a set $\thread$ of threads, a set $\varset$ of variables, a set $\regset$ of registers, and a set $\valset$ of data values.
We present an algorithm that checks whether all runs of $\rmachine$
respect the third condition of Proposition \ref{prop:algo_RA_three_parts}.
\begin{figure}%\footnotesize
\begin{center}
  \text{The rules for $\wrasem$. All rules apply only for $a\neq b$, and $x \neq y$.}
\end{center}
\[
\begin{array}{lll}
\scalebox{0.55}{
\onelabelinfer{\iqstate1\mkrelof{\rtype\land\thread\land\xvar\land\areg}\iqstate2
}{
  \visibleof{\iqstate1}\areg\thread\xvar{}
}{$\largebluecircled{1}$}}
&
\scalebox{0.55}{ \twolabelinfer{
  \iqstate0\mkrelof{\rtype\land\thread\land\yvar\land\breg}\iqstate1
  }{
  \statusof{\iqstate1}\areg\thread\xvar{}
  }{
  \statusof{\iqstate0}\areg\breg\xvar{}
}{$\largebluecircled{2}$} }
&
\scalebox{0.55}{ \twolabelinfer{
  \iqstate0\mkrelof{\wtype\land\thread\land\yvar\land\breg}\iqstate1
  }{
  \statusof{\iqstate1}\areg\breg\xvar{}
  }{
  \statusof{\iqstate0}\areg\thread\xvar{}
}{$\largebluecircled{3}$} }
\\ \\
\scalebox{0.55}{\twolabelinfer{
  \iqstate0\mkrelof{\wtype\land\thread\land\xvar\land\breg}\iqstate1
  }{
  \statusof{\iqstate1}\areg{\thread \text{ or } \breg}\xvar{}
  }{
  \hiddenof{\iqstate0}\areg\thread\xvar{}
}{$\largebluecircled{4}$} }
&
\scalebox{0.55}{\twolabelinfer{
  \iqstate0\mkrelof{\rtype\land\thread\land\xvar\land\breg}\iqstate1
  }{
  \statusof{\iqstate1}\areg\thread\xvar{}
  }{
  \hiddenof{\iqstate0}\areg\breg\xvar{}
}{$\largebluecircled{5}$} }
&
\scalebox{0.55}{\twolabelinfer{
  \iqstate0\mkrelof{\rtype\land\thread\land\xvar\land{(\areg \text{ or }\breg)}}\iqstate1
  }{
  \hiddenof{\iqstate1}{\areg}\thread\xvar{}
  }{
  \hiddenof{\iqstate0}{\areg}{\areg}\xvar{}
}{$\largebluecircled{6}$} }
\end{array}
\]

\text{\hspace{1.7cm} Detecting failures. \hspace{2.7cm} Handling copy events
}
  \[
\begin{array}{ccc}
\scalebox{0.55}{
\twolabelinfer{
  \iqstate0\mkrelof{\wtype\land\thread\land\xvar\land\areg}\iqstate1
  }{
  \hiddenof{\iqstate1}\areg{\thread  \text{ or } \areg}{\xvar}{}
  }{
  \failconf
}{$\largebluecircled{7}$} }
&
\quad
&
 \scalebox{0.55}{
 \twolabelinfer{
   {\iqstate0}\mkrelof{\ceventof{\areg}{\breg}}{\iqstate1}
   }{
   \statusof{\iqstate1}{reg}{data}{\xvar}{}
   }{
   \statusof{\iqstate0}{reg\{\sfrac{b}{a}\}}{data\{\sfrac{b}{a}\}}\xvar{}
   }
   {$\largebluecircled{9}$} }
\end{array}
\]
\begin{center}
  \text{Transparency rules propagate \textbf{ANY} tuples to states when the occurring event}
\text{ does not affect the stored information.}
\end{center}
\[
\begin{array}{cccc}
  \qquad\quad
  &
\scalebox{0.55}{
 \twolabelinfer{
   {\iqstate0}\mkrelof{\wtype\wedge \creg}{\iqstate1}
   }{
   \statusof{\iqstate1}{\areg}{\thread  \text{ or } \breg}{\xvar}{}
   }{
   \statusof{\iqstate0}{\areg}{\thread \text{ or } \breg}{\xvar}{}
   }
   {$\largebluecircled{8}$} }
   &
   \quad
   &
 \scalebox{0.55}{
   \twolabelinfer{
   {\iqstate0}\mkrelof{\rtype}{\iqstate1}
   }{
   \statusof{\iqstate1}{\areg}{\thread  \text{ or } \breg}{\xvar}{}
   }{
   \statusof{\iqstate0}{\areg}{\thread \text{ or } \breg}{\xvar}{}
   }
   {$\largebluecircled{8'}$} }
\end{array}
\]

\caption{The rules for updating the $\hidden$, and $\visible$ data structures. $\status$ means either $\hidden$ or $\visible$. These rules are sufficient to detect violations of $\wrasem$, and for tracking $\porel \cup \rfrel$. When a rule is stated with an ``or'' description it stands for two rules, one for each version of this clause. }
\label{fig:rules_wra}
\end{figure}
\begin{figure}%{R}{0.45\textwidth}
  %\begin{minipage}{0.5\textwidth}
  \begin{algorithm}[H]
  %\SetAlgoLined
  \KwIn{$\rmachine$}
  %\KwResult{$\rmachine \models \wrasem$}
   $rule\_updated := \true $\;
   tuples := $\emptyset$\;
   %$\rmachine_{rules} := linked\_rules (\rmachine, \wrasem )$\;
   \For{$read\_edge \in \rmachine$}{
   tuples := tuples~$\cup~Rule1.(read\_edge)$\;
   }
   \While{$rule\_updated$}{
    $rule\_updated := \false$\;
    \For{ Rule $ \in$  Fig. \ref{fig:rules_wra}}{
      \If{Rule.$condition() \in tuples$}{
        \eIf{Rule.update() ==~$\failconf$}{
        \KwRet{$\failconf$}\;
        }{
          $tuples = tuples~\cup$ Rule.$update()$\;
          $rule\_updated := \true$\;
        }
      }
    }
  }
   \caption{The algorithm for $\wrasem$}\label{algo:wra_pseudo}
  \end{algorithm}
  % \end{minipage}
  \end{figure}
%
%In this part we tackle the problem of a register machine $\rmachine$ producing a run that a thread $\thread$ reads a value that would be hidden from it under the $\wrasem$ semantics.
Algorithm \ref{algo:wra_pseudo} assumes that $\rmachine$ has successfully passed the previous two stages of the verification (namely ghost reads and mismatched variables).
The key contributor to the complexity is the size of the
 largest data structure necessary.
For $\wraconsprob$ we need data structures of size
$\bigO{|\stateset|*|\varset|*|\regset|*|(\threadset \cup \regset)|}$,
 which simplifies to $\bigO{n^{4}}$.
These data structures are used to store {\it summaries} of possible runs
in $\rmachine$.
We only keep information for runs that might cause violations
 as the ones described
in Proposition \ref{prop:algo_RA_three_parts}.
The addition of information (in the form of state-summary tuples) to our data structures is monotonic,
 meaning once
we have discovered a path from some state $\qstate$ exists,
 this information is never removed.
Thus the data structure size bounds the number of iterations
 of our main loop (line $5$, Algorithm \ref{algo:wra_pseudo}).
 For each iteration, the algorithm checks if an existing possible violation
 path can be combined with more transitions of $\rmachine$ (a search which is linear
 to the size of the machine), and creates more tuples to new states.
 The conditions for guranteeing existence of new (relevant)
  paths is characterized by the rules of Fig. \ref{fig:rules_wra}, which produce new
  tuples (see lines $4$ and $13$).
  If no update takes place in one iteration, we are guaranteed we have arrived at a fixed-point
  and the proceedure stops (lines $1$, $6$, and $14$).
%Algorithm \ref{algo:wra_pseudo} gives a summary of this proceedure.
Our data structures are:

%====================================================

\begin{itemize}
\item
$\funtype\visible{\stateset}{2^{\varset \times(\threadset\cup\regset)\times\regset}}$.
If $\visibleof{\qstate}\areg\thread\xvar{}$
then there is a run $\run$ of $\rmachine$ starting at $\qstate$
in which $\thread$ has a  $\tclosureof{\mkrelof{\porel\cup\rfrel}}$
path to a read event $\revent$ on $\xvar$, whose value is stored
in register $\areg$ when $\run$ is in $\qstate$
(for example rule $\largebluecircled{1}$ in Fig. \ref{fig:rules_wra}).
%<
A tuple $\visibleof{\qstate}\areg\breg\xvar{}$ is created when,
along an existing path to such an $\revent$, % of the $\areg$-value,
 we encounter a new read event $\revent'$, which reads from the value of
 register $\breg$ on $\thread'$ (see rule $\largebluecircled{2}$).
The tuple stands for an \textit{expectation} of a tuple
$\visibleof{\qstate'}\areg{\thread'}\xvar{}$,
as, when an event that inputs the value for $\revent'$, we
 should
get a new $\tclosureof{\mkrelof{\porel\cup\rfrel}}$ path, and thus
we should get the tuple $\visibleof{\qstate'}\areg{\thread'}\xvar{}$
(see rule $\largebluecircled{3}$ ).
 %and we keep $\breg$ in store instead of $\thread'$ while we look for the endpoint of the $\rfrel$ edge that will be created once the new thread $\thread'$ writes the value that is being read on $\breg$.
%At that point the $\rfrel$ edge is created and connects $\thread'$ to $\thread$, thus creating a $\tclosureof{\mkrelof{\porel\cup\rfrel}}$ path to the output of the $\areg$-value.
%I.e., similar to before, there exists an output on the value currently stored in $\areg$. However, in this case we also store $\breg$ so that we can remember that along some $\tclosureof{\mkrelof{\porel\cup\rfrel}}$ path to the $\areg$-value output, another output from the value of $\breg$ has been performed on a thread $\thread$ marked in a $\alpha(\pqstate,\xvar)$ tuple.
%Thus, if we continue backwards, whenever a new thread $\thread'$ writes possibly to a new register $\creg$, the value that is in $\breg$ when reaching $\qstate$, will acquire a $\tclosureof{\mkrelof{\porel\cup\rfrel}}$ path (due to the created $\rfrel$ edge) to the $\areg$-value output that all the tuples are keeping track of.
We refer to  Example \ref{exp:tuple_addition} for a demonstration of this procedure. % and to Appendix \ref{app:correctness}, for a detailed view of the data structure elements and their function.
%there is a run $\run$ of $\rmachine$ that starts by some thread $\thread$ reading the value of some variable $\yvar\neq\xvar$ from $\breg$ and then $\thread$ continues $\run$, such that $\run$ eventually reads the value of $\xvar$ from $\areg$, and such that $\run$ does not contain any input events on $\areg$.

%We write $\qstate\in\alpha(\areg,\xvar)$ to denote that either $\tuple{\qstate,\thread}\in\alpha(\areg,\xvar)$ or $\tuple{\qstate,\breg}\in\alpha(\areg,\xvar)$.
%
%In such a case, we call $\qstate$ an $\alpha(\areg,\xvar)$-state. If $\qstate$ is an $\alpha(\areg,\xvar)$-state, it means that $\rmachine$  can perform a run starting in state $\qstate$ for which $\porel \cup \rfrel \cup \corel_{\xvar}$ will be performing an output $\xvar$, accessing the value of $\areg$, and with $\run$ not encountering any input events on $\areg$.

\item
$\funtype\hidden{\stateset}{2^{\varset \times(\threadset\cup\regset)\times\regset}}$.
These tuples are essentially an extension of the $\visible$ tuples,
 with the addition that the paths that were described above can now
 refer to
  $\tclosureof{\mkrelof{\porel\cup\rfrel}}\wevent_x\tclosureof{\mkrelof{\porel\cup\rfrel}}$
   paths instead (for all $\xvar \in \varset$).

%
%If $\tuple{\breg,\areg}\in\beta(\qstate,\xvar)$ then, from $\qstate$,  $\rmachine$ can read the value of in  $\breg$ moving to an $\beta(\qstate,\xvar)$-state.
%

\end{itemize}

\begin{figure}
	\begin{subfigure}[t]{0.5\textwidth}
	\centering
	\scalebox{0.8}{
    \begin{tikzpicture}[every text node part/.style={align=center}]
        \node[state, initial] (q1) at (0, 0) {$q_1$};
        \node[state] (q2) at (3.5, 0) {$q_2$};
        %\node[state] (q3) at (6, 0) {$q_2$};
        %\node[state] (q4) at (6, 0) {$q_2$};
        %\node[state] (q5) at (9, 0) {$q_3$};

        \draw[-stealth] (q1) edge[loop above] node[above,midway] {$\inopof{\theta}{x}{a}{}$ \\ $\inopof{\theta}{x}{b}{}$  \\
         $\inopof{\theta}{y}{c}{}$ } (q1);
        \draw[-stealth] (q1) to[bend left] node[above,midway]
        {$\outopof{\phi}{y}{c}$} (q2);
		\draw[-stealth] (q2) edge[loop above] node[above,midway] {$\outopof{\phi}{x}{a}{}$ } (q2);
        %\draw[-stealth] (q2) to[bend left] node[above,midway]
        %{$\inopof{\theta_2}{x}{a_1}{-}$} (q3);
        %\draw[-stealth] (q2)to[bend left] node[above,midway]
        %{$\outopof{\theta_2}{x}{a_1}$} (q4);
        %\draw[-stealth] (q4)to[bend left] node[below,midway]
        %{$\inopof{\theta_2}{y}{a_2}{-}$} (q1);
    \end{tikzpicture}}
    \caption{Register machine $\rmachine$.}
    \label{fig:rmachine_not_wra}
  \end{subfigure}
  \begin{subfigure}[t]{0.5\textwidth}
  \scalebox{0.8}{
  \centering
\begin{tikzpicture}
          \node (e1) at (0,3.3) {$e_1: \eventof\wtype{\theta}{x}{1}$};
          \node (e2) at (0,1.6) {$e_2: \eventof\wtype{\theta}{x}{2}$};
          \node (e3) at (0,0) {$ e_3: \eventof\wtype{\theta}{y}{1}$};
          \node (e4) at (4,2.5) {$e_4: \eventof\rtype{\phi}{y}{1}$};
          \node (e5) at (4,0.8) {$e_5: \eventof\rtype{\phi}{x}{1}$};

          \draw[-stealth, \pocolor] (e1) -- (e2) node[midway, fill=white] {$\porel$};
          \draw[-stealth, \pocolor] (e2) -- (e3) node[midway, fill=white] {$\porel$};
          \draw[-stealth, \pocolor] (e4) -- (e5) node[midway, fill=white] {$\porel$};
          \draw[-stealth, dashed]  (e1) -- (e5) node[midway, fill=white] {$\rfrel$};
          \draw[-stealth, dashed]  (e3) -- (e4) node[midway, fill=white] {$\rfrel$};
          \draw[-stealth ,dotted] (e2) to [bend left] (e1) node[yshift=-0.3cm,xshift=-0.6cm,fill=white] {$\corel_x$};
      \end{tikzpicture} }
      \caption{A run $\run$ of $\rmachine$.}
      \label{fig:run_not_wra}
   \end{subfigure}
\caption{A machine that produces an execution graph violating the $\wrasem$ semantics. }
\label{fig:not_wra}
\end{figure}

\begin{proposition}\label{prop:modules_per_memory_model}
  For a register machine $\rmachine$,  $\rmachine\models\wrasem$ iff running Algorithm \ref{algo:wra_pseudo} %on the rules of  Figure \ref{fig:ghost_reads}, \ref{fig:mismatched_variables}, and \ref{fig:rules_wra}
   on $\rmachine$
   flags no $\failconf$ configurations.
  \end{proposition}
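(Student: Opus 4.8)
The plan is to invoke Proposition~\ref{prop:algo_RA_three_parts}, which already states that $\rmachine\models\wrasem$ iff its three conditions hold, and then argue that the three modules of the algorithm decide these three conditions. Conditions~(1) and~(2) are decided by the ghost-read and mismatched-var modules via (backward) reachability, with correctness shown in Appendices~\ref{app:ghost_reads} and~\ref{app:mismatch_variables}. So the statement reduces to: assuming conditions~(1) and~(2) hold, Algorithm~\ref{algo:wra_pseudo} flags $\failconf$ iff condition~(3) fails, i.e.\ iff some $\run\in\runsetof\rmachine$ has a write event $\wevent$ in $\mkegraphof\run$ for which $\mkrelof{\wtype\land\varattrof\wevent}\cdot\hbrel\cdot\wevent\cdot\hbrel\cdot\invmkrelof\rfrel$ is cyclic. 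It is convenient to use Lemma~\ref{lem:RA_to_diff_RA} and reason about differentiated runs, for which $\mkegraphof\run$ is uniquely determined and, crucially, the value currently stored in a register pins down which write event any read of that register reads from; this is what makes the intended readings of the $\visible$ and $\hidden$ tuples well defined.

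\textbf{Soundness.} First I would prove, by induction on the stage of the least fixpoint of the rule system of Fig.~\ref{fig:rules_wra}, that every tuple the algorithm adds is sound in the sense described in Section~\ref{sec:hidden_reads}: a tuple $\visibleof{\qstate}\areg\thread\xvar{}$ witnesses a run from state $\qstate$ along which some event of $\thread$ reaches, via $\tclosureof{\mkrelof{\porel\cup\rfrel}}$, a read on $\xvar$ from register $\areg$ whose value equals the contents of $\areg$ at $\qstate$; a tuple with a register $\breg$ in the middle position records the weaker ``pending'' variant, awaiting an input of the value that a pending read of $\breg$ will deliver; and $\hidden$ is the same statement with, in addition, a write event on $\xvar$ interposed on the $\hbrel$-path. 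The induction is a rule-by-rule check: $\largebluecircled{1}$ is immediate from a read transition; $\largebluecircled{2}$ and $\largebluecircled{3}$ create and then discharge pending expectations; $\largebluecircled{4}$--$\largebluecircled{6}$ promote a $\visible$ path to a $\hidden$ one (prepending an interposed $\xvar$-write, a chain of $\hbrel$-edges, or a pending step); $\largebluecircled{8}$ and $\largebluecircled{8'}$ are transparency (prepending a transition that touches neither the tracked register nor the tracked thread); and $\largebluecircled{9}$ is weakest precondition across a copy transition. When rule $\largebluecircled{7}$ fires on $\iqstate0\mkrelof{\wtype\land\thread\land\xvar\land\areg}\iqstate1$ with $\hiddenof{\iqstate1}\areg{\thread\text{ or }\areg}\xvar{}$ present, prepending that write transition to the witnessing run yields a run from $\iqstate0$ in which the prepended write $\wevent$ is read by the target read, with a same-variable write lying $\hbrel$-between them---exactly the forbidden cycle. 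Since every state is reachable after the preprocessing remarked on in Section~\ref{rm:subsection}, we may precede this run by an initialized prefix and obtain $\run\in\runsetof\rmachine$ falsifying condition~(3).

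\textbf{Completeness.} Conversely, take a differentiated run $\run$ with a cycle, witnessed by a source write $\iwevent{s}$, an interposed write $\iwevent{i}$ on the same variable, and a target read $\revent$, with $\iwevent{s}\;\tclosureof{\mkrelof{\porel\cup\rfrel}}\;\iwevent{i}\;\tclosureof{\mkrelof{\porel\cup\rfrel}}\;\revent$ and $\iwevent{s}\mkrelof\rfrel\revent$. I would walk $\run$ backwards from the transition producing $\revent$ and show, by induction on suffixes, that the algorithm accumulates the tuples that track this suffix: rule $\largebluecircled{1}$ seeds the $\visible$ tuple at $\revent$; each $\porel$/$\rfrel$-edge of the two $\hbrel$-chains, each irrelevant transition, and each copy transition is absorbed by $\largebluecircled{2}$, $\largebluecircled{3}$, $\largebluecircled{8}$, $\largebluecircled{8'}$, $\largebluecircled{9}$; crossing $\iwevent{i}$ upgrades the $\visible$ tuple to $\hidden$ via $\largebluecircled{4}$--$\largebluecircled{6}$; and on reaching the transition producing $\iwevent{s}$ the current $\hidden$ tuple matches the premise of $\largebluecircled{7}$, so $\failconf$ is flagged. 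Monotonicity of the rules and finiteness of the tuple space (Section~\ref{sec:hidden_reads}) ensure that every derivable tuple---in particular the one that triggers $\largebluecircled{7}$---is actually produced before the algorithm stops.

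\textbf{Expected main obstacle.} The delicate part is the completeness direction, and within it the backward bookkeeping of how an $\hbrel$-path decomposes edge by edge along the run. Every $\rfrel$-edge has to be carried as a ``pending register'' obligation (a register in the middle component) that is discharged only once the matching input transition is met further back; keeping the register substitutions consistent through interleaved copy transitions (rule $\largebluecircled{9}$) and transparency steps, and certifying that the write the algorithm finally latches onto in $\largebluecircled{7}$ is genuinely the one $\revent$ reads from \emph{and} lies $\hbrel$-before the interposed write (so the interposed write truly sits on the cycle), is where the argument needs the most care. The differentiated-run assumption, together with the invariant that the tracked register value determines the reading write, is precisely what makes this go through.
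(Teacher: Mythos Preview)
Your proposal is correct and follows essentially the same route as the paper: soundness by induction on the derivation of tuples (rule-by-rule), completeness by fixing a minimal differentiated violating run of the shape $\run_1\cdot\wevent\cdot\run_\beta\cdot\wevent'\cdot\run_\alpha\cdot\revent$ and walking it backward to show the rules fire, with the register-in-middle-position tuples carrying the pending $\rfrel$ obligations exactly as you describe. The paper factors this into more intermediate lemmas (separate statements for register tracking across copies, for the $\visible$ phase over $\run_\alpha$, for the promotion to $\hidden$ at $\wevent'$, and for the final $\failconf$ trigger at $\wevent$), but the skeleton and the identified obstacle match.
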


We will give the detalied proof of this statement
after we discuss an example application of the algorithm.

\begin{example}\label{exp:tuple_addition}
%\label{app:example_WRA}
   Consider the register machine $\rmachine$ shown in Fig. \ref{fig:rmachine_not_wra}, and its run $\run$ shown in Fig. \ref{fig:run_not_wra}.
   This execution graph does not respect $\wrasem$, as reversing the $\rfrel$ edge from $e_1$ to $e_5$, which means that the rules of Figure \ref{fig:rules_wra} should flag $\failconf$. We show the application of the rules for checking $\wrasem$, as stated by Proposition \ref{prop:modules_per_memory_model}.
  % We describe below how some of the rules are matched while the algorithm runs.
\begin{enumerate}
\item
    The edge $ \qstate_2\movesto{\outopof{\phi}{\xvar}{\areg}}\qstate_2$ matches rule
     $\bluecircled{1}$, which creates the tuple \\
      $\visibleof{\iqstate2}{\areg}{\phi}\xvar{}$.
    In this stage this tuple keeps track of the possible path that starts
     in $\qstate_2$, and can perform a read on $\xvar$ (the event $\event_5$)
      that is
     $\tclosureof{\mkrelof{\porel\cup\rfrel}}$ connected to events
      that occur earlier in $\phi$.
\item
    $\qstate_1\movesto{\outopof{\phi}{\yvar}{\creg}}\qstate_2$
     together with $\visibleof{\iqstate2}{\areg}{\phi}\xvar{}$
      matches rule $\bluecircled{2}$, which in turn adds the tuple
   $\visibleof{\iqstate1}{\areg}{\creg}\xvar{}$.
    This event is relevant to the $\tclosureof{\mkrelof{\porel\cup\rfrel}}$
     path in the execution graph that we are keeping track of.
      Namely, any thread that will input the value that will be
       stored in $\creg$ when reaching $\qstate_1$ will gain a
        $\tclosureof{\mkrelof{\porel\cup\rfrel}}$ path to $\event_5$.
\item
    $\qstate_1\movesto{\inopof{\theta}{\yvar}{\creg}{}}\qstate_1$
    together with $\visibleof{\iqstate1}{\areg}{\creg}\xvar{}$ match
    rule $\bluecircled{3}$, adding the tuple
    $\visibleof{\iqstate1}{\areg}{\thread}\xvar{}$.
    Here, we see that indeed $\thread$ performs the input that was marked
     by the tuple in $\qstate_1$.
    Thus, as can be confirmed at the execution graph in Fig.
     \ref{fig:not_wra}, an $\rfrel$ edge now relates $\thread$
     to $\event_5$.%, and the relevant tuple update takes place.
    %This particular event is not relevant to the $\tclosureof{\mkrelof{\porel\cup\rfrel}}$ path in the execution graphs that we are keeping track of, and thus it just propagates backwards to $\qstate_3$ the information stored for $\qstate_4$.
\item
    $\qstate_1\movesto{\inopof{\theta}{x}{\breg}{}}\qstate_1$
    together with $\visibleof{\iqstate1}{\areg}{\thread}\xvar{}$
     matches rule $\bluecircled{4}$, which adds the tuple
    $\hiddenof{\iqstate1}{\areg}{\thread}\xvar{}$.
    This update now initializes a $\hidden$-type tuple.
    This is because now not only we have confirmed the
     $\tclosureof{\mkrelof{\porel\cup\rfrel}}$ path to $\event_5$,
      but also we have observed an input on the thread $\thread$
       for $\xvar$, which means that all other prior input events
       that will take place on it, should be ``hidden'' from $\event_5$.
\item
    $\qstate_0\movesto{\inopof{\theta}{\xvar}{\areg}{}}\qstate_1$
    together with   $\hiddenof{\iqstate1}{\areg}{\thread}\xvar{}$
    matches rule $\bluecircled{7}$, and produces the
    $\failconf$ flag.
\end{enumerate}
%At which point the algorithm concludes that the register machine does not guarantee $\wrasem$.
\end{example}
Example \ref{exp:tuple_addition} only demonstrates the functionality of rules $\bluecircled1$ ,$\bluecircled2$, $\bluecircled3$ ,$\bluecircled4$, and $\bluecircled{7}$.
%We invite the reader to try applying different kinds of rules on the register machine of Section \ref{sec:overview}, which was made for this purpose, while always remember to start \textit{from the end} of an execution, and go backwards.
The rest of the rules capture different ways of how paths can be propagataed to new states
in a register machine. The exact function of each rule will be
demonstrated in the following proof of correcnes.

Before proving the final algorithm, we will prove the following lemma:

\begin{lemma}
For register machine $\rmachine$ and run $\run\in\runsetof\rmachine$, with $  \tuple{\initconf,\egraph_{\emptyset}} \cmovesto{\run}  \tuple{\conf_1,\egraph} $, such that $\mkrelof{\wtype}\cdot\hbrel\cdot\wevent\cdot\hbrel\cdot\invmkrelof\rfrel$ is cyclic in $\egraph$, then there exist $\run_1,\run_2$, such that $\run = \run_1 \cdot \run_2$ and $\run_1 = \run_0\cdot \revent$, and for any prefix of $\run_0$, the corresponding execution graph is acyclic on $\mkrelof{\wtype}\cdot\hbrel\cdot\wevent\cdot\hbrel\cdot\invmkrelof\rfrel$ for all $\xvar$, and $  \tuple{\initconf,\egraph_{\emptyset}} \cmovesto{\run_1}  \tuple{\conf',\egraph_1} $, with $\egraph_1 \not\models \wrasem$.
\end{lemma}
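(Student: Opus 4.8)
The plan is to let $\run_1$ be a shortest prefix of $\run$ whose associated execution graph already violates $\wrasem$, to show that this minimal prefix must end in a read transition, and to read off the statement from minimality. Concretely, since $\mkegraph$ is defined by induction on the length of the run, every prefix $\run'$ of $\run$ has a well-defined egraph $\mkegraphof{\run'}$, obtained as an intermediate stage of the construction of $\egraph = \mkegraphof{\run}$; invoking \cref{lem:RA_to_diff_RA} I may assume $\run$ is differentiated, so the $\addevent$-rules of \cref{add:event:fig} are deterministic and these intermediate egraphs are unambiguous. By hypothesis $\egraph \not\models \wrasem$, so the (finite, nonempty) set of prefixes $\run'$ of $\run$ with $\mkegraphof{\run'} \not\models \wrasem$ contains a shortest element $\run_1$; put $\egraph_1 := \mkegraphof{\run_1}$ and let $\run_2$ be the suffix with $\run = \run_1 \cdot \run_2$. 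By minimality, every proper prefix of $\run_1$ has a $\wrasem$-consistent egraph.

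The key step is to determine the last transition of $\run_1$. It cannot be a silent ($\silentevent$) transition: such a transition does not change the egraph, so deleting it would give a strictly shorter prefix with the same (inconsistent) egraph, contradicting minimality. It cannot be a write transition either. Suppose it were, let $\egraph_0$ be the egraph of the prefix immediately before, and note $\egraph_0 \models \wrasem$ by minimality. Adding a write event $\ppwevent$ (the ``Write Events'' rule of \cref{add:event:fig}) only adds $\porel$-edges, all directed into $\ppwevent$, and leaves $\rfrel$ unchanged; hence $\ppwevent$ is a sink of $\porel \cup \rfrel$, every $\hbrel$-pair newly created has $\ppwevent$ as its second component, and no relation $\mkrelof{\wtype}\cdot\hbrel\cdot\wevent\cdot\hbrel\cdot\invmkrelof\rfrel$ acquires a new pair: completing such a composition needs a fresh $\rfrel$-edge, or a fresh $\hbrel$-edge into a read, or a fresh $\hbrel$-edge into a write playing the role of the middle $\wevent$, none of which is available ($\ppwevent$ is a write with no outgoing $\hbrel$-edge). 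So $\egraph_1 = \egraph_0 \addevent \ppwevent$ would still satisfy $\wrasem$, a contradiction. Therefore the last transition of $\run_1$ is an observable read transition, generating a read event $\revent$, so $\run_1 = \run_0 \cdot \revent$ for the corresponding prefix $\run_0$.

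Finally, $\run = \run_1 \cdot \run_2 = \run_0 \cdot \revent \cdot \run_2$; every prefix of $\run_0$ is a proper prefix of $\run_1$, hence by minimality its egraph is acyclic for $\mkrelof{\wtype}\cdot\hbrel\cdot\wevent\cdot\hbrel\cdot\invmkrelof\rfrel$ over all $\wevent$, i.e. it satisfies $\wrasem$; and $\egraph_1 \not\models \wrasem$ by the choice of $\run_1$, which also gives $\tuple{\initconf,\egraph_{\emptyset}}\cmovesto{\run_1}\tuple{\conf',\egraph_1}$ for the configuration $\conf'$ reached along $\run_1$. The main obstacle is the write-transition case of the key step: one must check, from the precise form of the $\addevent$-rules, that adding a write (equivalently, any non-read) event cannot turn a $\wrasem$-consistent egraph into an inconsistent one, the point being that the ``intervening write'' witness pattern $a \hbrel \wevent \hbrel r$ with $a \rfrel r$ can only be completed by the appearance of a brand-new read $r$.
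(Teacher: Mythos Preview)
The proposal is correct and follows essentially the same approach as the paper: take the minimal violating prefix and argue, via the $\addevent$ rules of \cref{add:event:fig}, that its last transition must be a read because silent and write transitions cannot introduce a $\wrasem$-cycle. Your write-case analysis is more explicit than the paper's (which simply asserts that write/copy updates ``do not create any cycles''), but the underlying idea is identical.
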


\begin{proof}
To prove this we observe the rules of Figure \ref{add:event:fig}. We clearly see that if a prefix $\run_t$ of the run $\run$ the corresponding execution graph $\egraph_t$ contains no cycles and the next event in $\run$ is a write or a copy, then the updates that will take place on the graph do not create any cycles.
Thus the only way to create one is by adding an appropriate read event.
Since we start from the empty execution graph $\egraph_0$, witch of course does not contain cycles, as it contains no events and all the relations are empty, we just define as $\run_1$ the first prefix of $\run$ for which a cycle is formed (since one has to exist in order for $\egraph$ to contain one. Thus the promise holds for $\run_1$.
\end{proof}
In what follows when studying a run $\run$ for which the corresponding execution graph contains a cycle for $\mkrelof{\wtype}\cdot\hbrel\cdot\wevent\cdot\hbrel\cdot\invmkrelof\rfrel$ for some $\xvar$, we will assume that $\run = \run_0\cdot \revent$, and the execution graph corresponding to $\run_0$ has no cycles.
For such a violating run we state the following:

\begin{proposition}\label{prop:spliting_of_abstracted_run}
Assume a register machine $\rmachine$ and $\rmachine\not\models\wrasem$. Then, there exists a differentiated run $\run \in \rmachine$, where
\begin{itemize}
\item  $\run = \run_1\cdot\wevent\cdot\run_2\cdot \revent$, $\wevent = \eventof\wtype{\thread_1}\xvar\val$, and $\revent = \eventof\rtype{\thread_2}\xvar\val$ (possibly $\thread_1 = \thread_2$), and
%\item $ \egraph_0 = \tuple{\emptyset,\emptyset,\emptyset,\emptyset} \movesto{\run}\egraph$, where $\egraph_0$ is the empty execution graph and $\egraph$ has a cycle.

\item  $\tuple{\initconf,\egraph_{\emptyset}} \cmovesto{\run}  \tuple{\conf_1,\egraph} $,
%where $\egraph_{\emptyset} =\tuple{\emptyset,\emptyset,\emptyset,\emptyset}$, and
where $\egraph = \tuple{\events,\porel,\rfrel,\corel}$, and $\conf_1 = \tuple{\qstate,\regvalmapping}$, and
% where $\mystate\in\stateset$ defines the state of the machine, and $\funtype\regvalmapping\regset\valset$ defines the value $\regvalmappingof\areg$ of each register $\areg\in\regset$..
%\item $\tuple{\conf_0,\dgraph_0}\racmmovesto\rmachine{\run_1\cdot\wevent\cdot\run_2}
%      \tuple{\conf_1,D_1}\racmmovesto\rmachine{\revent}
%      \tuple{\conf_f,\faildgraph}$, with $\conf_0$ being the initial configuration of $\rmachine$ and $D_0$ being the empty dgraph, and
      \item $\mkrelof{\wtype}\cdot\hbrel\cdot\wevent\cdot\hbrel\cdot\invmkrelof\rfrel$ contains a cycle
%      \item  $\failconf \in \qstate$.
\end{itemize}

\end{proposition}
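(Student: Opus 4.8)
The plan is to extract from $\rmachine\not\models\wrasem$ a \emph{differentiated} run whose prefix, cut at the first moment a $\wrasem$-cycle appears, already has the required shape, and then to read the decomposition off the structure of that cycle. First I would invoke \cref{lem:RA_to_diff_RA} to obtain a differentiated $\run^{\circ}\in\runsetof\rmachine$ with $\mkegraphof{\run^{\circ}}\not\models\wrasem$, and then use the (unlabelled) lemma stated just above this proposition together with the convention recalled right before it: appending a write or a copy event to an acyclic egraph cannot create such a cycle (a freshly added write has no $\porel\cup\rfrel$-successors, so it cannot occupy a middle $\hbrel$-position, and a copy does not change the egraph at all), so we may pass to a prefix $\run=\run_{0}\cdot\revent$ of $\run^{\circ}$ where $\revent$ is a read event, $\mkegraphof{\run_{0}}$ is acyclic in $\mkrelof{\wtype}\cdot\hbrel\cdot\pwevent\cdot\hbrel\cdot\invmkrelof\rfrel$ for every write $\pwevent$ and every variable, and $\egraph_{\run}:=\mkegraphof{\run}\not\models\wrasem$. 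As an initialized prefix of a differentiated run, $\run$ is itself an initialized differentiated run of $\rmachine$; let $\conf_{1}=\tuple{\qstate,\regvalmapping}$ be its final configuration.

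Next I would analyse the cycle that adding $\revent$ creates. From $\egraph_{\run}\not\models\wrasem$ there is a write $\pwevent$ of $\egraph_{\run}$ and, unfolding $\mkrelof{\wtype\land\varattrof{\pwevent}}\cdot\hbrel\cdot\pwevent\cdot\hbrel\cdot\invmkrelof\rfrel$, a write $\wevent$ with $\varattrof{\wevent}=\varattrof{\pwevent}$ and a read $r$ such that $\wevent\hbrel\pwevent\hbrel r$ and $\wevent\mkrelof\rfrel r$. Since $\mkegraphof{\run_{0}}$ has no such configuration and the only new event in $\egraph_{\run}$ is $\revent$, and since in this relation a read can only occupy the position of $r$ (the endpoint consumed by $\invmkrelof\rfrel$) — $\wevent$ and $\pwevent$ are writes, and $\revent$, having no outgoing edges, cannot be $\wevent$ — we must have $r=\revent$. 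Hence $\revent$ reads from $\wevent$, so $\varattrof{\revent}=\varattrof{\wevent}=:\xvar$, $\valattrof{\revent}=\valattrof{\wevent}=:\val$, and $\wevent\hbrel\pwevent\hbrel\revent$ with $\pwevent$ a write on $\xvar$; in the proposition's notation this intervening write $\pwevent$ is the one named $\wevent$ inside the displayed relation.

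Finally I would turn this into the decomposition. The order in which a run emits events extends $\porel\cup\rfrel$ (a read draws its value from a write already present, and each thread's events appear in program order), so $\wevent$, being $\rfrel$-before $\revent$, occurs strictly before $\revent$ in $\run$; and $\wevent$ is not the dummy initial event $\xinitevent{\xvar}$, because $\rmachine$ has already passed the ghost-read and mismatched-variable modules and $\run$ is differentiated (hence no run-write has value $\zero$), so the value $\val$ returned by $\revent$ is actually produced by a Write operation of the run on variable $\xvar$, which by differentiatedness is exactly $\wevent$. Thus $\wevent=\eventof\wtype{\thread_{1}}\xvar\val$ with $\thread_{1}:=\threadattrof{\wevent}$, so $\run=\run_{1}\cdot\wevent\cdot\run_{2}\cdot\revent$ with $\revent=\eventof\rtype{\thread_{2}}\xvar\val$ and $\thread_{2}:=\threadattrof{\revent}$ (possibly $\thread_{1}=\thread_{2}$); taking any per-variable linear extension of the partial coherence order of $\egraph_{\run}$ gives a total $\corel$ and $\egraph=\tuple{\events,\porel,\rfrel,\corel}$ in which the self-loop found above still holds (the choice of $\corel$ is immaterial, as $\wrasem$ does not mention $\corel$), which is the required cycle.

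I expect the middle step to be the main obstacle: one has to pin down the exact shape of $\mkrelof{\wtype}\cdot\hbrel\cdot\pwevent\cdot\hbrel\cdot\invmkrelof\rfrel$ to argue that the freshly added read can only be the $\invmkrelof\rfrel$-endpoint, and one must not miss the degenerate possibility that the self-loop write is the initial event $\xinitevent{\xvar}$ — which is precisely why the argument genuinely needs the earlier ghost-read and mismatched-variable guarantees on top of \cref{lem:RA_to_diff_RA}.
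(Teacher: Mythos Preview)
Your approach is essentially the paper's: combine \cref{lem:RA_to_diff_RA} with the preceding unlabelled lemma to land on a minimal differentiated prefix $\run_0\cdot\revent$, then read the decomposition off the unique new self-loop that appending $\revent$ creates. The paper does not spell out a standalone proof here; it states the proposition and immediately proceeds to refine it into the next proposition (with the intervening write $\wevent'$), so your write-up is in fact more detailed than the paper's.

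One small imprecision to fix: ``$\run$ is differentiated (hence no run-write has value~$\zero$)'' does not follow from the definition of differentiated, which only requires pairwise-distinct write values per variable, not that $\zero$ is avoided. What you actually need is that the differentiated run produced in the proof of \cref{lem:RA_to_diff_RA} (values tagged with a fresh step-counter) has all write values distinct from~$\zero$; alternatively, invoke data independence once more to relabel away any write carrying~$\zero$. With that, your ghost-read/mismatched-variable argument goes through: the value returned by $\revent$ is that of a \emph{run}-write on $\xvar$, hence nonzero, so $\wevent\neq\xinitevent{\xvar}$ and the decomposition is as claimed. Your identification of the third bullet's ``$\wevent$'' as the intervening write (what the paper later calls $\wevent'$) is the intended reading; the symbol is overloaded in the proposition's statement.
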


Intuitively, this states that two things need to happen in order for a cycle to occur.
\begin{itemize}
\item A $\wevent$~needs to occur in some thread $\thread_1$.
\item This $\wevent$~then must not available for a thread $\thread_2$ to read due to other events that have occurred already.
\item The value of $\wevent$~needs to be outputted for a read event $\revent$ after said ``hiding'' on $\thread_2$.
\end{itemize}

We now focus on the second condition of the above list, i.e. we will try to see the exact situation upon which a given $\wevent$ is not available to be read by a thread, because a cycle would be created. We refer to  Figure \ref{add:event:fig}, which shows us what kind of updates would take place in a graph where this forbidden read event takes place. In this case $\pevent = \wevent$. There we see:

\begin{itemize}
\item The updates to the $\porel$ and $\rfrel$ relations cannot create a cycle, as the previous execution graph is acyclic (not that $\run$ is assumed to be minimal), and the added edges add one-directional edges between the existing graph and the added new node for $\revent$.
\item The cycle that is created upon updating the execution graph must have come from the update rule for $\corel$, which means that there must be at least on more $\wevent'$ in $\run$, on variable $\xvar$ that satisfies the relevant update condition.
\end{itemize}

We therefore obtain one extra piece of information for this minimal, differentiated, violating run $\run$, and this is that it contains another write event $\wevent'$ on variable $\xvar$. We now focus on determining which of the $\wevent$ and $\wevent'$ precedes the other in $\run$.

We assume $\wevent'$ is earlier in $\run$.
 Since $\run$ is minimal and thus there are no cycles in $\mkrelof{\wtype}\cdot\hbrel\cdot\wevent\cdot\hbrel\cdot\invmkrelof\rfrel$ until the event $\revent$ occurs.
 Therefore, we have that the addition of the $(\wevent', \wevent)$ that the the update rule for $\corel$ dictates would not cause a cycle (as all old and added edges are pointing in a consistent time-order towards the $\revent$.
 % TODO: MAYBE A DRAWING IS NICE HERE.

 Thus it must be the case that $\run = \run_1\cdot\wevent\cdot\run_2\cdot \revent$, and $\run_2 = \run_{\beta}\cdot \wevent' \cdot \run_{\alpha}$.
  The naming here aims to relate the $\hidden$ tuples that will be produced from Fig.  \ref{fig:rules_wra} to the $\run_{\beta}$ part of the run, and $\visible$ to the $\run_{\alpha}$ part.
  Based on this partition of $\run$ and the fact that the $\revent$ event triggers an update of $\corel$ that cases a cycle, we state the following:

\begin{proposition}\label{prop:splitting_violating_run}
Assume a register machine $\rmachine$ and $\rmachine\not\models\wrasem$. Then, there exists a minimal differentiated run $\run \in \rmachine$, where
\begin{itemize}
\item  $\run = \run_1\cdot\wevent\cdot\run_{\beta}\cdot \wevent' \cdot\run_{\alpha}\cdot \revent$, $\wevent = \eventof\wtype{\thread_1}\xvar\val$, and $\revent = \eventof\rtype{\thread_2}\xvar\val$, $\wevent' = \eventof\wtype{\thread_3}\xvar-$ (possibly $\thread_1 = \thread_2 = \thread_3$), and
\item $\wevent' \mkrelof{\porel \cup \rfrel} \revent $, and
\item $\revent \mkrelof{\invmkrelof\rfrel} \wevent \mkrelof{\porel \cup \rfrel }\wevent'$
\end{itemize}
\end{proposition}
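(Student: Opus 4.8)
The plan is to make rigorous the informal argument given in the paragraph immediately preceding the statement, taking as input Proposition~\ref{prop:spliting_of_abstracted_run} together with the minimal-run convention set up just before it.

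First I would fix, via Proposition~\ref{prop:spliting_of_abstracted_run}, a differentiated run $\run = \run_0 \cdot \revent$ with $\mkegraphof{\run_0}$ acyclic for the weak-read-coherence relation and $\egraph := \mkegraphof\run = \tuple{\events,\porel,\rfrel,\corel}$ containing a cycle in $\mkrelof{\wtype\land\xvar}\cdot\hbrel\cdot\wevent_0\cdot\hbrel\cdot\invmkrelof\rfrel$ for some write $\wevent_0$ on a variable $\xvar$. The last event $\revent$ is necessarily a read: a freshly added write event is, by the Write rule of \cref{add:event:fig}, a sink (it only receives new $\porel$-edges), so its addition cannot create a cycle; for the same reason $\revent$ is a sink for $\hbrel$. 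Since $\run$ is differentiated, $\revent$ reads from a unique write, which I take to be $\wevent$.

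The key step is a short degeneration lemma: any cycle $e_0 \to e_1 \to \cdots \to e_k = e_0$ in $R := \mkrelof{\wtype\land\xvar}\cdot\hbrel\cdot\wevent_0\cdot\hbrel\cdot\invmkrelof\rfrel$ contains a self-loop. Indeed, the $R$-step into a cycle node $e_{i+1}$ provides a read $c$ with $\wevent_0 \hbrel c$ and $e_{i+1}\mkrelof\rfrel c$, the $R$-step out of $e_{i+1}$ provides $e_{i+1}\in\mkrelof{\wtype\land\xvar}$ with $e_{i+1}\hbrel\wevent_0$, and composing gives $(e_{i+1},e_{i+1})\in R$. Because $\revent$ is an $\hbrel$-sink it can occur in the cycle only as an internal read $c$, so we may take $c := \revent$ and $e_{i+1}$ the write $\revent$ reads from, i.e.\ $e_{i+1} = \wevent$. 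Setting $\wevent' := \wevent_0$ we obtain writes $\wevent,\wevent'$ on $\xvar$ and the read $\revent$ with $\wevent\mkrelof\rfrel\revent$, $\wevent\,\hbrel\,\wevent'$ and $\wevent'\,\hbrel\,\revent$ — the second and third bullets, with the one-step relations in the statement read as $\hbrel$. Reading off the attributes of the three events gives $\wevent = \weventof{\thread_1}\xvar\val$, $\revent = \reventof{\thread_2}\xvar\val$, $\wevent' = \weventof{\thread_3}\xvar{-}$; and since $\hbrel = \tclosureof{\mkrelof{\porel\cup\rfrel}}$ is contained in the emission order of $\run$, the relations $\wevent\,\hbrel\,\wevent'$ and $\wevent'\,\hbrel\,\revent$ place $\wevent$ before $\wevent'$ before $\revent$ in $\run$, whence $\run = \run_1\cdot\wevent\cdot\run_{\beta}\cdot\wevent'\cdot\run_{\alpha}\cdot\revent$, as required. (Equivalently, as in the informal text: if $\wevent'$ occurred before $\wevent$, the $\partcorel_{\xvar}$-edge $(\wevent',\wevent)$ added when $\revent$ is processed would point forward in time, like every $\porel$- and $\rfrel$-edge, and no cycle could arise.)

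The step I expect to be the main obstacle is the bookkeeping, not any single deduction: one must pin down that the witnessing read is exactly the final event of the minimal run, and keep the roles of the write-read-from, the intervening write $\wevent'$, and the parameter write $\wevent_0$ straight — in Proposition~\ref{prop:spliting_of_abstracted_run} the symbol $\wevent$ is used both for the write $\revent$ reads from and for the write parametrising the weak-read-coherence relation, and these must in fact be distinct (otherwise $\hbrel$ would be cyclic). Making this precise amounts to tracking which $\addevent$-step of \cref{add:event:fig} first turns an acyclic egraph into a cyclic one — only the $\partcorel_{\xvar}$-update of the Read rule can — and, if necessary, re-choosing the minimal run and the parameter write so that all these identifications hold simultaneously; once that is done, the degeneration lemma and the order argument are routine.
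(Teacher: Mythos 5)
Your proposal is correct, and it reaches the statement by a genuinely different route than the paper's. The paper's own justification is the prose between Proposition~\ref{prop:spliting_of_abstracted_run} and the statement, and it is operational: it inspects the $\addevent$ rules of \cref{add:event:fig}, notes that the $\porel$- and $\rfrel$-updates only add edges into the freshly appended sink $\revent$ and so cannot close a cycle, concludes that the cycle must come from the $\partcorel_{\xvar}$-update of the Read rule---whose side condition is exactly the existence of a second write $\wevent'$ on $\xvar$ with $\wevent'\tclosureof{\mkrelof{\porel\cup\rfrel}}\revent$---and then fixes the relative order of $\wevent$ and $\wevent'$ by contradiction (if $\wevent'$ preceded $\wevent$, the added edge $(\wevent',\wevent)$ would point forward in emission order like every other edge, so no cycle could arise). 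Your argument is declarative: the degeneration lemma shows algebraically that any cycle of $\mkrelof{\wtype\land\xvar}\cdot\hbrel\cdot\wevent_0\cdot\hbrel\cdot\invmkrelof\rfrel$ collapses to a self-loop at the write the witnessing read reads from, minimality forces that read to be $\revent$, and the temporal order then falls out of $\hbrel$ being contained in emission order, with no case split. Your route buys rigor exactly where the paper is loosest: it works directly with the acyclicity condition of Definition~\ref{def:egraph_satisfies_memory_model} rather than with the $\partcorel$-cycle surrogate, and it cleanly separates the parameter write $\wevent_0=\wevent'$ from the write $\wevent$ that $\revent$ reads from, a distinction the paper's reuse of the symbol $\wevent$ obscures. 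What the paper's route buys is that it exhibits the concrete $\partcorel$-edge $(\wevent',\wevent)$ that the subsequent lemmas of the appendix manipulate. Your resolutions of the two ambiguities in the statement (reading the one-step relations as $\hbrel$, and $\rasem$ as $\wrasem$) match the paper's evident intent.
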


We now prove a short lemma regarding where the value of $\revent$ is  stored at.

\begin{lemma}\label{lem:rules_store_current_register}
For any state $\qstate$ of the register machine $\rmachine$, that was accessed during $\run_{\beta}\cdot \wevent'\cdot \run_{\alpha}$, we have that there is some tuple in one of the $\visible \text{ or }  \hidden$ data structures of $\tuple{\qstate,\pqstate}$, with contents $\tuple{-,\creg}$ where the value of $\revent$ was in register $\creg$ when $\qstate$ was accessed.
\end{lemma}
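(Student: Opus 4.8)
The plan is to prove \cref{lem:rules_store_current_register} by a backward induction along the run segment $\run_{\beta}\cdot\pwevent\cdot\run_{\alpha}$, exploiting the fact that \cref{algo:wra_pseudo} runs to saturation: it suffices to exhibit, for each state visited on that segment, one applicable rule of \cref{fig:rules_wra} that produces the wanted tuple at that state from the tuple already produced at its run-successor. First I would fix notation. By \cref{prop:splitting_violating_run} write $\run=\run_{1}\cdot\wevent\cdot\run_{\beta}\cdot\pwevent\cdot\run_{\alpha}\cdot\revent$, with $\revent=\eventof\rtype{\thread_{2}}{\xvar}{\val}$ reading from $\wevent=\eventof\wtype{\thread_{1}}{\xvar}{\val}$, and let $\qstate_{0}\movesto{\op_{1}}\qstate_{1}\movesto{\op_{2}}\cdots\movesto{\op_{m}}\qstate_{m}$ be the states and operations traversed during $\run_{\beta}\cdot\pwevent\cdot\run_{\alpha}$, where $\qstate_{0}$ is the configuration state reached immediately after $\wevent$, $\qstate_{m}$ the state immediately before $\revent$, and $\revent$ is performed on the transition $\qstate_{m}\movesto{\revent}\qstate_{m+1}$.

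Next I would pin down the register that the tuples should be tracking. Since $\run$ is differentiated, $\wevent$ is the unique write event on $\xvar$ producing the value $\val$; and since the mismatched-variable module (\cref{fig:mismatched_variables}) has already succeeded, the value $\val$ found in $\areg$ when $\revent$ fires was placed there by a chain of copies originating at $\wevent$. That copy-chain is present at every configuration along $\run_{\beta}\cdot\pwevent\cdot\run_{\alpha}$, so at each such configuration at least one register holds $\val$. Tracing this chain backwards gives the canonical witness: $\creg_{m}:=\areg$, the register from which $\revent$ reads; and for $i<m$, if $\op_{i+1}$ is a copy $\ceventof{\areg}{\breg}$ I set $\creg_{i}:=\breg$ when $\areg=\creg_{i+1}$, and $\creg_{i}:=\creg_{i+1}$ otherwise (in particular for every non-copy $\op_{i+1}$). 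A short reverse induction on the semantics of \cref{fig:semantics_exec_graphs}, again invoking the mismatched-variable guarantee, confirms that $\creg_{i}$ holds $\val$ at $\qstate_{i}$; the point needing care there is that $\creg_{i+1}$ is never overwritten on the step into $\qstate_{i+1}$, which would corrupt the chain.

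The heart of the argument is then a reverse induction on $i=m,m-1,\dots,0$ proving that some $\visible$- or $\hidden$-tuple at $\qstate_{i}$ has register component $\creg_{i}$. For the base case, rule $\bluecircled{1}$ applied to $\qstate_{m}\movesto{\revent}\qstate_{m+1}$ produces $\visibleof{\qstate_{m}}{\areg}{\thread_{2}}{\xvar}{}$, and $\areg=\creg_{m}$. For the inductive step I would case-split on $\op_{i+1}$: if it is a copy $\ceventof{\areg}{\breg}$, rule $\bluecircled{9}$ carries the tuple back while applying $\{\sfrac{b}{a}\}$ to the register component, which by construction sends $\creg_{i+1}$ to $\creg_{i}$; if it is a read or a write that is not one of the ``active'' events of the $\porel\cup\rfrel$-chain running from $\wevent$ and $\pwevent$ to $\revent$ (and hence does not target $\creg_{i+1}$), the transparency rules $\bluecircled{8}$ and $\bluecircled{8'}$ propagate it with the register component unchanged; and if it is one of those active events, the matching rule among $\bluecircled{2}$--$\bluecircled{6}$ fires, each of which leaves the register component intact and only alters the thread/register component (and may turn a $\visible$-tuple into a $\hidden$-tuple, which the lemma permits). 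Since a register machine has only these three operation shapes and \cref{algo:wra_pseudo} computes a fixpoint, every step of this backward walk is in fact executed by the algorithm; at $i=0$ this yields the claim for $\qstate_{0}$, and it holds a fortiori for every intermediate $\qstate_{i}$.

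The step I expect to be the main obstacle is the inductive case-split: one has to verify that the side-conditions of rules $\bluecircled{2}$--$\bluecircled{6}$ (namely $\xvar\neq\yvar$, $\areg\neq\breg$, and the required agreement between the thread/register component of the stored tuple and the operation on the edge) are met exactly by the transitions that realise the $\porel\cup\rfrel$-chain guaranteed by \cref{prop:splitting_violating_run}, and that no transition on the segment can ``strand'' the tuple. Once that is in place, the remaining bookkeeping --- in particular that the $\visible\to\hidden$ switch occurs precisely at $\pwevent$, matching the naming chosen in \cref{prop:splitting_violating_run} --- is routine, though it is where most of the careful case analysis lives.
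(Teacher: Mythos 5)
Your proposal is correct and follows essentially the same route as the paper: a backward traversal of $\run_{\beta}\cdot\pwevent\cdot\run_{\alpha}$ starting from rule $\bluecircled{1}$ at the read, observing that the register component of the tuple changes only under rule $\bluecircled{9}$ for copies (mirroring how the value of $\revent$ migrates between registers) and is preserved by all other applicable rules, while an input on the currently-tracked register cannot occur on this segment since $\revent$ must still read $\wevent$'s value. Your version merely makes explicit, as a reverse induction with a named register chain $\creg_i$, what the paper's proof states as a short case analysis.
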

We prove this for the rules of Figure \ref{fig:rules_wra}.
\begin{proof}
The proof of this lemma is relatively straightforward. We see that the value of $\revent  = \outopof\thread\xvar\areg$  is initially at register  $\areg$. The only reason this will change while traversing $\run$ backwards is either a copy command, or a input on this (or the current) register.

In the case of a copy command we see that indeed the register identifying the $\visible$ and $\hidden$ tuples that the state is assigned to are updated accordingly through rule $\bluecircled{9}$, which handles all possible ways that the value of $\areg$ would have originated in a different register.

In the case of an input command on the current register $\creg$  we see that now the value of that is outputted in $\outopof\thread\xvar\areg$ is no longer in \textit{any} register, and we observe that accordingly no rules propagate a tuple to the backwards reachable state.
\end{proof}
In essence the above lemma captures that the rules of  Figure \ref{fig:rules_wra} maintain at least as much information as those of  Figure \ref{fig:mismatched_variables}.
We highlight that in all the above discussions we have $\revent$ must reading from $\wevent$.
This should be clear as it is the main assumption for the existence of the violating run and the relevant run fragments.
Therefore we can safely assume, combining this with Lemma \ref{lem:rules_store_current_register} that throughout $\run_{\beta}\cdot \wevent' \cdot\run_{\alpha}\cdot$ any activated transition is never writing (by input) a new value to the current register stored at the tuples of the state reached by the transition.

\begin{corollary}\label{cor:run_does_not_overwrite}
For the above violating run and partition:
\begin{itemize}
\item  $\wevent$ is the latest (rightmost) input on the register that the value of $\revent$ is stored at during that event of the run.
\item All input events after $\wevent$ in $\run$ target registers different than the one where the value of $\revent$ is stored at during that event of the run.
%\item For all states $\qstate$, with tuples $\tuple{\qstate, -}\in set$ associated to them for some set $\alpha
\end{itemize}

\end{corollary}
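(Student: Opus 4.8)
The plan is to read the corollary off \cref{lem:rules_store_current_register}, the differentiatedness of $\run$ (which we may assume by \cref{lem:RA_to_diff_RA}), and the shape of the rules of \cref{fig:rules_wra}. Let the write operation producing $\wevent$ store the value $\val$ into some register $\areg_1$, and let $\revent$'s operation be $\outopof{\thread_2}{\xvar}{\areg}$, so that at the instant $\revent$ fires the register $\areg$ contains $\val$. Since $\run$ is differentiated, $\val$ is fresh: it is the written value of no input event of $\run$ other than $\wevent$, so $\wevent$ is the unique input event that ever places $\val$ into a register (copy events may relay $\val$ between registers, but never create it anew).

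First I would pin down what ``the register that the value of $\revent$ is stored at'' means along the fragment $\run_{\beta}\cdot\wevent'\cdot\run_{\alpha}$. Reading $\run$ backwards from $\revent$, \cref{lem:rules_store_current_register} yields, for every state $\qstate$ visited on this fragment, a $\visible$- or $\hidden$-tuple associated with $\qstate$ whose register component $\creg_\qstate$ holds $\val$ at the moment $\qstate$ is visited; call $\creg_\qstate$ the \emph{tracked} register there. The structural observation is that, along this backwards walk, the tracked register is rewritten only by the copy rule $\bluecircled{9}$ --- which, on crossing a transition $\ceventof{\areg'}{\breg'}$, replaces the tracked register $\areg'$ by $\breg'$ --- while the transparency rules $\bluecircled{8}$, $\bluecircled{8'}$ and the remaining read/write rules consistent with a given tuple leave the register component unchanged. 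Equivalently, reading $\run$ forward, the tracked register changes only across copy transitions, and it therefore traces a single well-determined chain of registers running from $\areg_1$ (just after $\wevent$) to $\areg$ (at $\revent$), every link of which contains $\val$.

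Both bullets now follow. For the second, suppose some input event $\event$ occurring after $\wevent$ in $\run$ wrote to the register that is tracked immediately after $\event$; that register would thereafter hold the value written by $\event$, which by differentiatedness is distinct from $\val$, contradicting that the tracked register holds $\val$ at every state of $\run_{\beta}\cdot\wevent'\cdot\run_{\alpha}$. For the first, note that just after $\wevent$ executes the fresh value $\val$ has only just appeared and hence resides in a single register, namely $\areg_1$; so the tracked register at the state immediately following $\wevent$ equals $\areg_1$, which is exactly to say that $\wevent$ is an input into the (then) tracked register, and by the second bullet it is the last input into this chain before $\revent$ consumes it.

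The one step requiring genuine care is the bookkeeping in the second paragraph: checking that the register component carried by the $\visible$/$\hidden$ tuples is \emph{exactly} the register obtained from $\areg$ by undoing, in reverse order along the backwards walk, the copy-renamings of the intervening transitions, and that this invariant is preserved by each of the rules of \cref{fig:rules_wra} (in particular, that none of them silently forks or drops the tracked register on the fragment $\run_{\beta}\cdot\wevent'\cdot\run_{\alpha}$). Granting that correspondence, both bullets follow as above with no further computation.
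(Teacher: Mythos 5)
Your argument is correct and follows the same route the paper takes: the corollary is read off from the fact that $\revent$ reads the (unique, by differentiatedness) value written at $\wevent$, combined with \cref{lem:rules_store_current_register}'s tracking of the register holding that value through the copy chain. The paper gives only a one-sentence justification before stating the corollary; your write-up spells out the same reasoning in more detail, including the key point that an intervening input into the tracked register would overwrite the fresh value and contradict $\revent$'s output.
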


%We make one more assumption in order to continue with our proof:
%It is possible that for in some way $\run_{\alpha}$ contains several

We make the following claims:

\begin{lemma}\label{lem:copy_edges_are_sound}
Let $\run$ be the run of Proposition \ref{prop:splitting_violating_run}, and $\qstate\movesto{
\op}\pqstate$ be a copy $\breg := \creg$ edge,
 accessed during the $\run_{\beta}\cdot \wevent' \cdot\run_{\alpha}$
  part of $\run$.
Then, for any tuple $T\in \status(\pqstate,\pqstate')$ implies there
 exists another tuple
 $T'\in \status(\qstate,\pqstate)$, with $T'$ being the weakest
 register pre-condition with respect to $\breg := \creg$
  (hence, all ocurences of 
 $\breg$ in $T$ are replaced with $creg$ in $T'$). 
\end{lemma}

\begin{proof}
This proof is relatively straightforward. Note that due to Lemma \ref{lem:rules_store_current_register}, the registers $\creg$ and $\breg$ are guaranteed to be exactly the registers where the value that will be outputted in $\revent$ is currently stored at.
We proceed with case analysis on the form of $\op$.

\begin{itemize}
\item if $\op$ overwrites the value of $\breg$,  with $\breg$ the register holding the value of $\revent$ in $\run$. We see the update that will take place in this case is rule $\bluecircled{9}$.
The update to the current register for $\revent$ is correct and the premise holds.
\item if $\op$ overwrites the value of $\areg$, where $\areg$ held the value of some other register that will be outputted later on on the run $\run$ towards $\revent$, then similarly, the activated rule is $\bluecircled{9}$, which update only the relevant tuples accordingly, but not the register for the value of $\revent$, and the statement holds.
\item The copy edge overwrites the value of a register that is neither part of the tuple associated to $\pqstate$, or the register where the value of $\revent$ is stored at.

We see that the only rule applied then is $\bluecircled{9}$, which correctly does not update any of the registers, but does assign the claimed tuples to $\qstate$.
\end{itemize}
\end{proof}

We now have that all copy edges have been proven to soundly propagate whatever information was stored in some state regarding registers, to new states.
We combine the above lemma and Corollary \ref{cor:run_does_not_overwrite} to obtain:
\begin{corollary}\label{cor:register_tuple_not_dangerous_value}
Let $\run$ be the run of Proposition \ref{prop:splitting_violating_run}, and $\qstate$ be a state accessed during the $\run_{\beta}\cdot \wevent' \cdot\run_{\alpha}$ part of $\run$.
If $\areg$ is the secondary register in of any $\status$ tuple, then $\areg$  is not the register where the value of $\revent$ is currently stored at.
\end{corollary}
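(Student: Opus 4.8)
To prove Corollary~\ref{cor:register_tuple_not_dangerous_value} I would combine Lemma~\ref{lem:rules_store_current_register} (which pins down the \emph{primary} register of every $\status$ tuple that accompanies the run $\run$ of Proposition~\ref{prop:splitting_violating_run}), Lemma~\ref{lem:copy_edges_are_sound} (which says a copy transition on the segment $\run_{\beta}\cdot\wevent'\cdot\run_{\alpha}$ moves both register slots of a tracking tuple faithfully and in lock-step), and Corollary~\ref{cor:run_does_not_overwrite} (which says that after $\wevent$ no input event of $\run$ overwrites the register that currently carries the value $\revent$ reads). Write $\creg$ for that ``current register of $\revent$'s value''. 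By Lemma~\ref{lem:rules_store_current_register} this $\creg$ is exactly the primary slot of every tuple tracking $\run$ at a state of the segment, so the claim reduces to: whenever such a tuple has a \emph{register} rather than a thread in its second slot, that register differs from $\creg$.

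The argument would be a downward induction along $\run$, from $\revent$ back towards $\wevent$, on the order in which the tuples tracking $\run$ are generated by the rules of \cref{fig:rules_wra}. The first such tuple is created by rule $\bluecircled1$ from the victim read and carries a thread in its second slot, so the statement holds vacuously at the base. For the inductive step I would case on the transition $e$ of $\run$ being consumed and on the rule that derives the new tuple from the (inductively good) tuple at the successor state. The transparency rules $\bluecircled8$, $\bluecircled{8'}$ and the ``switch to $\hidden$'' rules $\bluecircled3$, $\bluecircled4$ either leave the register slots untouched or overwrite the second slot with a thread, so they are harmless; for a copy transition only rule $\bluecircled9$ applies and Lemma~\ref{lem:copy_edges_are_sound} guarantees it renames both slots uniformly, hence keeps them distinct and keeps the second slot different from the (renamed) register that still carries $\revent$'s value. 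The rules that genuinely place a register into the second slot are $\bluecircled2$ and $\bluecircled5$: both carry the side condition $a\neq b$, and the register $a$ there is the primary one, which by Lemma~\ref{lem:rules_store_current_register} equals $\creg$ at that state; so the freshly inserted register is $\neq\creg$, and Corollary~\ref{cor:run_does_not_overwrite} (no later input to $\creg$) ensures that no subsequent transition of the segment can silently turn it back into $\creg$.

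The one rule whose written form puts the primary register into the second slot is $\bluecircled6$, and excluding it on the segment is the step I expect to be the real obstacle. My plan is to show $\bluecircled6$ cannot fire on a tuple tracking $\run$ while the state lies in $\run_{\beta}\cdot\wevent'\cdot\run_{\alpha}$: it requires the consumed transition $e$ to be a read of $\xvar$ from the register that currently holds $\revent$'s value $\val$; since $\run$ is differentiated that read necessarily reads from $\wevent$, and then either (i) it lies in $\run_{\alpha}$, in which case the prefix of $\run$ ending at this read already contains $\wevent$ and $\wevent'$ and still closes the cycle of $\mkrelof{\wtype}\cdot\hbrel\cdot\wevent\cdot\hbrel\cdot\invmkrelof\rfrel$, contradicting minimality of $\run$; or (ii) it lies in $\run_{\beta}$, in which case, using Corollary~\ref{cor:run_does_not_overwrite} (the only input to $\creg$ after $\wevent$ is $\wevent$ itself) together with the fact that the $\hidden$ premise of $\bluecircled6$ records an intervening write of $\xvar$ already present before this read, one again exhibits a strictly shorter violating run. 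Either way minimality is violated, so $\bluecircled6$ never arises on the segment and the induction goes through. Making case (ii) fully rigorous — pinning down precisely which write the $\hidden$ tuple records and verifying it genuinely sits between $\wevent$ and the offending read — is the main technical point I would need to be careful about.
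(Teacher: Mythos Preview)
Your argument is far more detailed than the paper's, which literally just says ``we combine the above lemma and \cref{cor:run_does_not_overwrite} to obtain'' the corollary, with no case analysis at all. Setting up a backward induction along $\run$ and checking each rule of \cref{fig:rules_wra} against Lemma~\ref{lem:rules_store_current_register} is the natural way to make this precise, and your treatment of rules $\bluecircled1$--$\bluecircled5$, $\bluecircled8$, $\bluecircled{8'}$, $\bluecircled9$ is fine.

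The gap is in your handling of rule~$\bluecircled6$. You assert that it ``requires the consumed transition $e$ to be a read of $\xvar$ from the register that currently holds $\revent$'s value $\val$'', but this covers only one of the two variants: the transition premise of $\bluecircled6$ is a read of $\xvar$ from ``$\areg$ \emph{or} $\breg$'', and $\breg$ there is a free variable constrained only by the blanket side condition $\areg\neq\breg$ of \cref{fig:rules_wra}. So $\bluecircled6$ can fire on a read of $\xvar$ by $\thread$ from \emph{any} register, and in the $\breg$-variant that read does not fetch $\val$ and need not read from $\wevent$; your minimality arguments~(i) and~(ii) then simply do not apply, yet the rule's conclusion is still $\hiddenof{\iqstate0}{\areg}{\areg}\xvar{}$ with secondary equal to primary. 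Indeed the paper itself relies on exactly such a $\bluecircled6$-produced tuple $\hiddenof{\qstate}\areg\areg\xvar{\pqstate}$ arising on this segment in the $(\Leftarrow)$ direction of Lemma~\ref{lem:failures_are_correct}. So either the corollary is intended to quantify only over a specific tracking sequence that deliberately avoids $\bluecircled6$ (using the transparency rule $\bluecircled{8'}$ instead whenever both apply), or the paper's one-line justification conceals a genuine tension. Your instinct that $\bluecircled6$ is the crux is right; what is missing is either an argument that tracking $\run$ never \emph{needs} $\bluecircled6$, or an explicit narrowing of the corollary's scope.
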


We proceed with some more general statements:
\begin{lemma}\label{lem:marked_registers_will_be_read}
Let $\run$ be the run of Proposition \ref{prop:splitting_violating_run}, and $\qstate$ be a state accessed during the $\run_{\beta}\cdot \wevent' \cdot\run_{\alpha}$ part of $\run$.
If $\areg$ is the secondary register in of any $\status$ tuple, then it
contains a value that will be outputted at some event $\event$ in $\run$ that follows $\qstate$.
%is not the register $\creg$ where the ca

%\movesto{
%\areg\assigned\areg'
%\op}\pqstate$ be a
\end{lemma}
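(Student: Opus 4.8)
The plan is to prove the statement by induction on the number of rule applications of Algorithm~\ref{algo:wra_pseudo} (equivalently, on the progress of its backward traversal of $\rmachine$), carried out along the fixed minimal differentiated violating run $\run = \run_1\cdot\wevent\cdot\run_{\beta}\cdot\wevent'\cdot\run_{\alpha}\cdot\revent$ given by \cref{prop:splitting_violating_run}. The invariant I would maintain is: whenever the algorithm produces a $\status$ tuple whose first component is a register $\areg$, whose middle component is a \emph{register} $\creg$, and whose state $\qstate$ is accessed along $\run_{\beta}\cdot\wevent'\cdot\run_{\alpha}$, the value stored in $\creg$ at that access of $\qstate$ is output by some read event of $\run$ that occurs after $\qstate$. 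This is precisely the mirror, for the middle slot, of \cref{lem:rules_store_current_register}, which identifies the first slot with the register currently holding the value of $\revent$: the first slot records which value is destined for $\revent$, and the middle slot records a pending expectation that some other read will consume a particular register's value. As in the earlier proofs of this appendix, I would rely on \cref{lem:rules_store_current_register}, \cref{lem:copy_edges_are_sound} and \cref{cor:run_does_not_overwrite} to ensure that whenever a tuple is attached to an edge $\qstate\to\pqstate$, the run really traverses that edge in the configuration predicted by the tuple's semantics.

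For the creation cases: rule $\bluecircled{1}$ creates tuples whose middle component is a \emph{thread}, so there is nothing to check. A register can enter the middle slot only through rules $\bluecircled{2}$, $\bluecircled{5}$ and $\bluecircled{6}$. Rules $\bluecircled{2}$ and $\bluecircled{5}$ each fire on a read transition $\qstate_0\to\qstate_1$ whose label reads precisely from the register $\creg$ that becomes the new middle slot; since a read transition leaves all register contents unchanged (the Read rule of \cref{fig:semantics_exec_graphs}), the read event this transition generates in $\run$ occurs strictly after the access of $\qstate_0$ and outputs exactly the value held in $\creg$ there, which establishes the invariant for the new tuple. Rule $\bluecircled{6}$ is the delicate case: its output tuple has the two register slots coinciding, and its triggering transition reads on $\xvar$ either from that register or from another; I would argue that along the segment $\run_{\beta}\cdot\wevent'\cdot\run_{\alpha}$ the relevant read reads precisely from the new middle slot, so that the middle slot's value is the one produced by that very read event, which again follows $\qstate_0$ — and here \cref{cor:register_tuple_not_dangerous_value} must be reconciled with this coincidence of slots.

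For preservation under propagation: rules $\bluecircled{3}$ and $\bluecircled{4}$ replace a register-valued middle slot by a thread-valued one, so the resulting tuple is again vacuous; the transparency rules $\bluecircled{8}$ and $\bluecircled{8'}$ copy a tuple to the predecessor state across a transition that modifies neither register slot, so the middle register's value is unchanged and the read event supplied by the induction hypothesis still lies after the new, earlier access; and the copy rule $\bluecircled{9}$ applied to $\qstate_0\mkrelof{\ceventof{\areg}{\breg}}\qstate_1$ performs the weakest-precondition substitution $\sfrac{\breg}{\areg}$ — if the middle slot at $\qstate_1$ is $\areg$ it becomes $\breg$ at $\qstate_0$, and by the Copy semantics the value in $\breg$ at the access of $\qstate_0$ equals the value in $\areg$ at the access of $\qstate_1$, to which the hypothesis applies, while if the middle slot is some register distinct from $\areg$ both it and its value are carried over unchanged. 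This exhausts the rules, so the invariant holds for every tuple the algorithm produces; specialising it to $\status$ tuples of states accessed during $\run_{\beta}\cdot\wevent'\cdot\run_{\alpha}$ is exactly the statement.

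The step I expect to be the main obstacle is not any single rule but keeping the purely syntactic backward fixpoint of Algorithm~\ref{algo:wra_pseudo} synchronised with the concrete register configurations along the fixed run $\run$ — in particular, making sure that the first and middle register slots of a tuple really hold the values the tuple's semantics predicts whenever $\run$ visits the corresponding edge, especially across copy transitions and around loops. This synchronisation is the purpose of \cref{lem:rules_store_current_register}, \cref{lem:copy_edges_are_sound}, \cref{cor:run_does_not_overwrite} and \cref{cor:register_tuple_not_dangerous_value}, so once those are in hand the proof reduces to the layered case analysis above; within it, the one genuinely subtle point is the interaction of rule $\bluecircled{6}$ with \cref{cor:register_tuple_not_dangerous_value}, which I would treat by a careful look at which register the read triggering $\bluecircled{6}$ can actually target along the segment.
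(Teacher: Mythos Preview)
Your approach is essentially the paper's: induction on rule applications, identifying rules $\bluecircled{2}$, $\bluecircled{5}$, $\bluecircled{6}$ as the ones that introduce a register in the middle slot, and checking that the transparency and copy rules ($\bluecircled{8}$, $\bluecircled{8'}$, $\bluecircled{9}$) preserve the invariant. The paper's proof is terser but structurally identical.

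One correction on rule $\bluecircled{6}$. You propose to argue that the triggering read ``reads precisely from the new middle slot'' $\areg$; but rule $\bluecircled{6}$ has an explicit second variant where the read is on $\breg \neq \areg$, so that argument does not cover both cases. The simpler and correct observation is that the middle slot of the output tuple coincides with the \emph{first} slot $\areg$, and by \cref{lem:rules_store_current_register} the first slot is exactly the register currently holding the value destined for $\revent$. Hence the value in $\areg$ at $\qstate_0$ will be output at $\revent$, which lies after $\qstate_0$ in $\run$, regardless of which register the triggering read accesses. This also dissolves the tension you anticipate with \cref{cor:register_tuple_not_dangerous_value}: that corollary is simply imprecise for the special case produced by rule $\bluecircled{6}$, and the present lemma does not need it.
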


\begin{proof}
We prove this with a relatively straightforward induction on the number register-register tuples that have been added. Let $t_{\areg}$ denote such a tuple.

\underline{Base case:} $\areg$ is outputted immediately.

This can only be true if the tuple $t_a$ is created due to an output edge.
This can only be true for rules $\bluecircled{2}$, $\bluecircled{5}$, and $\bluecircled{6}$.
For all of those we see that indeed the claim holds, and it is true that $\areg$ is storing the value of a register that will be outputted later in the run.

\underline{Inductive hypothesis:} The claim holds for all existing $t_{\areg}$ tuples.

\underline{Inductive Step:} Assume a new tuple $t_{\areg}$.
This tuple could have been created due to rules $\bluecircled{2}$, $\bluecircled{5}$, $\bluecircled{6}$, and $\bluecircled{8}$,  $\bluecircled{8'}$ and $ \bluecircled{9}$.
 %The $\gamma$ and $\delta$ updates follow with a similar argument to this one.
For the first three we know that the claim holds immediately.
For the copy rule(s), in order for one to be activated, we see that a tuple $t_{\breg}$ must also have been associated to the relevant set at the next state accessed in $\run$, and from Lemma \ref{lem:copy_edges_are_sound} we get that $\breg$ must be the register where the value of $\areg$ was stored at when reaching $\qstate$.

Thus, from inductive hypothesis, $\breg$ contains a value that will be outputted later in $\run$.
Consequently, after traversing the copy edge backwards the value of $\breg$ is now in $\areg$ and therefore the claim holds.

Regarding rule $\bluecircled{9}$, we have an identical argument which is thus omitted.
\end{proof}

% We would like to have a statement similar to Lemma \ref{lem:marked_registers_will_be_read} , regarding tuples of the $\tuple{\qstate,\thread}$ form.
% Indeed we notice that tuples of this form are essentially assigned to a thread when either it is the thread of $\revent$, or it is a thread that will write an event which will be read later on during $\run$.

% Formally we have:

 To generalize even more the above we have:

\begin{lemma}\label{lem:tuples_imply_paths}
Let $\run$ be the run of Proposition \ref{prop:splitting_violating_run}, and $\qstate\movesto{\opof\thread---}\pqstate$ be an input or output edge that is accessed during the $\run_{\beta}\cdot \wevent' \cdot\run_{\alpha}$ part of $\run$,
and the value that will be outputted on $\revent$ is  stored in $\creg$.
Assume that $\pqstate$ is followed by several events which belong in $\mkrelof{\porel \cup \rfrel}^* \revent$ paths in the execution graph $\egraph$ coresponding to $\run$.
 Then the following hold:

 \begin{itemize}
 \item $\statusof{\qstate}\areg\thread\xvar{\pqstate}$ is added if $\thread$ occurs in some existing $ \mkrelof{\porel \cup \rfrel}^* \revent$  path or
 if $\op$ is an input on a new thread, but the value that is inputted has been later on read by an event in one of the threads in one $ \mkrelof{\porel \cup \rfrel}^* \revent$ path.
 \item $\statusof{\qstate}\areg\breg\xvar{\pqstate}$ is added if the set of threads that can reach the event $\revent$ though a
  $ \mkrelof{\porel \cup \rfrel}^* \revent$ path does not chance from $\pstate$ to $\qstate$, and the value currently in $\breg$ will be outputted later on in one of the threads that are still in a  $ \mkrelof{\porel \cup \rfrel}^* \revent$ path.
 \end{itemize}

\end{lemma}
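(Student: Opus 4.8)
The plan is to prove the two clauses simultaneously, by induction on the position — swept from right to left — of the edge $\qstate\movesto{\op}\pqstate$ inside the decomposition $\run=\run_1\cdot\wevent\cdot\run_\beta\cdot\wevent'\cdot\run_\alpha\cdot\revent$ supplied by \cref{prop:splitting_violating_run}, where $\wevent=\eventof\wtype{\thread_1}\xvar\val$, $\wevent'=\eventof\wtype{\thread_3}\xvar{-}$, and $\revent=\eventof\rtype{\thread_2}\xvar\val$. This is exactly the order in which the backwards fixed-point loop of \cref{algo:wra_pseudo} discovers tuples along $\run$, so it suffices to show that at each such transition the rules of \cref{fig:rules_wra} add precisely the tuple claimed by the two clauses. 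The invariant I would carry along the sweep is: at each state $\qstate$ reached while walking $\run$ backwards from $\revent$, the threads appearing as the middle component of the $\status$ tuples at $\qstate$ are exactly the threads having a $(\porel\cup\rfrel)$-path to $\revent$ from an event at or after $\qstate$, and the registers appearing as the middle component of the register--register tuples at $\qstate$ are exactly the live locations — in the sense of \cref{lem:marked_registers_will_be_read} and \cref{cor:register_tuple_not_dangerous_value} — of values read later on one of those threads; moreover the $\visible$ tuples are those produced within $\run_\alpha$ and the $\hidden$ tuples those produced within $\run_\beta$, the switch occurring at $\wevent'$. By \cref{lem:rules_store_current_register} and \cref{cor:run_does_not_overwrite} I may additionally track the unique register $\creg$ holding $\val$ at the current point, noting that no input after $\wevent$ overwrites it.

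The base case is the transition emitting $\revent$, say $\qstate_\revent\movesto{}\pqstate_\revent$ with $\val$ held in register $\areg$: rule $\bluecircled1$ fires and adds $\visibleof{\qstate_\revent}\areg{\thread_2}\xvar{\pqstate_\revent}$, where $\thread_2$ is the unique thread with a (trivial) $(\porel\cup\rfrel)$-path to $\revent$, so clause~1 holds and clause~2 is vacuous. A copy transition $\op=\ceventof{\areg}{\breg}$ is already handled by \cref{lem:copy_edges_are_sound} via rule $\bluecircled9$: every tuple at $\pqstate$ is transported to $\qstate$ with the register slots substituted, so both clauses — together with the update of $\creg$ — are preserved under the renaming.

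For the inductive step, if $\op$ is a read $\eventof\rtype\thread\yvar\breg$ and $\thread$ appears in a thread--register tuple $\visibleof{\pqstate}\areg\thread\xvar{}$ with $\yvar\neq\xvar$, rule $\bluecircled2$ produces the register--register tuple $\statusof{\qstate}\areg\breg\xvar{\pqstate}$: the thread set is unchanged because the $\rfrel$-source of this read has not yet been met, while the read value equals the content of $\breg$ and is consumed by this very event on $\thread$, which is clause~2; if $\yvar=\xvar$, rules $\bluecircled5$ and $\bluecircled6$ act the same way but flip the status to $\hidden$, sound because a read of the tracked variable from a register other than the live one places the source write behind an intervening write; otherwise rule $\bluecircled{8'}$ propagates every tuple unchanged, matching both clauses. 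If instead $\op$ is an input $\eventof\wtype\thread\yvar\breg$ and a register--register tuple with middle register $\breg$ sits at $\pqstate$, rule $\bluecircled3$ yields $\statusof{\qstate}\areg\thread\xvar{\pqstate}$, and by the induction hypothesis the value written into $\breg$ is later read on a thread already on a $(\porel\cup\rfrel)$-path to $\revent$, so $\thread$ now lies on such a path through the fresh $\rfrel$ edge — the second disjunct of clause~1; when $\yvar=\xvar$, $\breg\neq\areg$ (forced by \cref{cor:run_does_not_overwrite}, as such a write cannot overwrite the live location), and $\thread$ or $\breg$ is already tracked, rule $\bluecircled4$ records the intervening write by switching to $\hidden$; a write to a register not currently tracked triggers the transparency rule $\bluecircled8$; and a write on $\xvar$ to the currently tracked register that matches a $\hidden$ tuple is rule $\bluecircled7$, i.e. the event identified with $\wevent$ in \cref{prop:splitting_violating_run}, which emits $\failconf$ rather than a new tuple, so it does not affect the two clauses.

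The hard part will be stating the invariant of the first paragraph tightly enough that every change of tuple shape drops out of it mechanically: concretely, that sweeping $\run$ backwards a read event creates only an \emph{expectation} (a register--register tuple), discharged exactly at the matching input event supplying its $\rfrel$-source — unique because $\run$ is differentiated — and that between those two points the transparency and copy rules keep the middle-register slot pointing at the live location of the expected value, while the $\visible\to\hidden$ transitions of rules $\bluecircled4$--$\bluecircled6$ must be shown to fire at, and only at, a write on $\xvar$ to a non-live register. With that invariant in hand, the per-rule verifications listed above become routine.
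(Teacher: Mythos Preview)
Your proposal is correct and follows essentially the same approach as the paper: induction with a per-rule case analysis showing that each rule of \cref{fig:rules_wra} maintains the correspondence between tuples and $(\porel\cup\rfrel)^*$-paths to $\revent$. The paper inducts on the order in which tuples are added by the fixed-point loop (so the inductive step says ``inspect a new added tuple and look up which rule produced it''), whereas you induct on the position of the edge in $\run$ swept right to left; since the rules fire only along edges of the machine, the two framings lead to the same case split and the same appeals to \cref{lem:marked_registers_will_be_read} and \cref{lem:copy_edges_are_sound}. One minor point: the paper's lemma is the soundness direction only (tuple $\Rightarrow$ path), with completeness deferred to \cref{lem:path_implies_tuple}, so your stronger ``exactly'' invariant and the discussion of the $\visible\!\to\!\hidden$ switch at $\wevent'$ go beyond what is needed here, though they do no harm.
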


 \begin{proof}
We prove this by induction.

 \underline{Base case:} Rule $\bluecircled{1}$, and $\event = \revent$. In this case the statement holds trivially.

\underline{Hypothesis} for all existing  $\statusof{\qstate}\areg\thread\xvar{\pqstate}$ tuples the claim holds.

\underline{Inductive step:}
We inspect a new added tuple.

Say it is of type thread-register.
We look up the rules of Figure \ref{fig:rules_wra} to see where this update could have come from. We quickly see that of all the rules that could have added this tuple, the only ones where the inductive hypothesis does not hold immediately is : $\bluecircled{3}$, and $\bluecircled{4}$. The case analysis for these rules is very similar so we only do one.
\begin{itemize}
\item Rule $\bluecircled{3}$, and $\event = \inopof\thread\breg\yvar-$.
For this to be the case, we have that there must have also existed tuple,  $\statusof{\pqstate}\creg\breg\xvar{\pqstate'}$, and from Lemma \ref{lem:marked_registers_will_be_read} we get at least one event
$\revent_{\event}$ in $\run$ that reads the value of $\inopof\thread\breg\yvar-$.

%In the general case $\revent_{\event}$ will be in a different thread $\thread'$ from $\thread$.
%
%We now make another claim within this proof. Note that this claim cna be made outside the scope of this proof, but it would make the premise of the lemma very heavy.
%
%We claim that when an $\tuple{\pstate_c, \breg}$ tuple is added, the set of threads that can reach the event $\revent$ though a $ \mkrelof{\porel \cup \rfrel}^* \revent$ path does not chance.

We now look at the tuple $\statusof{\pqstate}\creg\breg\xvar{\pqstate'}$. From inductive hypothesis we get that the value of this tuple must be outputted later on on some thread that is still in a  $ \mkrelof{\porel \cup \rfrel}^* \revent$ path.

This clearly means that now the event $\event$ will be assigned a $\rfrel$ edge towards another event which belongs in a thread which has a  $ \mkrelof{\porel \cup \rfrel}^* \revent$ path. This means that also $\event$ will have a  $ \mkrelof{\porel \cup \rfrel}^* \revent$ path.

% that any transition that lead to the addition of this tuple did not affect which threads could reach $\revent$ with a relevant path. This means that we can take one more step in $\run$ towards $\revent$, and still need to prove the same claim that $\thread$ was in the same set of threads which are still able to reach $\revent$ with the correct path.
%
% We can continue taking steps  in $\run$, without altering which threads are able to reach $\revent$ as long as we encounter rules that are triggered due to register-tuples, or rules triggered by copy edges. If it happens that the next step in $\run$ would trigger a rule that is associated with a thread-tuple we have:
%
% Either the thread-set we are targeting does not change, or the event is an input on $\thread''$ and from the inductive hypothesis indeed a new $\rfrel$ would connect it to the set of threads where some previous events of $\run$ have $ \mkrelof{\porel \cup \rfrel}^* \revent$ path.
% %aking it so that $\thread''$ is from all events previously happening on it able to reach $\revent$ with a $ \mkrelof{\porel \cup \rfrel}^* \revent$ path.
%
% This shows we can keep traversing the remainder of $\run$ without
%

  \end{itemize}
 We now check for tuples of type $\statusof{\pqstate}\breg\creg\xvar{\pqstate'}$. We see that many rules can cause this tuple to be added, but none of them are triggered because of a change to  the set of threads that exist in some  $ \mkrelof{\porel \cup \rfrel}^* \revent$ path (from inductive hypothesis).

 It remains to show that the value currently in $\creg$ will be outputted later on in one of the threads that are still in a  $ \mkrelof{\porel \cup \rfrel}^* \revent$ path.
 From Lemma \ref{lem:marked_registers_will_be_read} we know that the value stored in $\breg$ will definitely be outputted at some point later in $\run$, but we also want to show that this will happen from an event $\event_r$ that is on a $\mkrelof{\porel \cup \rfrel}^* \revent$ path (and thus the thread of $\event_r$ also is).

 We check what rules could have caused the addition of $\tuple{\qstate, \creg}$. The only ones of those for which the inductive hypothesis does not hold immediately are $\bluecircled{2}$, $\bluecircled{5}$, and $\bluecircled{6}$.

 However even for those the analysis is relatively straightforward. We only show the case for Rule $\bluecircled{2}$.

 \begin{itemize}
 \item Rule $\bluecircled{2}$, and $\event = \outopof\thread\breg\yvar$.
   We see that for this rule to be triggered there must have existed tuple
 $\statusof{\pqstate}\creg\thread\xvar{\pqstate'}$. From inductive hypothesis the thread $\thread$ is one that occurs in some $\mkrelof{\porel \cup \rfrel}^* \revent$ path. Moreover, the rule is triggered only there exists an output edge on that same thread $\thread$, which means that it is sound to add the tuple $\statusof{\pqstate}\breg\creg\xvar{\pqstate'}$ because the value of $\breg$ will be outputted later on (i.e. immediately), on a thread as requested.
 \end{itemize}
 \end{proof}

 \begin{corollary}
%Let $\run$ be the run of Proposition \ref{prop:splitting_violating_run}, and
Let $\statusof{\pqstate}-\creg\xvar{\pqstate'}$ be a tuple that is created when running the rules of Figure \ref{fig:rules_wra}.
%\movesto{
%\areg\assigned\areg'
%\op}\pqstate$
 %be a
 %input or output edge
 %state that is accessed during the $\run_{\beta}\cdot \wevent' \cdot\run_{\alpha}$ part of $\run$.
 Then,
 %$\tuple{\qstate, -}\in set (\xvar,\creg)$ implies that %for the event $\event$ synchronizing with $\op$ in the relevant execution graph
 there exists path in $\rmachine$ starting in $\qstate$, and ending in some $ \revent$, such that $\revent$ is outputting the value of $\creg$.
 % we have $\event  \mkrelof{\porel \cup \rfrel}^* \revent$.

 \end{corollary}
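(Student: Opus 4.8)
The plan is to read this corollary as the ``soundness'' counterpart of Lemma~\ref{lem:tuples_imply_paths} and Lemma~\ref{lem:marked_registers_will_be_read}, but freed from the distinguished violating run of Proposition~\ref{prop:splitting_violating_run}: it asserts that \emph{every} tuple the fixpoint computation of Algorithm~\ref{algo:wra_pseudo} ever inserts is witnessed by an honest path of $\rmachine$. I would prove it by induction on the order in which tuples are inserted (equivalently, on the number of applications of the rules of Figure~\ref{fig:rules_wra}), attaching to each tuple a concrete witness path as it is created. In fact the very same case analysis already lives inside the proof of Lemma~\ref{lem:tuples_imply_paths}; the only new observation is that none of its cases uses that $\run$ is \emph{the} minimal violating run — they use only the shapes of the premises of the rules — so the argument can be re-run with the witness path built on the fly.

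Because a register--register tuple is never the first tuple created — the only rule with an empty tuple-premise is $\bluecircled{1}$, which produces a thread--register tuple, and $\bluecircled{2}$, $\bluecircled{5}$, $\bluecircled{6}$, $\bluecircled{8}$, $\bluecircled{8'}$, $\bluecircled{9}$ all need a pre-existing tuple — I would carry a joint invariant over both tuple shapes. For a thread--register tuple $\statusof{\qstate}{\areg}{\thread}{\xvar}{\qstate'}$: there is a path of $\rmachine$ from $\qstate$, through $\qstate'$, to a read transition producing an event $\revent$ on $\xvar$ that outputs the value stored in $\areg$ at $\qstate$, with $\thread$ lying on a $\tclosureof{\mkrelof{\porel\cup\rfrel}}$-chain ending at $\revent$ in the egraph of that path. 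For a register--register tuple $\statusof{\qstate}{\areg}{\creg}{\xvar}{\qstate'}$: the same, except the condition on $\creg$ is that the value stored in $\creg$ at $\qstate$ is output by some read event $\revent'$ that lies on a $\tclosureof{\mkrelof{\porel\cup\rfrel}}$-chain terminating at that $\revent$. The corollary is then immediate: the prefix of the witness path up to the transition realizing $\revent'$ is a path of $\rmachine$ from $\qstate$ ending in a read event outputting the value of $\creg$ (modulo the copy-bookkeeping of Lemma~\ref{lem:rules_store_current_register}, which tracks where that value lives along the way).

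For the induction, the base case is rule $\bluecircled{1}$, where the single read transition $\iqstate1\mkrelof{\rtype\land\thread\land\xvar\land\areg}\iqstate2$ is itself the witness. In the inductive step I would treat the rules one at a time, forming the new witness by prepending the transition matched by the rule to the witness supplied by the induction hypothesis: $\bluecircled{2}$/$\bluecircled{5}$/$\bluecircled{6}$ prepend a read of the secondary register, so the prepended event both realizes ``$\creg$'s value is output'' and extends the chain towards $\revent$; $\bluecircled{3}$/$\bluecircled{4}$ prepend an input on the secondary resp.\ primary register, and here Lemma~\ref{lem:marked_registers_will_be_read} together with the induction hypothesis shows the new write event acquires an $\rfrel$-edge to an event already on the chain, which is what promotes it to a valid thread--register tuple; $\bluecircled{8}$/$\bluecircled{8'}$ prepend a transition that, by the transparency side-conditions and Corollary~\ref{cor:run_does_not_overwrite}, touches neither the tracked registers nor the threads on the chain, so the same witness lengthened by one transition works; $\bluecircled{9}$ prepends a copy $\areg := \breg$ with the substitution $\{\sfrac{\breg}{\areg}\}$, and the claim that the tracked value simply moves $\breg\mapsto\areg$ (and symmetrically in the renamed slot) is precisely Lemma~\ref{lem:copy_edges_are_sound}; rule $\bluecircled{7}$ emits $\failconf$ rather than a tuple, so there is nothing for the invariant to check here.

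The step I expect to be the main obstacle is the $\bluecircled{3}$/$\bluecircled{4}$ case, since it is the only place where prepending a transition \emph{creates} a new write event, and we must argue that this event genuinely becomes $\tclosureof{\mkrelof{\porel\cup\rfrel}}$-connected to the terminal read $\revent$ — not merely that its value gets output somewhere. That is exactly why the register--register clause of the joint invariant must record that $\creg$'s value is consumed by an event \emph{on the chain to $\revent$}; a weaker ``eventually output'' hypothesis would not reconnect the new event and the induction would not close. The only other delicate point is ensuring that the transparency and copy rules never silently overwrite the register carrying the value destined for $\revent$ nor drop a thread from the reachability chain, which is the content of Lemmas~\ref{lem:rules_store_current_register} and~\ref{lem:copy_edges_are_sound} and Corollary~\ref{cor:register_tuple_not_dangerous_value}; I would invoke these rather than redo the case analysis.
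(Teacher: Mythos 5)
Your proposal is correct and follows essentially the same route as the paper: the paper derives this corollary directly from Lemma~\ref{lem:marked_registers_will_be_read} and Lemma~\ref{lem:tuples_imply_paths}, which are themselves proved by exactly the induction on rule applications (with the same per-rule case analysis and the same appeals to Lemma~\ref{lem:copy_edges_are_sound} and Corollary~\ref{cor:run_does_not_overwrite}) that you spell out. Your only addition is to make explicit the joint thread-register/register-register invariant and the observation that minimality of the violating run is never used, which is a fair tightening of what the paper leaves implicit rather than a different argument.
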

 For the reverse (which is not as complicated), we state:
\begin{lemma}\label{lem:path_implies_tuple}
Let $\run$ be the run of Proposition 
\ref{prop:splitting_violating_run}, and $\qstate\movesto{
%\areg\assigned\areg'
\op}\pqstate$ be an input or output edge that 
is accessed during the $\run_{\beta}\cdot \wevent' \cdot\run_{\alpha}$ 
part of $\run$.
 Then for the 
 execution graph event $\event = \event(\op\sslash\val )= \eventof\ttype\thread\xvar\val$,
  corresponding to $\op$,
   we have that  $\event  \mkrelof{\porel \cup \rfrel}^* \revent$ implies $\statusof{\pqstate}-\creg\xvar{\pqstate'}$, at the execution graph 
   that has been produced by  $\mkegraph$.
  %, or $\tuple{\qstate, -}\in\beta(\xvar,\creg)$
%   with $\creg$ being the register that the value of $\areg$ in $\revent$ comes from (possibly $\areg= \creg$)

\end{lemma}

\begin{proof}
%First of all we have that in order for $\revent =\outopof\thread\xvar\areg$ to be in the run it must be the case that $\tuple{\qstate,\revent,\pstate}$ was a transition of the register machine. We see therefore from the rules of Figure \ref{fig:hidden_value} that now $\tuple{\qstate,\thread}\in \alpha(\xvar,\areg)$.
We prove the claim by induction on the events of $\run_{\alpha}$, starting from the last one.

\underline{Base case: $\run_{\alpha} = \emptyset$}. In this case $\qstate$ is the only state accessed during the $\run_{\alpha}$ and the claim holds.
% part of $\run$ and the statement is satisfied.
%We see that indeed both $\qstate$ is

\underline{Inductive hypothesis:} We assume the claim holds for up to $n$ states reached backwards in $\run_{\alpha}$.

\underline{Inductive step:} We now take one more step backwards in $\run$. We will denote this step as  $\tuple{\qstate_{\alpha},\op,\pstate_{\alpha}}$, and proceed with a case analysis of the input operation $\op$, and the type of tuple associated with $\pstate_{\alpha}$ from the inductive hypothesis.

% 	\begin{enumerate}
% 	\item \textit{$\op$ is an input to a different variable $\yvar$ for some register $\breg$.}  Here we note that $\breg \neq \creg$ due to Corollary \ref{cor:run_does_not_overwrite}.

 	Assume that $\event  \mkrelof{\porel \cup \rfrel}^* \revent$ for $\event$ the event in the execution graph synchronizing with $\op$.
 	Since there exists such a path we take a step in this path.
 	This will be either through a $\porel$ or a $\rfrel$ edge connecting to an event $\pevent$, either on the same thread as $\event$, or reading its value.
 	From inductive hypothesis, the state where $\pevent$ originates is marked with some tuple in some  $\statusof{\pqstate}---{\pqstate'}$.
	We now need to guarantee that this tuple with be propagated backwards properly through the events of $\run$ 	between $\event$ and $\pevent$ to reach the point where it is assigned to $\qstate_{\alpha}$.
	We know that when the preceding events are copy edges then this holds.
	With a thorough inspection of the rules of Figure \ref{fig:rules_wra}, particularly for the transparency rules, we see that the only way the state of a tuple does not propagate to the previous state of an event is when there is input on the variable $\xvar$ of $\revent$, on the register where the current value of this event is.
	This is impossible due to Corollary \ref{cor:run_does_not_overwrite}.
	Thus if a  $\event  \mkrelof{\porel \cup \rfrel}^* \revent$ path exists, then some tuple will be associated to $\qstate_{\alpha}$.
\end{proof}

\begin{corollary}\label{cor:tuples_equal_runs}
Let $\run$ be the run of Proposition \ref{prop:splitting_violating_run}, and $\qstate\movesto{
%\areg\assigned\areg'
\op}\pqstate$ be an edge that is accessed during the $\run_{\beta}\cdot \wevent' \cdot\run_{\alpha}$ part of $\run$, and synchronizes with event $\event$.
 Then,
 \begin{itemize}
 \item $\statusof{\pqstate}\thread\creg\xvar{\pqstate'}$ is added iff all events that take place in $\thread$ before $\event$ (including $\event$ have a $\mkrelof{\porel \cup \rfrel}^* \revent$ path, and
 \item $\statusof{\pqstate}\breg\creg\xvar{\pqstate'}$ is added to iff any event that writes the value that is to be in register $\breg$ upon reaching $\qstate$ will have a $\mkrelof{\porel \cup \rfrel}^* \revent$ path.
\end{itemize}
\end{corollary}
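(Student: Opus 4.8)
The plan is to obtain the corollary as the conjunction of the two one-directional statements already established, Lemma~\ref{lem:tuples_imply_paths} and Lemma~\ref{lem:path_implies_tuple}, refined so as to separate the thread-register and register-register cases; no new induction is needed. Throughout one works with the fixed decomposition $\run = \run_1\cdot\wevent\cdot\run_{\beta}\cdot\wevent'\cdot\run_{\alpha}\cdot\revent$ of Proposition~\ref{prop:splitting_violating_run}, and uses Lemma~\ref{lem:rules_store_current_register}, which guarantees that the secondary register $\creg$ recorded in any tuple assigned to a state reached during $\run_{\beta}\cdot\wevent'\cdot\run_{\alpha}$ is exactly the register holding the eventual value of $\revent$ at that point; so every tuple under discussion has the form $\statusof{\pqstate}{-}\creg\xvar{\pqstate'}$ with this particular $\creg$.

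\emph{Backward direction.} First I would assume that every event of $\thread$ up to and including $\event$ lies on a $(\porel\cup\rfrel)^{*}$-path to $\revent$; since $\porel$ totally orders $\thread$ this is equivalent to $\event$ itself lying on such a path. Applying Lemma~\ref{lem:path_implies_tuple} to the edge $\qstate\movesto{\op}\pqstate$ synchronising with $\event$ produces a tuple $\statusof{\pqstate}{-}\creg\xvar{\pqstate'}$, and inspecting the inductive step of that lemma together with rules $\bluecircled{3}$ and $\bluecircled{4}$ shows that its first component is $\thread$ precisely when $\event$ is a fresh frontier event of $\thread$, i.e.\ when $\thread$ was not already among the threads reaching $\revent$; this gives the first bullet. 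For the second bullet I would combine Lemma~\ref{lem:path_implies_tuple} with Lemma~\ref{lem:marked_registers_will_be_read}: if the value that $\breg$ holds at $\qstate$ is later output by an event on a $(\porel\cup\rfrel)^{*}$-path to $\revent$, then rule $\bluecircled{2}$ (or a transparency/copy rule applied to an already-present such tuple) produces $\statusof{\pqstate}\breg\creg\xvar{\pqstate'}$. The indispensable ingredient here is Corollary~\ref{cor:run_does_not_overwrite}: no input event after $\wevent$ targets the current register, so the transparency rules $\bluecircled{8}$, $\bluecircled{8'}$ and the copy rule $\bluecircled{9}$ carry each tuple backwards across every intermediate edge without destroying it, while Corollary~\ref{cor:register_tuple_not_dangerous_value} keeps the register bookkeeping of rule $\bluecircled{9}$ consistent.

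\emph{Forward direction.} This is Lemma~\ref{lem:tuples_imply_paths} read contrapositively: by induction on the number of applied rules and a case analysis on which of $\bluecircled{1}$, $\bluecircled{2}$, $\bluecircled{3}$, $\bluecircled{4}$, $\bluecircled{8}$, $\bluecircled{8'}$, $\bluecircled{9}$ introduced the tuple, one checks that whenever $\statusof{\pqstate}\thread\creg\xvar{\pqstate'}$ is added then $\thread$ --- and hence, by totality of $\porel$ on $\thread$, every $\porel$-earlier event of $\thread$ --- lies on a $(\porel\cup\rfrel)^{*}$-path to $\revent$, and whenever $\statusof{\pqstate}\breg\creg\xvar{\pqstate'}$ is added then the value $\breg$ carries upon reaching $\qstate$ is output by some event on such a path. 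Assembling the two directions yields the two equivalences exactly as stated.

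The only delicate point --- care rather than a new idea --- is keeping the ``frontier'' bookkeeping (which threads currently reach $\revent$, and in which register the $\revent$-value currently sits) synchronised between the paths-imply-tuples induction of Lemma~\ref{lem:path_implies_tuple} and the tuples-imply-paths induction of Lemma~\ref{lem:tuples_imply_paths}; once one verifies that both inductions traverse $\run$ maintaining the same invariant, the corollary follows immediately.
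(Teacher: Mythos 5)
Your proposal matches the paper's (implicit) argument exactly: the corollary is presented there as an immediate consequence of combining Lemma~\ref{lem:tuples_imply_paths} (tuples imply the path conditions) with Lemma~\ref{lem:path_implies_tuple} (paths force tuples), together with the register-tracking facts of Lemma~\ref{lem:rules_store_current_register} and Corollary~\ref{cor:run_does_not_overwrite}, which is precisely how you assemble it. The refinement you add to separate the thread-register and register-register cases is the same rule-level bookkeeping the paper relies on, so there is nothing to change.
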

%
%We are now in a very solid state regarding at least part of the meaning on the tuples of our algorithm.
%It remains to argue about the explicit meaning of $\beta$ tuples, as well as the correctness of the $\faildgraph$ tags that are assigned to states.
%
%
%Regarding the $\beta$ tuples, their purpose is to keep track of the final requirement of Proposition \ref{prop:splitting_violating_run}.

We are now in a very solid state regarding at least part of the meaning on the tuples of our algorithm.
We have established that the tuples correctly mark threads that have a $\mkrelof{\porel \cup \rfrel}^* \revent$ path, or will acquire one once they become the origin of a $\rfrel$ edge that targets a thread with such a path.
We break the remaining argument into the following questions:

\begin{enumerate}
    \item Do all threads that have such a path get assigned to a tuple with that thread?
    \item What does the transition to $\hidden$ tuples guarantee?
    \item Do our rules mark with $\hidden$ all the states and threads we want?
\end{enumerate}

To answer Question $1$, we have:
We just augured that when exploring a new event while traversing a violating run backwards, we will annotate the appropriate threads for all $\porel$ and $\rfrel$ edges that are created. We have also shown that if a tuple exists then so does a path. Therefore, as far as $\mkrelof{\porel \cup \rfrel}^* \revent$ path we have complete correspondence between paths and tuples.

As Proposition \ref{prop:splitting_violating_run} requests, we need to encounter a $\wevent'$ event, and after this we have the stronger condition of $\revent \mkrelof{\invmkrelof\rfrel} \wevent \mkrelof{\porel \cup \rfrel }\wevent'$%, for some $\xvar$
 to keep track of. However, we have already shown that most of the rules of Figure \ref{fig:rules_wra} correspond to $\mkrelof{\porel \cup \rfrel}^* \revent$ paths. This is where the rules $\largebluecircled{4}$ and $\largebluecircled{5}$ come into play. They are there to ensure that once encountering a $\wevent'$ event, then we will switch to $\hidden$ tuples.

Since we have that the algorithm of Figure \ref{fig:ghost_reads} has been proven correct, and has been pre-run before we run Algorithm \ref{algo:wra_pseudo}, all output events output something that has indeed been inputted at some point of the run.
Thus we are guaranteed that eventually, there will be events $\wevent$ and $\wevent'$ from Proposition \ref{prop:splitting_violating_run}.

It remains to show that out algorithm can detect the existence of the intermediate $\wevent'$ events, and that all the $\failconf$ configurations our algorithm marks would correspond to violations.
The first step towards this is to guarantee that when reading a  $\wevent'$ event, we will indeed start marking the states with $\hidden$ tuples.
We know that $\wevent'$ must occur, it must be on variable $\xvar$, and it must not be on the register that is currently the register that will be outputted on $\revent$.
%, and Proposition \ref{prop:splitting_violating_run} must hold.

Since the claim is that $\hidden$ tuples will be associated with the $\run_{\beta}$ part of the run, we see how we can guarantee that beginning to mark with $\hidden$ tuples must be because of such an event.
\begin{proposition}
\label{prop:initialize_beta}
Rules $\bluecircled{4}$, $\bluecircled{5}$ %, or the rules from Figure \ref{fig:ra_rules}, which we call saturations,
are only applied when we can guarantee the existence of event $\wevent'$ from Proposition \ref{prop:splitting_violating_run}.
\end{proposition}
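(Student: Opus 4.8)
The plan is to split on which of the two rules fires and, in each case, to read a witness for the event $\wevent'$ of \cref{prop:splitting_violating_run} directly off the rule's premises, using the correspondence between the $\status$/$\hidden$ tuples and path-carrying run fragments already established. Three standing facts are used throughout: the module is run only after the ghost-read and mismatched-variable checks of \cref{fig:ghost_reads} and \cref{fig:mismatched_variables} have passed; by \cref{lem:RA_to_diff_RA} the run under analysis may be taken differentiated, so each value pins down a unique write; and, by \cref{lem:rules_store_current_register} together with \cref{cor:run_does_not_overwrite}, the register slot carried by a tuple at a state $\iqstate1$ is exactly the register holding the value that the eventual read $\revent$ will output, and that register is not overwritten again before $\revent$. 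Recall also that every rule of \cref{fig:rules_wra} carries the side conditions $\areg\neq\breg$ and $\xvar\neq\yvar$.

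First I would treat rule $\bluecircled{4}$. Its premise is a transition $\iqstate0\movesto{\wopof\thread\xvar\breg}\iqstate1$ and a tuple $\statusof{\iqstate1}\areg{\thread\text{ or }\breg}\xvar{}$. By the tuple--path correspondence (\cref{lem:tuples_imply_paths}, \cref{cor:tuples_equal_runs}), this tuple witnesses a run from $\iqstate1$ along which either thread $\thread$ (thread-register case) or the read that will consume the value currently in $\breg$ (register-register case) has a $\tclosureof{\mkrelof{\porel\cup\rfrel}}$-path to a read $\revent$ on $\xvar$ whose value sits in $\areg$. Prepending the transition produces a run fragment through $\iqstate0$ whose new event $\wevent'=\eventof\wtype\thread\xvar{\val}$ is a write on the variable $\xvar$ of $\revent$. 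In the thread-register case $\wevent'$ is the earliest $\thread$-event of this fragment, so $\porel$ places it before the $\thread$-events on the witnessed path and hence $\wevent'\hbrel\revent$. In the register-register case, since the ghost-read and mismatched-variable checks have passed and the run is differentiated, the value $\val$ that $\wevent'$ writes into $\breg$ is exactly the value fetched by the path-origin read, so that read reads from $\wevent'$ and again $\wevent'\hbrel\revent$. In both sub-cases $\wevent'$ matches \cref{prop:splitting_violating_run}: it is a write on $\xvar$, it satisfies $\wevent'\mkrelof{\porel\cup\rfrel}\revent$ (in the $\hbrel$ sense), and it writes to $\breg\neq\areg$, hence not to the register carrying $\revent$'s value --- the property that distinguishes the intervening write from the source $\wevent$.

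Next I would treat rule $\bluecircled{5}$. Its premise is a transition $\iqstate0\movesto{\ropof\thread\xvar\breg}\iqstate1$ and a thread-register tuple $\statusof{\iqstate1}\areg\thread\xvar{}$. The event generated by this transition is a read on $\xvar$ on thread $\thread$ fetching the value currently in register $\breg$; because the ghost-read check has passed that value was produced by some earlier event, and because the mismatched-variable check has passed that event is a write $\wevent'$ on variable $\xvar$, which (by differentiation) is the unique write of that value, so the generated read reads from $\wevent'$. Since that read lies on $\thread$ and the premise tuple witnesses a $\tclosureof{\mkrelof{\porel\cup\rfrel}}$-path from $\thread$ to $\revent$, we obtain $\wevent'\hbrel\revent$, and $\wevent'$ writes to $\breg\neq\areg$, again not the register carrying $\revent$'s value. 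Thus a witness for $\wevent'$ is guaranteed in this case too.

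The step I expect to be the main obstacle is the register bookkeeping behind both cases: one must be sure that the value written by the candidate $\wevent'$ really is the one consumed on the path toward $\revent$, i.e.\ that no transition between $\wevent'$ and that consuming read overwrites the relevant register, and that the register slot tracked by the premise tuple is genuinely the ``current'' register at $\iqstate1$. These facts are provided by \cref{lem:rules_store_current_register}, \cref{lem:copy_edges_are_sound} and \cref{cor:run_does_not_overwrite}, so the remaining work is just the routine case split above. Note that the rules deliberately do not certify the last ingredient of \cref{prop:splitting_violating_run}, the ordering $\wevent\mkrelof{\porel\cup\rfrel}\wevent'$; that is checked only later, when the source write $\wevent$ into the current register is met and rule $\bluecircled{7}$ flags $\failconf$.
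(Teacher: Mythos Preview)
Your proposal is correct and follows essentially the same route as the paper: split on which of rules $\bluecircled{4}$ and $\bluecircled{5}$ fires, and in each case exhibit the intervening write $\wevent'$ on $\xvar$ with a $\tclosureof{\mkrelof{\porel\cup\rfrel}}$-path to $\revent$ directly from the rule's premises together with the ghost-read/mismatched-variable guarantees and differentiation. Your treatment is in fact more careful than the paper's, which handles $\bluecircled{4}$ in one sentence and for $\bluecircled{5}$ only argues that the ghost-read pass guarantees \emph{some} corresponding write will appear earlier in the run; you make the thread-register versus register-register sub-cases of $\bluecircled{4}$ explicit and spell out why $\wevent'\neq\wevent$. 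One tiny imprecision: in the $\bluecircled{5}$ case you say ``$\wevent'$ writes to $\breg$'', but the value currently in $\breg$ may have arrived there via copies, so $\wevent'$ need not target $\breg$ itself---what matters (and what you actually use) is that its value sits in $\breg\neq\areg$ at $\iqstate0$, hence $\wevent'\neq\wevent$ by differentiation. Your closing remark that the ordering $\wevent\mkrelof{\porel\cup\rfrel}\wevent'$ is deferred to rule $\bluecircled{7}$ is a useful clarification the paper leaves implicit.
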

\begin{proof}

From the rules of Figure \ref{fig:rules_wra}, we see to initiate the $\hidden$ tuples we must be using rules $\bluecircled{4}$, or $\bluecircled{5}$
%, $\bluecircled{10}$, $\bluecircled{11}$, or $\bluecircled{12}$.
.
We check what such rules mean. In the case of $\bluecircled{4}$ it is very clear that an event $\wevent'$ has occurred, and moreover, any event taking place on the same thread as $\wevent'$ will have a $\mkrelof{\porel \cup \rfrel}^*$ path to it.

Regarding rule $\bluecircled{5}$ we do not yet have the existence of such an event. However, since the rule is triggered by having output on $\xvar$ on a marked thread, we are guaranteed by the correctness of the algorithm of  Figure \ref{fig:ghost_reads} that there will exist a corresponding input for any possible run of $\rmachine$ leading to this output.
Thus, when such an event occurs, no matter in what thread it takes place, in $\run$, it will be a $\wevent'$ that has a $\mkrelof{\porel \cup \rfrel}^* \revent$ to $\revent$, while the event $\wevent$ guaranteed by Proposition \ref{prop:splitting_violating_run} will have a $\corel$ edge to it from the semantics of  Figure \ref{add:event:fig}.

%Finally we check if the creation a $\hidden$ tuple is due to the saturation rules of  Figure \ref{fig:ra_rules}.
%There we see that from the premises of the rules there must have been two output events in the run $\run$, each being at least $\visible$ with a $\mkrelof{\porel \cup \rfrel}^* \revent$ path from two different threads respectively (one of them being $\revent$).
%We see then again from the saturation rule of  Figure \ref{add:event:fig}, that the input event that triggers the saturation rules will be the target of $\corel$ edges that must exist earlier in $\run$ and inputting the value of the other output event in conditions of the rules.
 This means that if those write events (which have not occurred yet, but will while we traverse $\run$ backwards) took place in the thread marked by the rule, then Proposition \ref{prop:splitting_violating_run} would be satisfied.
 Therefore it is sound to mark this thread in $\hidden$, since all events that will occur in it have the required paths to $\wevent'$.
\end{proof}

We continue with describing the invariant of annotating $\beta$ tuples.

\begin{lemma}
Let $\run$ be the run of Proposition \ref{prop:splitting_violating_run}, and $\qstate\movesto{
%\areg\assigned\areg'
\opof\thread---}\pqstate$ be an input or output edge that is accessed during the $\run_{\beta}\cdot \wevent'$ part of $\run$.
% and assume that $\pqstate$ is followed by several events which belong in $\mkrelof{\porel \cup \rfrel}^* \revent$ paths through $\run$.
 Then the following hold:

 \begin{itemize}
 \item $\hiddenof{\qstate}\thread\creg\xvar{\pqstate}$ is added if $\thread$ occurs in some existing $ \mkrelof{\porel \cup \rfrel}^* \wevent'$  path or
 if $\op$ is an input on a new thread, but the value that is inputted has been later on read by an event in one of the threads in one $ \mkrelof{\porel \cup \rfrel }^* \wevent'$ path.
 \item $\hiddenof{\qstate}\breg\creg\xvar{\pqstate}$ is added if the set of threads that can reach the event $\wevent'$ though a \\
  $ \mkrelof{\porel \cup \rfrel}^* \wevent'$ path does not chance from $\pstate$ to $\qstate$, and the value currently in $\breg$ will be outputted later on in one of the threads that are in a  $ \mkrelof{\porel \cup \rfrel \cup}^* \wevent'$ path.
 \end{itemize}

\end{lemma}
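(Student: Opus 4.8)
The plan is to prove this as the \emph{completeness} statement for the $\hidden$ data structure, showing that whenever the stated path condition holds the corresponding $\hidden$ tuple really is produced. It is the exact analogue of \cref{lem:path_implies_tuple} (which established the same thing for $\visible$/$\status$ tuples along $\run_{\alpha}$ with respect to $\mkrelof{\porel\cup\rfrel}^*\revent$ paths), with $\wevent'$ now playing the role that $\revent$ played there and $\run_{\beta}\cdot\wevent'$ the role of $\run_{\alpha}$. First I would fix the decomposition $\run = \run_1\cdot\wevent\cdot\run_{\beta}\cdot\wevent'\cdot\run_{\alpha}\cdot\revent$ of \cref{prop:splitting_violating_run}, recall from \cref{lem:rules_store_current_register} that along this fragment some tuple always records the register $\creg$ currently holding the value destined for $\revent$, and then proceed by induction on the events of $\run_{\beta}\cdot\wevent'$ processed from right (at $\wevent'$) to left.

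For the base case I would take $\run_{\beta}=\emptyset$, so that the only edge of the fragment is the one synchronizing with $\wevent'$. Here \cref{prop:initialize_beta} guarantees that rule $\bluecircled4$ or $\bluecircled5$ fires precisely at $\wevent'$ and initializes the $\hidden$ data structure on the thread marked by that rule. Since $\wevent'$ trivially lies on a length-zero $\mkrelof{\porel\cup\rfrel}^*\wevent'$ path, the tuple $\hiddenof{\qstate_{\beta}}{\thread}{\creg}{\xvar}{\pstate_{\beta}}$ required by the first bullet is present, with $\creg$ the register holding the value read by $\revent$; this settles the base case.

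In the inductive step I would take one more edge $\tuple{\qstate_{\beta},\op,\pstate_{\beta}}$ backwards, assume the claim for $\pstate_{\beta}$, and case-analyze the first step of the path witnessed by the premise. If the synchronizing event $\event$ gains a $\mkrelof{\porel\cup\rfrel}^*\wevent'$ path, that path leaves $\event$ either along $\porel$ (to a later event of the same thread, already carrying a $\hidden$ tuple, which is preserved backwards by the transparency rules $\bluecircled8$/$\bluecircled{8'}$) or along $\rfrel$ (to a read of $\event$'s value, on a thread already on a path). In the latter $\rfrel$ case — exactly the second disjunct of the first bullet — the inductive hypothesis supplies a register–register $\hidden$ tuple $\hiddenof{\pstate_{\beta}}{\breg}{\creg}{\xvar}{\cdot}$, and I would check that the $\hidden$ instance of rule $\bluecircled3$ fires on the input edge and yields $\hiddenof{\qstate_{\beta}}{\thread}{\creg}{\xvar}{\pstate_{\beta}}$. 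The register–register bullet is dual: when the set of path-carrying threads is unchanged across the edge but the value in $\breg$ is earmarked for a later output on such a thread, the read rules $\bluecircled2$, $\bluecircled5$, $\bluecircled6$ fire to produce $\hiddenof{\qstate_{\beta}}{\breg}{\creg}{\xvar}{\pstate_{\beta}}$, using \cref{lem:marked_registers_will_be_read} to locate that future output. Copy edges between consecutive path events are handled by \cref{lem:copy_edges_are_sound}, which relabels the tracked register while preserving the tuple.

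The main obstacle — as in \cref{lem:path_implies_tuple} — is showing that a $\hidden$ tuple is never spuriously \emph{blocked} while being propagated across the transparent and copy events (rules $\bluecircled8$, $\bluecircled{8'}$, $\bluecircled9$) that separate two consecutive events of the path. Inspecting the rules, a tuple fails to propagate to the predecessor state only when there is an input on the very register $\creg$ that currently holds the value destined for $\revent$; but \cref{cor:run_does_not_overwrite} forbids exactly such an overwrite anywhere in $\run_{\beta}\cdot\wevent'\cdot\run_{\alpha}$, so every one-step extension of the path survives all the way back to $\qstate_{\beta}$. Combining the base case, the inductive step, and this no-blocking argument then yields that whenever the $\mkrelof{\porel\cup\rfrel}^*\wevent'$ path condition of either bullet holds, the corresponding $\hidden$ tuple is indeed added, completing the proof.
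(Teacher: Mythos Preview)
You have misread the direction of the implication. The lemma, as the paper uses it, is the \emph{soundness} direction for $\hidden$ tuples: whenever a $\hidden$ tuple is added, the stated path condition holds. This is why the paper's proof is a single sentence --- ``entirely identical to that of \cref{lem:tuples_imply_paths} (since we apply the same rules)'' --- and indeed the argument there is induction on the number of tuples added, checking for each rule that could have produced the new tuple that the path condition (now to $\wevent'$ rather than to $\revent$) is preserved. You explicitly call the statement ``the exact analogue of \cref{lem:path_implies_tuple}'', but the paper treats it as the analogue of \cref{lem:tuples_imply_paths}; those are opposite directions.

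What you have written is instead a proof of the \emph{completeness} direction --- path implies tuple --- which in the paper is the \emph{next} lemma, \cref{lem:beta_rel_associated_to_run}. Your argument (induction backwards along $\run_{\beta}\cdot\wevent'$, base case via \cref{prop:initialize_beta}, inductive step by case analysis on the first edge of the witnessing path, with the no-overwrite argument from \cref{cor:run_does_not_overwrite} to rule out blocking) is essentially the paper's own sketch for that later lemma. So your technical content is not wrong, but it is attached to the wrong statement. The confusion is understandable: the English ``is added if'' is genuinely ambiguous, and only the paper's pointer back to \cref{lem:tuples_imply_paths} together with the sentence ``the main thing to prove is the reverse'' immediately afterwards disambiguate it.
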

The proof of this is entirely identical to that of Lemma \ref{lem:tuples_imply_paths} (since we apply the same rules).
The main thing to prove is the reverse, i.e.
\begin{lemma}\label{lem:beta_rel_associated_to_run}
Let $\run$ be the run of Proposition \ref{prop:splitting_violating_run}, and $\qstate\movesto{
%\areg\assigned\areg'
\opof\thread---}\pqstate$ be an input or output edge that is accessed during the $\run_{\beta}\cdot \wevent'$ part of $\run$.
 Then for the event $\event$ synchronising with $\op$ in the relevant execution graph we have $\mkrelof{\invmkrelof\rfrel} \wevent \mkrelof{\porel \cup \rfrel }\wevent'$ implies $\hiddenof{\qstate}\thread\creg\xvar{\pqstate}$ is added.
 \end{lemma}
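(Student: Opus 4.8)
The plan is to establish this reverse implication by induction along the fragment $\run_{\beta}\cdot\wevent'$, traversed \emph{backwards} starting from $\wevent'$, in exact parallel with the proof of \cref{lem:path_implies_tuple} (which did the same for $\visible$ tuples against $\run_{\alpha}\cdot\revent$). The ``if'' direction of the paired statement — that a $\hidden$ tuple at a state implies the corresponding $\mkrelof{\porel\cup\rfrel}^{*}\wevent'$ path — is already recorded immediately above, and its argument is word-for-word that of \cref{lem:tuples_imply_paths}; so the work left is to show that \emph{every} event that genuinely lies on such a path to $\wevent'$ forces the rules of \cref{fig:rules_wra} to emit the claimed $\hidden$ tuple.

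For the base case, take $\run_{\beta}=\emptyset$, so $\qstate\movesto{\op}\pqstate$ is the edge performing $\wevent'$ itself, a write on $\xvar$ into some register $\breg\neq\creg$ (the inequality holds by \cref{prop:splitting_violating_run} and \cref{cor:run_does_not_overwrite}). When the backward pass over $\run_{\alpha}$ has just reached $\wevent'$, the state $\pqstate$ carries a $\visible$ tuple whose primary register is the current register $\creg$ of $\revent$'s value and whose secondary slot matches the ``$\thread$ or $\breg$'' pattern — this follows from $\wevent'\mkrelof{\porel\cup\rfrel}\revent$ (\cref{prop:splitting_violating_run}) together with \cref{lem:path_implies_tuple} and \cref{cor:tuples_equal_runs}. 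Rule $\bluecircled{4}$ then converts it into $\hiddenof{\qstate}{\thread}{\creg}{\xvar}{\pqstate}$ with $\thread$ the thread of $\wevent'$; and, in the variant where the producing write has not literally been seen yet but its existence is guaranteed by correctness of the ghost-read module, rule $\bluecircled{5}$ does the same — this is precisely the content of \cref{prop:initialize_beta}. This covers the ``$\thread$ occurs on an existing $\mkrelof{\porel\cup\rfrel}^{*}\wevent'$ path'' clause at the root of the path.

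For the inductive step, given the edge $\qstate\movesto{\op}\pqstate$ with synchronising event $\event$ on a $\mkrelof{\porel\cup\rfrel}^{*}$ path into $\wevent'$, I would walk one edge forward along that path to an event $\pevent$: either $\event\mkrelof{\porel}\pevent$ (same thread, later in program order) or $\event\mkrelof{\rfrel}\pevent$ ($\pevent$ consumes the value $\event$ produces or forwards). Then $\pevent$ has a strictly shorter path to $\wevent'$, so the induction hypothesis supplies a $\hidden$ tuple of the required shape at the state from which $\pevent$ is issued, and it remains to propagate it backwards across the (at most one input/output, plus arbitrarily many copy events and events handled by the transparency rules) edges lying between $\pevent$ and $\op$ down to $\qstate$. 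Copy edges are handled by rule $\bluecircled{9}$ exactly as in \cref{lem:copy_edges_are_sound}; transparency rules $\bluecircled{8}$ and $\bluecircled{8'}$ — which, since ``$\status$ means either $\hidden$ or $\visible$'', apply verbatim to $\hidden$ — carry the tuple over writes on unrelated registers and over reads; and when the path step is an $\rfrel$ edge, rules $\bluecircled{2}$ and $\bluecircled{3}$ toggle the tuple between its register-register shape (while we sit between an input of a value and its first consumption on a path-thread) and its thread-register shape (once the producing input is reached), which is where the ``input on a new thread whose value is later read on a path'' clause gets discharged.

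The main obstacle — and the single point where the argument can fail — is that one of the intervening events might \emph{overwrite} the register $\creg$ holding $\revent$'s value, which would stop the copy and transparency rules from forwarding the tuple. This is ruled out by \cref{cor:run_does_not_overwrite}: $\wevent$ is the rightmost input targeting $\creg$ in $\run$, and the whole fragment $\run_{\beta}\cdot\wevent'$ lies strictly to the right of $\wevent$, so no input along it retargets $\creg$. Hence propagation is never blocked, the $\hidden$ tuple reaches $\qstate$, and the induction closes. What then remains is purely a mechanical case split over the form of $\op$ and of the tuple at $\pqstate$, identical in structure to the case analyses already carried out for the $\visible$ tuples in \cref{lem:path_implies_tuple} and \cref{lem:tuples_imply_paths}, so I do not expect any new difficulty there.
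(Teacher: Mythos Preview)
Your proposal is correct and follows essentially the same approach as the paper: induction on the length of $\run_{\beta}$ traversed backwards, with the base case handled via \cref{prop:initialize_beta} (rules $\bluecircled{4}$ and $\bluecircled{5}$) and the inductive step mirroring \cref{lem:path_implies_tuple}, using \cref{cor:run_does_not_overwrite} to ensure propagation is never blocked. If anything, your write-up is more detailed than the paper's own proof sketch, which simply points back to these two ingredients and leaves the case analysis implicit.
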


\begin{proof}
This proof is also very similar to that of Lemma \ref{lem:path_implies_tuple}, but in this case we do have some modifications. We use induction of the length of $\run_{\beta}$. The base cases are slightly altered as we already discussed in Proposition \ref{prop:initialize_beta}
The inductive hypothesis and and remaining issues are based exactly on marking new threads that become aware of existing marked tuples.
\end{proof}

Now it remains to argue that upon encountering the $\wevent$ we\textbf{ have, and can detect}, a violation.
We have already proved that for the execution graph $\egraph$ corresponding to $\run$, we have that our $\hidden$ tuples keep track of threads that have a $\event  \mkrelof{\porel \cup \rfrel \cup \corel_x}^* \wevent'$ path, for the appropriate $\wevent$.
We need to see that we can now detect all events that would be $\wevent$, when those occur.

\begin{lemma}\label{lem:failures_are_correct}
$\failconf\oplus\qstate$ iff there exists run as stated in Proposition \ref{prop:splitting_violating_run}.
\end{lemma}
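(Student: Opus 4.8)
The plan is to reduce the claim to the single rule of \cref{fig:rules_wra} that can ever place a state into the $\failconf$ set, namely rule $\bluecircled{7}$, and then feed it the soundness and completeness facts already proved above for the $\visible$- and $\hidden$-tuples. Rule $\bluecircled{7}$ fires and produces $\failconf\oplus\iqstate0$ exactly when the fixpoint computation has generated a transition $\iqstate0\mkrelof{\wtype\land\thread\land\xvar\land\areg}\iqstate1$ together with a tuple $\hiddenof{\iqstate1}\areg{\thread\text{ or }\areg}\xvar{}$. So the lemma is equivalent to the statement that such a transition--tuple pair is generated by the algorithm for some $\iqstate0$ if and only if $\rmachine$ admits a run with the decomposition $\run=\run_1\cdot\wevent\cdot\run_{\beta}\cdot\wevent'\cdot\run_{\alpha}\cdot\revent$ of \cref{prop:splitting_violating_run}, which I would prove by two implications.

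For ``$\Leftarrow$'', take such a run $\run$ (by the earlier reductions we may assume it minimal and differentiated, so that \cref{cor:run_does_not_overwrite} applies) and trace the algorithm as it processes $\run$ backwards. On the suffix $\run_{\alpha}\cdot\revent$, \cref{lem:path_implies_tuple} and \cref{cor:tuples_equal_runs} guarantee that the states traversed carry $\visible$-tuples recording precisely the threads and registers lying on $\mkrelof{\porel\cup\rfrel}^*\revent$ paths. On reaching $\wevent'$, \cref{prop:initialize_beta} shows the computation legitimately switches to $\hidden$-tuples via rule $\bluecircled{4}$ or $\bluecircled{5}$, and then, processing $\run_{\beta}$, \cref{lem:beta_rel_associated_to_run} (with the $\hidden$-analogue of \cref{lem:tuples_imply_paths} stated after \cref{prop:initialize_beta}) produces $\hidden$-tuples recording the $\mkrelof{\porel\cup\rfrel}^*\wevent'$ paths. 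By the third item of \cref{prop:splitting_violating_run}, $\wevent\mkrelof{\porel\cup\rfrel}\wevent'$ and $\revent$ reads from $\wevent$; by \cref{cor:run_does_not_overwrite} the register holding $\revent$'s value in the configuration reached immediately after $\wevent$ is precisely the register $\areg$ written by $\wevent=\wopof{\thread_1}\xvar\areg$. Hence the tuple $\hiddenof{\iqstate1}\areg{\thread_1\text{ or }\areg}\xvar{}$ is present at the state $\iqstate1$ that $\wevent$ enters, so rule $\bluecircled{7}$ fires at its predecessor $\iqstate0$ and $\failconf\oplus\iqstate0$.

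For ``$\Rightarrow$'', run the same argument backwards. Suppose rule $\bluecircled{7}$ fired at $\iqstate0$ with transition $\iqstate0\mkrelof{\wtype\land\thread_1\land\xvar\land\areg}\iqstate1$ and tuple $\hiddenof{\iqstate1}\areg{\thread_1\text{ or }\areg}\xvar{}$. The soundness of the $\hidden$-machinery --- the $\hidden$-analogue of \cref{lem:tuples_imply_paths}, combined with \cref{lem:rules_store_current_register}, \cref{lem:copy_edges_are_sound}, \cref{cor:run_does_not_overwrite}, \cref{cor:register_tuple_not_dangerous_value} and \cref{lem:marked_registers_will_be_read} --- yields a run fragment from $\iqstate1$ containing a write $\wevent'$ on $\xvar$ and a later read $\revent$ on $\xvar$ whose value is the one stored in $\areg$ at $\iqstate1$, with $\wevent'$ on a $\mkrelof{\porel\cup\rfrel}^*\revent$ path and reachable by $\mkrelof{\porel\cup\rfrel}$ from the event that writes that value. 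Prepending any run $\run_1$ from $\initstate$ to $\iqstate0$ --- one exists since all states are kept reachable by the preprocessing described in \cref{rm:subsection} --- and firing $\wevent=\wopof{\thread_1}\xvar\areg$ with a value not used elsewhere --- legitimate because the register machine model is data-independent (\cref{lem:RA_to_diff_RA}) --- produces a differentiated run of $\rmachine$ with exactly the shape of \cref{prop:splitting_violating_run}, hence a violation of \wrasem, so the relation $\mkrelof{\wtype}\cdot\hbrel\cdot\wevent\cdot\hbrel\cdot\invmkrelof\rfrel$ is cyclic as required.

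I expect the realizability step of ``$\Rightarrow$'' to be the main obstacle: the tuples are abstract path summaries that discard all register contents except the single ``currently relevant'' register, so one must argue that the abstract witness can always be instantiated as a concrete differentiated run of $\rmachine$. This rests on (i) reachability of every state, ensured by preprocessing; (ii) data-independence, so that $\wevent$ may carry a fresh value; and (iii) \cref{lem:copy_edges_are_sound} together with \cref{cor:run_does_not_overwrite}, which guarantee the tracked register is never silently overwritten along the reconstructed fragment. A minor bookkeeping point is to match the two shapes of the tuple in rule $\bluecircled{7}$ --- the ``$\thread_1$'' case and the ``$\areg$'' case --- to whether it is $\revent$ itself, or a later read whose value chains forward to $\revent$, that forces $\wevent$ to be hidden; the $\hidden$-path characterization handles both uniformly.
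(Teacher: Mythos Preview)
Your overall strategy matches the paper's, but your $(\Leftarrow)$ direction has a genuine gap at ``Hence the tuple $\hiddenof{\iqstate1}\areg{\thread_1\text{ or }\areg}\xvar{}$ is present.'' The completeness lemma \cref{lem:beta_rel_associated_to_run} places a $\hidden$-tuple for $\threadattrof\event$ at the state preceding $\event$ only for events $\event$ lying \emph{inside} $\run_\beta\cdot\wevent'$; the write $\wevent$ is not in that segment, so the bare fact $\wevent\mkrelof{\porel\cup\rfrel}^+\wevent'$ does not by itself put any tuple at $\iqstate1$. Concretely, if the $\hbrel$-path from $\wevent$ to $\wevent'$ begins with an $\rfrel$ edge to a read $\revent_1$ on a thread $\pthread\neq\thread_1$, then no event of $\thread_1$ inside $\run_\beta$ need have a path to $\wevent'$, and your cited lemmas produce no $\thread_1$-tuple at all. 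The paper closes this by induction on the length of the $\wevent\to\wevent'$ path with a case split on its first edge: if $\porel$, the successor lies on $\thread_1$ inside $\run_\beta$, \cref{lem:beta_rel_associated_to_run} drops a $\thread_1$-tuple at its predecessor state, and transparency ($\bluecircled{8},\bluecircled{8'},\bluecircled{9}$) carries it back to $\iqstate1$; if $\rfrel$ to $\revent_1$, the inductive hypothesis marks $\pthread$, rule $\bluecircled{6}$ then fires on the $\revent_1$-edge to manufacture the $\hiddenof{\cdot}\areg\areg\xvar{}$ tuple, which propagates to $\iqstate1$ and triggers the ``$\areg$'' branch of $\bluecircled{7}$. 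This, incidentally, is the real dichotomy behind the two tuple shapes in $\bluecircled{7}$ --- not, as your closing paragraph has it, a distinction about which read forces $\wevent$ to be hidden.

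A smaller issue on $(\Rightarrow)$: the lemmas you cite are all phrased relative to an already-given violating run, so invoking them to \emph{construct} one is circular. What you actually want is the run-independent corollary stated just after \cref{lem:tuples_imply_paths}, which says every fixpoint tuple is witnessed by a concrete path in $\rmachine$; with that in hand your reconstruction goes through.
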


\begin{proof}
($\Rightarrow$)

We see from the rules of Figure \ref{fig:rules_wra} that in order to mark an $\failconf$ state we need a $\hidden$ tuple, together with some event a $\opof\thread\xvar\areg-$ on a marked thread, in order to trigger rule $\bluecircled{7}$.

We can easily then check the semantics of execution graphs and confirm that this $\wevent$ will create a cycle.
This can be easily verified from the semantics in Figure \ref{add:event:fig}.
Thus marking an $\failconf$ configuration corresponds to a violation of $\rasem$.
For the reverse, we need to argue we capture \textit{all} violations.
%Finally we need to investigate the $\delta$ failures.
%As the contents of $\delta$ tuples are updated with the same mechanism as $\alpha$ and $\beta$ ones, we are guaranteed that the same mechanism indeed marks the relevant paths in $\egraph$.
%In this case, we see that all three rules there describe a situation where there have been four distinct outputs on the same run $\run$, where they are divided in pairs, and each pair is on the same  $\mkrelof{\porel \cup \rfrel}^*$ path.
%Moreover, we see that the two distinct pairs are actually going to be reading from two distinct inputs on the same variable, and moreover the two paths perform these outputs in reverse order from eachother.
%We therefore get that when those two inputs occur that would input the relevant values there would be a $\corel_x$ edge between them in both directions.
%This can be easily verified from the semantics in  Figure \ref{fig:exec_graph_transitions}, and moreover this holds no matter on what thread these two inputs occur (marked by $\beta$ or not).

 ($\Leftarrow$)

Assuming there exists such a violating run we are guaranteed to at some point reach the $\wevent$.
If this $\wevent$ occurs on some marked thread then we know we will mark a violation, as it will trigger rule $\bluecircled{7}$.

We now check if it is possible for $\wevent$ to occur in an unmarked thread.
Of course, we are still guaranteed that  Proposition \ref{prop:splitting_violating_run} holds, and therefore there exists a path to $\wevent'$.
We continue here with induction on the length of the path between $\wevent$ and $\wevent'$.

\underline{Base case: Path length = 1}. In this case it must be the case where immediately a cycle is created by adding the $\corel$ edges dictated by the semantics. Since $\wevent$ must be on an unmarked thread it cannot be on the thread of $\wevent'$ and thus they are not connected with a $\porel$ edge.
Moreover, they are both write events, and thus they are not connected by a $\rfrel$ edge either.

Therefore, it must be the case that $\wevent \corel_x \wevent'$, which can only happen if \textit{on the same run $\run$)}, we got to output the value of $\wevent'$, \textit{event though $\wevent$ was also $\visible$}.

This brings us to the situation where either on the same $\mkrelof{\porel \cup \rfrel}^*$ path three output events altered between reading from $\wevent$ and $\wevent'$, or
on two distinct $\mkrelof{\porel \cup \rfrel}^*$ paths had been formed in $\run$, and on those the two events were outputted in reverse order. In this case one of the versions of rule $\bluecircled{6}$ would have been applied earlier on when such outputs occurred.

By the correctness of the propagations that we proved earlier in Lemma \ref{lem:path_implies_tuple}, we have that a tuple $\hiddenof{\qstate}\areg\areg\xvar{\pqstate}$ would be available by the time we reach the $\wevent$, and this would mark the $\failconf$.

\underline{Hypothesis: all $\mkrelof{\porel \cup \rfrel}^*$ paths up to $n$ length are marked.}

\underline{Step: We take one more step backwards.}

We check the $\mkrelof{\porel \cup \rfrel}^*$ path that must connect us to $\wevent'$.
Since we need to be in an unmarked thread, this path cannot start from a $\porel$ edge (due to hypothesis).
Similarly, if it starts from a $\rfrel$ edge we fall in a similar scenario as the event that the $\rfrel$ is targeted cannot be marked.

Therefore the path must start from a $\corel$ edge, in which case we perform an analysis similar to the base case we did and we are done.
\end{proof}
The above establishes the correctness of Algorithm \ref{algo:wra_pseudo}. %and thus proves Proposition \ref{prop:algo_RA_three_parts}.

\section{Upper and lower bounds for $\raconsprob$ and $\sraconsprob$}\label{sec:hardness}

As we discussed above we were not able to provide an equally
efficient algorithm for $\raconsprob$ and $\sraconsprob$.
In this section we focus on the complexity of these two problems, and show that they turn out to be \PSPACE-complete.
We first show that both $\raconsprob$ and $\sraconsprob$ are \PSPACE-hard.
 Then we provide a \PSPACE~upper bound.
\subsection{The \PSPACE~lower bound}
We reduce from the problem of emptiness of DFA intersection. This problem is known to be \PSPACE-hard \cite{DFA_intersect_PSPACE_kozen}. The problem is defined as follows:

\begin{definition}[DFA intersection non-emptiness]
Given $k \in \mathbb{N}$ and $k$ DFAs $D_1, D_2, \ldots, D_k$ of at most $n$ states each, over the alphabet $\Sigma$, \textit{DFA intersection non-emptiness} is satisfied iff there exists a word $w \in \Sigma^*$ such that $w \in L(D_1) \cap L(D_2) \cap \ldots \cap L(D_k)$.
\end{definition}

Note that here, $k$ is part of the input (otherwise the problem can be solved in polynomial time), so the number of DFAs varies between the different instances of the problem.

In our reduction, we take the standard definition of DFAs~\cite[Definition~1.5]{Sipser:1102988}: a DFA is a tuple $D = (Q, q_0, \Sigma, \delta, F)$, where $Q$ is the set of \emph{states}, $q_0 \in Q$ is the \emph{initial state}, $\Sigma$ is the \emph{input alphabet}, $\delta : Q \times \Sigma \rightarrow Q$ is the \emph{transition function} and $F \subseteq Q$ is the set of \emph{final states}. The notions of run and accepted language are defined as usual. For a DFA $D$, we denote $L(D)$ its accepted language.

\begin{theorem}
$\raconsprob$ and $\sraconsprob$ are \PSPACE-hard.
\end{theorem}

The proof of this theorem mimics the technique used in \cite{RA_verification_2025} for the \NP-hard and \coNP-hard results. Namely, given the input to the \textit{DFA intersection non-emptiness} problem, we construct a register machine $\rmachine$ which has a cycle which violates $\rasem$ iff there exists an appropriate word $w$ in the intersection of the automata.

\begin{proof}
First, up to adding a termination symbol $\# \notin \Sigma$, we assume that for each $i$, the input DFA $D_i$ has a unique final state, that we denote $q_{i,f}$. Indeed, $w \in L(D_1) \cap L(D_2) \cap \ldots \cap L(D_k)$ if and only if $w\# \in L(D_1)\# \cap L(D_2)\# \cap \ldots \cap L(D_k)\#$.

So, for each $i$, we let $D_i = (Q_i, q_{i,0}, \Sigma, \delta_i, \{q_{i, f}\})$ (the alphabet $\Sigma$ is common to all automata). We also number states in each $Q_i$: $Q_i = \{ q_{i,j} \mid 0 \leq j \leq {|Q_i|}\}$.

The reduction consists in constructing in polynomial time a register machine $\rmachine$ whose size is polynomial in the size of the input (the $k$ automata).
 %In the initialization phase we allow the register machine to create $k$-many threads which will have $2$ write events each happen to them.

Given the given the $k$ many DFAs, we parse them and create a register machine over:
\begin{enumerate}
\item The threads $\thread_{i}$ with one thread per automaton $D_i$
\item A single variable $\xvar$, and
\item Registers $r_{i,j}$ one for each state $q_{i,j}$ of each
 input automaton $D_i$.
 \textit{Intuitively, in the reduction that follows,
 a ``non-nil'' value in a register $r_{i,j}$ models
  that the automaton $D_i$, which we are simulating,
  ``is in state'' $q_{i,j}$}.
  We will be writing $\areg_{i}$ and $\breg_{i}$
  to denote the registers which correspond respectively to
   the initial and final states of each automaton.
    We make this distinction in notation because these are the
    only registers which will be involved in observable read and
    write events later on. The other ones will only be involved
     in copy events which are not shown in execution graphs.
     The register machine $\rmachine$ will also be handling an
     extra register called $r_{nul}$, which is independent from the input.
\end{enumerate}

The construction builds the register machine in three phases, the initialization phase, the ``shifting'' phase, and the output phase. They consist in the following:
\begin{itemize}
  \item \textbf{The initialization} of threads phase,
   where the register machine performs $2$ write events per
    automaton $D_{i}$ on the corresponing thread $\thread_i$, linked in a
     single path. Each thread $\thread_i$
      will first get a write event targeting
       the register $\breg_{i-1}$,
        followed by a write event targeting the register
         $\areg_{i}$, as shown below:
        \[
            q_{2i} \xrightarrow{\wopof{\thread_i}\xvar{\breg_{i-1}}} q_{2i+1} \xrightarrow{\wopof{\thread_i}\xvar{\areg_{i}}} q_{2i+2} ~.
        \]
        %$$ q_{2*i} \xrightarrow{\inopof{\theta_{i}}{\gamma_{i+1}}{\breg_{i+1}}{-}} q_{2i +1} \xrightarrow{\inopof{\theta_{c_1}}{x_{c_1}}{\areg_{c_1}}{-}} q_{2i +2}~$$

        This part of the machine starts in state $q_0$, and ends in state $q_{2n-1}$. When counting the previous thread's registers we always use operations $mod~k$.

    \begin{figure}
        \centering
            \begin{tikzpicture}
             \node (init) at (4,3) {$\ldots$};

             \node (w1) at (6,4) {${\wopof{\thread_i}\xvar{\breg_{i-1}}} $};
             \node (w2) at (6,2) {${\wopof{\thread_i}\xvar{\areg_{i}}}$};
             \node (w3) at (9,4) {${\wopof{\thread_{i+1}}\xvar{\breg_{i}}} $};
             \node (w4) at (9,2) {${\wopof{\thread_{i+1}}\xvar{\areg_{i+1}}}$};
             \node (space) at (10.8,3) {$\ldots$};

             \node (w5) at (12.5,4) {${\wopof{\thread_{k}}\xvar{\breg_{k-1}}} $};
             \node (w6) at (12.5,2) {${\wopof{\thread_{k}}\xvar{\areg_{k}}} $};
              \node (wnul) at (16,4) {${\wopof{\thread_{nul}}\xvar{r_{nul}}} $};

             \draw[-stealth, \pocolor] (w1) -- (w2) node[midway, fill=white] {$\porel$};
             \draw[-stealth, \pocolor] (w3) -- (w4) node[midway, fill=white] {$\porel$};
            \draw[-stealth, \pocolor] (w5) -- (w6) node[midway, fill=white] {$\porel$};

         \end{tikzpicture}
        \caption{The execution graph at the end of the initialization phase}
        \label{fig:conp_init}
\end{figure}

Before this phase is over we also construct an extra thread called $\thread_{nul}$ and add a single write event on it on the special register $r_{nul}$. This write event is the only one that will ever take place on $\thread_{nul}$.

 When this part of the register machine $\rmachine$ is executed it will always create the same execution graph as depicted in Figure \ref{fig:conp_init}. %We refer to the final state of this initialization chain as $q_{init}$.
 \item \textbf{The shifting phase,} in which we construct first the following \textbf{states}: \begin{itemize}
 \item A special \textbf{state} called $q_{main}$ which is immediately accessed after the initialization phase is over, with an $\epsilon$ transition.
 \item A set of \textbf{states} $Q_{\Sigma} = \{q_a \mid a \in \Sigma\}$, which are reached from $q_{main}$ with an $\epsilon$ transition. These will be used to non-deterministically select the next symbol of the word we are constructing in the search for a non-empty intersection.
 \item For each state of $Q_{\Sigma}$ a set of \textbf{states} $Q_{\Sigma,i} = \{ q_{a,i} \mid  0 \leq i \leq k\}$. These will be used to initiate the selection of a transition for the next automaton $D_i$, upon transition $a$.

 \item For each state of $Q_{\Sigma,i}$ we also create a set of \textbf{states} $Q_{\Sigma,i,j} = \{ q_{a,i,j} \mid 0 \leq j \leq |Q_i| \}$.

  These will be used to simulate a non-deterministically selected state which has an enabled transition with symbol $a$ by each automaton $D_i$. Since the automata are deterministic, there exists only one next state.
 \item finally, we create a set of states
  $Q_{aux} = \{q_{a,i,j, \ell} \mid a \in \Sigma,~ q_{i,j} \xrightarrow {a} q_{i,\ell}, \text{ and } 0\leq \ell \leq |Q_i| \}$,
  for each state $q_{i,j}$ which could take an $a$-transition.
   We create one such
  auxiliary state for each state of the automaton the
   state $q_{i,j}$ belongs in.
  This auxiliary set of states will be used later on to ensure
   that our ``simulation'' of a transition in an automaton does
    not create words the automaton would not actually produce.
 \end{itemize}

 Beyond the transitions we have already described, we also add the following \textbf{transitions:}
 \begin{itemize}
 \item for each $a \in \Sigma$, we add: $q_a \xrightarrow{\epsilon} q_{a,1}$.
 \item for each symbol $a \in \Sigma$, automaton $D_i$,
 and state $q_{i,j}$ in $D_i$ such that $\delta_i(q_{i,j},a) = q_{i,j'}$,
  we add the transitions:

 $$ q_{a,i} \xrightarrow{\ceventof{r_{i,j'}}{r_{i,j}}} q_{a,i,j}~ , \text{ and, }$$

we add transitions using the auxiliary states
 $q_{a,i,j,\ell}$ to copy the value of $r_{nul}$ into
 all registers $r_{i,\ell}$, with $\ell \neq j'$.
 From the last such state (called $q_{a,i,j,\ell_f}$
 for as much clarity as possible) we also add the transition:

$$ q_{a,i,j,\ell_f} \xrightarrow{\epsilon} q_{a,i+1}~ (mod ~k).$$

 \item For the states $q_{a,0}$, which are reached only when the last automaton has selected a transition to follow with the symbol $a$, we add transitions

 $$q_{a,0} \xrightarrow{\epsilon} q_{main}~.$$
\end{itemize}
 This completes the assignment phase. We add a transition $q_{main}
\xrightarrow{\epsilon} q_0^{out}$, which initiates the output phase.
  \item In the \textbf{output phase}, the register machine performs $n$ output transitions, one for each thread $\thread_i$

        \[
            q_{i}^{out} \xrightarrow{\outopof{\theta_{i}}{\xvar}{\areg_{i}}} q_{i+1}^{out}
        \]

%The three different phases of the machine are connected with $\epsilon$ transitions for convenience.
\end{itemize}
\begin{figure}[t]
  \begin{center}
    \begin{tikzpicture}
        \node (w1) at (6,4) {$\eventof\wtype{\theta_{i}}{b_{{i-1}}}{x}$};
        \node (w2) at (6,2) {$\eventof\wtype{\theta_{i}}{a_{i}}{x}$};
        \node (r)  at (6,0)  {$\eventof\rtype{\theta_{i}}{a_{i}}{x}$};
        \node (w3) at (11,4) {$\eventof\wtype{\theta_{{i+1}}}{b_{{i}}}{x}$};

        \draw[-stealth, \pocolor] (w1) -- (w2) node[midway, fill=white] {$\porel$};
        \draw[-stealth, \pocolor] (w2) -- (r) node[midway, fill=white] {$\porel$};
        \draw[-stealth]  (w3) -- (r) node[near start, fill=white] {$\rfrel$};
        \draw[-stealth, dashed, red]  (w2) -- (w3) node[near start, fill=white] {$\corel_x$};
        \draw[-stealth, dashed, red]  (w1) -- (w3) node[near start, fill=white] {$\corel_x$};

    \end{tikzpicture}
  \end{center}
\caption{for an assignment of variables that violates $c_i$ we get the above local execution graph.}
\label{fig:local_clause_violation}
\end{figure}

The claim is that if there exists a word $w \in \Sigma^*$, with $w \in L(D_1) \cap L(D_2) \cap \ldots \cap L(D_k)$ then there is an $\rasem$ violation of this register machine. The main idea here is that only if the values inputted at the $\breg_i$ registers are all shifted to be written on the $\areg_i$ registers at the time of reading, then each thread will be reading a value from the thread ``next'' to it, thus acquiring locally for each thread the execution graph described in Fig. \ref{fig:local_clause_violation}.
%
%Now we argue about the correctness.
\end{proof}

\subsection{Membership in \PSPACE}\label{sec:pspace}

The key idea of our \PSPACE~algorithm is that if the register machine
$\rmachine$ can produce an arbitrarily long violating run $\run$,
with arbitrarily many events in the corresponding execution graph
$\mkegraphof{\run}$, then it must also be able to produce a shorter
run $\run_s$, which has at most length $2\times |\threadset|^{2}$, and
whose egraph contains the same cycle.
This is because if a cycle ``enters'' the same thread
multiple times we are able to short-circuit the egraph cycle
by jumping directly from the first entry event (say $\event_1$)
point to the event where the cycle exits that thread for the final time (say $\event_2$)
without performing the intermediate steps and instead following $\porel$
edges, as seen in Fig. \ref{fig:short_circuit_pspace}.

\begin{figure}
  \centering
  \begin{tikzpicture}
      \node[fill=lightgray] (thread) at (2,5) {$\thread$};
      \node (inv1) at (4,4.5) {};
      \node (e1) at (2,4) {$\event_1$};
      \node (inv1') at (0,3) {};
      \node (inv3) at (2,2.25) {\rotatebox{90}{$\cdots$}};
      \node (inv2') at (0,1.5) {};
      \node (e2) at (2,0.5) {$\event_2$};
      \node (inv2) at (4,0) {};

      %\node (w2) at (11,2) {$\eventof\rtype{\theta_2}1{x}$};
      %\node (r1) at (11,0) {$\eventof\wtype{\theta_2}2{x}$};

      \draw[-stealth, \pocolor] (e1) -- (inv3) node[midway, fill=white] {$\porel$};
      \draw[-stealth, \pocolor] (inv3) -- (e2) node[midway, fill=white] {$\porel$};
      \draw[-stealth,thick, dashed,red] (inv1) -- (e1);
      \draw[-stealth,thick, dashed,red] (e1) -- (inv1');
      \draw[-stealth,thick, dashed,red] (inv2') -- (e2);
      \draw[-stealth, thick,dashed,red] (e2) -- (inv2);
      %\draw[-stealth, \pocolor] (w2) -- (r1) node[midway, fill=white] {$\porel$};
      %\draw[-stealth, ]  (r1) -- (w1) node[near start, fill=white] {$\rfrel$};
      %\draw[-stealth, dashed]  (r2) -- (w2)  node[near start, fill=white] {$\rfrel$};
  \end{tikzpicture}
\caption{Short circuiting a large cycle (red, dashed) in the execution graph of a
run, by following the $\porel$ edges of $\thread$.}
\label{fig:short_circuit_pspace}
\end{figure}
Thus we know that a minimal cycle for each thread enters each thread at most once and exits each thread at most once.
We therefore know that if a cycle can be created in some egraph, the relevant events in such cycle are linear to the size of the register machine.
We have already shown that if the register machine
$\rmachine$ can produce an arbitrarily long violating run $\run$,
with arbitrarily many events in the corresponding execution graph $\mkegraphof{\run}$,
then it must also be able to produce a shorter
run $\run_s$, which has at most length $2\times |\threadset|^{2}$, and
whose egraph contains the same cycle.
%]

The \PSPACE-algorithm thus works as follows:
\begin{itemize}
    \item We guess which threads will be involved in a cycle, and we guess which transitions in the register machine correspond to entering and leaving a thread.
    \item Some of the edges in the egraph directly connect these events either though $\porel$ or $\rfrel$. For the events that are connected via some $\corel$ we guess linearly many events for each such edge, so we can guarantee the register machine would produce this dependency.
    \item At this point we have $n^2$ many events relevant to the cycle.
    \item We guess an order among these events, which we claim is the order in which they will occur in the register machine.
    \item The remainder work is to determine whether the register machine can produce a run that subsumes these events in the right order, so that the correct $\rfrel$ is created.
    \item Whether the events are reachable from eachother is a check we can easily perform in polynomial time. What is not trivial is to track the locations on which the relevant values for the guessed events are stored. Indeed we can track these location in an array of at worst $n$ size, where each index can take as a value a different register of the register machine.
    \item We therefore start running the register machine non deterministically, and at each step we guess whether the next step is going to be one of the reserved events for the cycle, or just an event that will change the configuration of the register machine. (as per the rules of Fig. \ref{fig:semantics_exec_graphs}.
    \item Since we only have $n^2$ reserved events we know that these non-deterministic checks are polynomially many. However, the register machine can change exponentially many configurations (pairs of state and register contents). This means that we might possibly have to guess exponentially many times that the machine is changing configuration.
  \end{itemize}

 Throughout all this we are storing only the current register machine configuration as well as the the intersection of the execution graph that has been produced by the run with the initial cycle we have guessed as the possible violation.
   Thus the total space needed remains polynomial, but the execution time is exponential.

%Details of the correcness of this algorithm are given in Appendix \ref{app:pspace}.

\subsection{Interpreting the hardness result and extension to PSI and SC}

Our reduction above from the problem of $k$-DFA intersection non-emptiness
to the problem of $\raconsprob$ showcases that any algorithm able to
input a register machine and decide whether all runs of such a machine
respect the $\rasem$ semantics, must require at least polynomial space.

Therefore, consistency check for
\textit{any memory model stronger than $\rasem$} over register
machines is also \PSPACE-hard.

This imediately implies the \PSPACE hardness of $\sraconsprob$, which
is also in \PSPACE{} as shown in Section \ref{sec:pspace}.
However there are other declarative memory models where the definition is
strictly stronger than $\rasem$.
We highlight for example the
 declarative definition of PSI (\cite[Definition 4]{RaadLV18_PSI}, \cite{SovranPAL11_PSI}),
  which requires the acyclicity of
$\corel\cdot\tclosureof{\mkrelof{\porel\cup\rfrel\cup\partcorel}}\cdot \invmkrelof\rfrel$, i.e. only slightly
more restrictive from the definition
of $\srasem$ (Definition \ref{def:egraph_satisfies_memory_model}).

Finaly, the memory model of SC is also defined declaratively
  (as an acyclicity condition of
   $\tclosureof{\mkrelof{\porel\cup\rfrel\cup\partcorel\cup\frrel}}$~\cite{DBLP:journals/toplas/AlglaveMT14}), and is also stricter than $\rasem$,
   which means its consistency checking over register machines inherits our
   lower bound.
   This improves the previously known lower bound, which was a trivial
   \NP-hard lower bound, inherited
   from the relevant testing problem \cite{DBLP:journals/siamcomp/GibbonsK97}.
The respective upper bounds for either PSI or CS are not currently known,
and in the case of SC it would not be suprising if the
problem is undecidable, as it is in the case of data-dependent model \cite{DBLP:journals/iandc/AlurMP00}.
We leave these two questions as future endavours, as we believe our algorithms
will be needing significant improvements to be able to handle such cases.

%%% Local Variables:
%%% mode: LaTeX
%%% TeX-master: "SI_TCS_main_RA"
%%% End:

%
%\section{Modeling principles and experiments}\label{sec:experiments}
%\input{experiments}

\section{Conclusion and Future Work}\label{sec:conclusions}
In this paper, we have taken the first steps towards a framework for  verification under the RA semantics and its variants.
To that end, we prove a polynomial space upper bound (and polynomial time in the case of $\wraconsprob$) when the implementation is described in the classical register machine model.

Our first future endeavor is to improve and generalise our prototype tool implementation
 \cite{codelink} of our algorithms. % of Section \ref{sec:algorithms}
  The tool suite can also be used to showcase how to model different popular
  distributed systems paradigms such as vector clocks, buffers, broadcast comands etc,
  to establish the generality of a register machine as a model.
We also plan to
%build on our algorithms to achieve more efficient verification algorithms by
leverage
%ing
abstraction and stateless model-checking techniques to achieve more efficient verification.
%A concrete directions we aim to explore is to omit some attributes from our tuples.
%The most obvious candidate for this is the state $\pqstate$, from, say, a tuple $\visibleof{\qstate}\areg\thread\xvar{\pqstate}$.
%This state is only needing to combine different paths, which could be avoided by creating specially marked tuples for such cases.
%Thus, The hope would be to manage to encode these rules in a type of $\paired$ tuple, as in Figure \ref{fig:sra_rules}.

%
%We will also aim to produce a lower bound for this problem to guarantee the optimality of our implementation and mechanize the proofs' correctness in a proof assistant.
%The problem is at least as hard as its testing variant, which means we can use results in that area to infer lower bounds.

%
Aside from the above, we plan to consider more expressive modeling languages that make our work applicable to a wider class of protocols at the cache, compiler, and application levels.
One such extension would be to consider data-dependent register machines, which could model for example \texttt{compare-and-swap} events.
This suggests that one would have to study a theory of equality/inequality to create an augmented register machine space and characterize all the possible executions.
 In this case, we would expect a (much) higher complexity, whether we model this data dependency as register automata~\cite{DBLP:journals/tcs/KaminskiF94} or nominal automata~\cite{DBLP:journals/corr/BojanczykKL14}, or in the framework of well-structured systems \cite{DBLP:conf/lics/AbdullaCJT96,DBLP:journals/tcs/FinkelS01}.
Another exciting extension to our register machine formalism would be to allow for transitions encoding more complex actions, such as  \texttt{broadcast}, \texttt{rendezvous}, and \texttt{fences}.
We already know that these operations can be encoded with a series of transitions in the simple model, but having them explicitly as part of the syntax would create more succinct models and potentially speed up the verification.
Finally, another extension to the register machines would be in the direction of \textit{parameterized verification}. We would be particularly interested in enhancing the model with the ability to handle arbitrarily many threads interacting with the memory without having to hard-code them as part of the register machine.
%description.
%

%
Another direction of future work is to study the decidability
 and complexity of consistency checking for other memory models, such as $\psisem$, $\scsem$,
 and TSO.
A fundamental characteristic
 of our approach is using the declarative definition style. As discussed already
 $\scsem$ and $\psisem$ inherent our lower bounds
 due to their relation to the $\raconsprob$.
 However their upper bounds are not yet known.
 In the case of $\psisem$ it might be possible to extend our algorithmic approach
 without significant overhead.
 SC however, even though it is also defined declaratively
  (as an acyclicity condition of
   $\tclosureof{\mkrelof{\porel\cup\rfrel\cup\partcorel\cup\frrel}}$),
    is much harder to capture with our existing rules,
     as it uses a whole new type of relation,
     namely the $\mathsf{from-read}$ ($\frrel$ \cite{DBLP:journals/toplas/AlglaveMT14}) relation.
%c

%%% Local Variables:
%%% mode: LaTeX
%%% TeX-master: "SI_TCS_main_RA"
%%% End:

\bibliographystyle{plain}

\bibliography{references}

\end{document}